\documentclass[onecolumn]{IEEEtran}
\usepackage[pdftex]{graphicx}
\usepackage{algorithmic}
\usepackage{amsmath}
\usepackage{amssymb}
\usepackage{amsthm}
\usepackage{multirow}
\usepackage{color}
\usepackage{longtable}
\usepackage{array}
\usepackage{url}
\usepackage[ruled,linesnumbered]{algorithm2e}
\usepackage{enumerate}
\usepackage{float}
\usepackage{mathtools}
\usepackage{bbm}
\usepackage{hyperref}
\usepackage{cite}
\usepackage{comment}
\setcounter{MaxMatrixCols}{20}
\newtheorem{theorem}{Theorem}[section]
\newtheorem{definition}[theorem]{Definition}
\newtheorem{lemma}[theorem]{Lemma}
\newtheorem{example}[theorem]{Example}

\newtheorem{corollary}[theorem]{Corollary}
\newtheorem{remark}[theorem]{Remark}
\newtheorem{observation}[theorem]{Observation}
\newtheorem{problem}[theorem]{Problem}
\newtheorem{claim}[theorem]{Claim}

\begin{document}
\title{Sequence Reconstruction under Channels with Multiple Bursts of Insertions or Deletions}

\author{Zhaojun Lan, Yubo~Sun, Wenjun~Yu, and Gennian~Ge%
\thanks{The research of G. Ge was supported by the National Key Research and Development Program of China under Grant 2020YFA0712100, the National Natural Science Foundation of China under Grant 12231014, and Beijing Scholars Program.}
\thanks{Z. Lan ({\tt 2200501014@cnu.edu.cn}), Y. Sun ({\tt 2200502135@cnu.edu.cn}), and G. Ge ({\tt gnge@zju.edu.cn}) are with the School of Mathematical Sciences, Capital Normal University, Beijing 100048, China.}
\thanks{W. Yu ({\tt wenjun@post.bgu.ac.il}) is with the School
   of Electrical and Computer Engineering, Ben-Gurion University of the Negev, Beer Sheva 8410501, Israel.}
}

\date{}\maketitle

\begin{abstract}
The sequence reconstruction problem involves a model where a sequence is transmitted over several identical channels. This model investigates the minimum number of channels required for the unique reconstruction of the transmitted sequence. 
Levenshtein established that this number exceeds the maximum size of the intersection between the error balls of any two distinct transmitted sequences by one.
In this paper, we consider channels subject to multiple bursts of insertions and multiple bursts of deletions, respectively, where each burst has an exact length of value $b$.
We provide a complete solution for the insertion case while partially addressing the deletion case.
Specifically, our key findings are as follows:
\begin{itemize}
  \item \textbf{Insertion Case:} We investigate $b$-burst-insertion balls of radius $t$ centered at $q$-ary sequences of length $n$. 
  Firstly, we demonstrate that the size of the error ball is independent of the chosen center and is given by:
  \begin{align*}
    I_{q,b}(n,t)= q^{t(b-1)}\sum_{i=0}^t\binom{n+t}{i}(q-1)^i.
  \end{align*}
  Next, we calculate the maximum intersection size between two distinct error balls, determining it to be: \[
  N_{q,b}^+(n,t)= q^{t(b-1)}\sum_{i=0}^{t-1}\binom{n+t}{i}(q-1)^i[1-(-1)^{t-i}].
  \]
  Furthermore, we show that the intersection size between error balls centered at two sequences that differ only at the first position can reach this maximum value. Finally, we propose a reconstruction algorithm with a linear runtime complexity of $O_{q,t,b}(nN)$ which processes $N\geq N_{q,b}^{+}(n,t)+1$ distinct output sequences from the channel to recover the correct transmitted sequence.

  \smallskip
  
  \item \textbf{Deletion Case:} We examine $b$-burst-deletion balls of radius $t$ centered at $q$-ary sequences of length $n$. Unlike burst-insertion ball, the size of a $b$-burst-deletion ball depends on the chosen center.
      To this end, we first assess which error ball has a larger size and demonstrate that the error ball centered at the $b$-cyclic sequence $0^b \circ 1^b \circ \cdots \circ (q-1)^b \circ 0^b\cdots$ has the largest size. This size is expressed as:
  \[D_{q,b}(n,t)= \sum_{i=0}^t\binom{n-bt}{i}D_{q-1,1}(t,t-i),
  \]
  where $D_{1,1}(t,t-i)=1$ for $i\in [0,t]$.
  We then investigate the maximum intersection size between two distinct error balls, specifically addressing the case where $q=2$, $b\geq 2$, and $n\geq b(t+1)-1$.
  We derive this value as: 
  \[N_{2,b}^{-}(n,t)= D_{2,b}(n,t)- D_{2,b}(n-b,t)+ D_{2,b}(n-3b,t-2).
  \]
  Moreover, we show that the intersection size between error balls of $0^{b} \circ 1^b \circ 0^b \circ 1^b\cdots $ and $0^{b-1} \circ 1 \circ 1^b \circ 0^b \circ 1^b\cdots $ equals $N_{2,b}^{-}(n,t)$.
  Finally, we propose a reconstruction algorithm with a linear runtime complexity of $O_{t,b}(nN)$ that processes $N\geq N_{2,b}^{-}(n,t)+1$, distinct output sequences from the channel to return the correct transmitted sequence.
\end{itemize}
\end{abstract}

\begin{IEEEkeywords}
Sequence reconstruction, bursts of insertions, bursts of deletions
\end{IEEEkeywords}

\section{Introduction}\label{Introduction}
The sequence reconstruction problem corresponds to a model where a sequence from a specific codebook is transmitted over several identical noisy channels, with each channel producing a distinct output. The outputs received by the decoder are then used to reconstruct the transmitted sequence. The primary challenge within this framework is to determine the minimum number of channels required to ensure the unique reconstruction of the transmitted sequence.
This problem can be framed as identifying the size of the largest intersection among the output sets of the channels after transmitting two distinct sequences. 
The sequence reconstruction problem was first introduced by Levenshtein \cite{Levenshtein-01-JCTA-recons,Levenshtein-01-IT-recons} in 2001 and has gained renewed interest due to its applications in DNA-based data storage \cite{Church-12-science-DNA, Goldman-13-nature-DNA, Yazdi-15-TMBMC-DNA, Organick-18-nature-DNA} and racetrack memories \cite{Parkin-08-since-RM, Chee-18-IT-RM}.

When the codebook encompasses the entire space, specifically the set of all sequences of length $n$, 
Levenshtein's seminal work \cite{Levenshtein-01-JCTA-recons} addressed the sequence reconstruction problem concerning channels that introduce multiple insertions and deletions, respectively. 
Following this, Abu-Sini \textit{et al.} \cite{Sini-21-IT-reconstr} examined the problem with channels that introduce one insertion and one substitution, while Sun \textit{et al.} \cite{Sun-23-IT-BDR} focused on channels that introduce a burst of insertions and a burst of deletions, respectively.

In scenarios where the codebook is a subset of the full space, particularly for certain error-correcting codes, Sala \textit{et al.} \cite{Sala-17-IT-reconstr-ins} tackled the sequence reconstruction problem under channels that introduce multiple insertions, with the codebook being any $t$-insertion correcting code for $t \geq 1$.
Ye \textit{et al.} \cite{Ye-23-IT} also considered channels that introduce multiple insertions but focused on codebooks satisfying the property that the intersection size between two distinct single insertion balls is at most one.
Gabrys and Yaakobi \cite{Gabrys-18-IT-reconstr-del} addressed channels that introduce multiple deletions, using a single-deletion correcting code as the codebook. 
Additionally, Chrisnata \textit{et al.} \cite{Chrisnata-22-IT-reconstr} investigated channels that introduce two deletions, requiring codebooks to satisfy the property that the intersection size between any two distinct single deletion balls is at most one.
Recently, Pham \textit{et al.} \cite{Pham-25-JCTA-reconstr-del} generalized the work of Gabrys and Yaakobi by investigating channels that introduce multiple deletions, with the codebook being any $t$-deletion correcting code for $t\geq 1$, and provided a complete asymptotic solution.

In addition to the work related to sequence reconstruction problem under insertion  and deletion channels mentioned above, researchers have also considered channels that introduce other types of errors, such as substitutions \cite{Levenshtein-01-IT-recons}, limited magnitude errors \cite{Wei-22-IT-reconstr}, and tandem duplications \cite{Yehezkeally-20-IT-recons-tandem-dupli}, among others.
Furthermore, as a relaxation of the decoding requirement, the list decoding problem under the reconstruction model has also been explored in \cite{Yaakobi-18-IT-reconstr, Junnila-14-IT-reconstr, Junnila-16-IT-reconstr, Junnila-21-IT-reconstr, Junnila-24-IT-reconstr, Yehezkeally-20-IT-recons-tandem-dupli, Sini-24-IT-reconstr}.
Another variant is discussed in references \cite{Chee-18-IT-RM, Cai-22-IT-recon-edit, Sun-23-IT-reconstr, Sun-24-arXiv, Chrisnata-22-IT-reconstr, Ye-23-IT, Wu-24-DCC-reconstr, Zhang-24-ISIT-reconstr}, which investigate a dual problem of the sequence reconstruction framework. 
They fixed the number of channels and designed codebooks that enable the receiver to uniquely reconstruct the transmitted sequence.

Bursts of insertions and deletions are common classes of errors observed in various applications, such as DNA storage and racetrack memories. The design of codebooks capable of correcting bursts of insertions or deletions of length exactly $b$ is explored in the references \cite{Sima-20-ISIT, Cheng-14-ISIT-BD, Schoeny-17-IT-BD, Schoeny-17, Saeki-18-ISIT-BD, Nguyen-24-IT-BD, Sun-24-arXiv-BD, Sun-25-IT, Ye-24-arXiv-BD}.
In this paper, we investigate the sequence reconstruction problem under channels that introduce $t$ bursts of insertions and $t$ bursts of deletions, respectively, where each burst is of length exactly $b$, with the codebook being the set of all sequences of length $n$.
We note that our work generalizes the contributions of Levenshtein \cite{Levenshtein-01-JCTA-recons} (for the case where $b=1$) and Sun \textit{et al.} \cite{Sun-23-IT-BDR} (for the case where $t=1$).

\subsection{Bursts of Insertions}
Let $\Sigma_q^n$ denote the set of all sequences of length $n$ over $\Sigma_q$.
For a given sequence $\boldsymbol{x}=x_1\cdots x_n\in \Sigma_q^n$ and a positive integer $b\geq 1$, we say that $\boldsymbol{x}$ suffers \textbf{a burst of $b$ insertions} (or a \textbf{$b$-burst-insertion}) if exactly $b$ insertions have occurred consecutively from $\boldsymbol{x}$.
Specifically, a $b$-burst-insertion at the $i$-th position, $1\leq i \leq n+1$, will transform $x_1\cdots x_n$ into $x_1\cdots x_{i-1} \circ y_1\cdots y_b \circ x_{i}\cdots x_n \in \Sigma_q^{n+b}$, where $y_1\cdots y_b$ denotes the inserted segment of length $b$.
For any positive integer $t$, the \textbf{$b$-burst-insertion ball} of radius $t$ cantered at $\boldsymbol{x}$ is defined as $\mathcal{I}_{t,b}(\boldsymbol{x})$, representing the set of all sequences of length $n+tb$ over $\Sigma_q$ that can be obtained from $\boldsymbol{x}$ after $t$ bursts of insertions, each of length exactly $b$.
The maximum size of a $b$-burst-insertion ball of radius $t$ over all sequences in $\Sigma_q^n$ is denoted as 
\begin{align*}
  I_{q,b}(n,t) := \max\{|\mathcal{I}_{t,b}(\boldsymbol{x})|: \boldsymbol{x} \in \Sigma_q^n\}.
\end{align*}
The maximum intersection size between $b$-burst-insertion balls of radius $t$ over any two distinct sequences in $\Sigma_q^n$ is denoted as 
\begin{align*}
  N_{q,b}^+(n,t) := \max \{|\mathcal{I}_{t,b}(\boldsymbol{x}) \cap \mathcal{I}_{t,b}(\boldsymbol{y})|: \boldsymbol{x} \neq \boldsymbol{y} \in \Sigma_q^n\}.
\end{align*}

\subsection{Bursts of Deletions}
For a given sequence $\boldsymbol{x}=x_1\cdots x_n\in \Sigma_q^n$ and a positive integer $b$ with $n\geq b+1$, we say that $\boldsymbol{x}$ suffers \textbf{a burst of $b$ deletions} (or a \textbf{$b$-burst-deletion}) if exactly $b$ deletions have occurred consecutively from $\boldsymbol{x}$.
Specifically, a $b$-burst-deletion at the $i$-th position, $1\leq i \leq n-b+1$, will transform $x_1\cdots x_n$ into $x_1\cdots x_{i-1} \circ x_{i+b}\cdots x_n\in \Sigma_q^{n-b}$, where $x_i \cdots x_{i+b-1}$ denotes the deleted segment of length $b$.
When $n\geq bt+1$, where $t$ is a positive integer, the \textbf{$b$-burst-deletion ball} of radius $t$ of $\boldsymbol{x}$ is defined as $\mathcal{D}_{t,b}(\boldsymbol{x})$, representing the set of all sequences of length $n-bt$ over $\Sigma_q$ that can be obtained from $\boldsymbol{x}$ after $t$ bursts of deletions, each of length exactly $b$.
The maximum size of a $b$-burst-deletion ball of radius $t$ over all sequences in $\Sigma_q^n$ is denoted as 
\begin{align*}
  D_{q,b}(n,t) := \max\{|\mathcal{D}_{t,b}(\boldsymbol{x})|: {\boldsymbol{x} \in \Sigma_q^n} \},
\end{align*}
and the maximum intersection size between $b$-burst-deletion balls of radius $t$ over any two distinct sequences in $\Sigma_q^n$ is denoted as 
\begin{align*}
  N_{q,b}^-(n,t) := \max \{| \mathcal{D}_{t,b}(\boldsymbol{x}) \cap \mathcal{D}_{t,b}(\boldsymbol{y})|: \boldsymbol{x} \neq \boldsymbol{y} \in \Sigma_q^n\}.
\end{align*}

\subsection{Problem Descriptions}
As we explore the sequence reconstruction problem in the context of channels that introduce bursts of insertions or deletions, the objectives addressed in this paper include the following:

\begin{problem}[\textbf{Burst-Insertion Channels}]\label{pro:ins}
    Given integers $q\geq 2, b\geq 1, t\geq 1, n\geq 1$, determine:
    \begin{itemize}
        \item The minimum number of channels $N_{q,b}^+(n,t)+1$ required for unique reconstruction from $t$ burst-insertions of length $b$.
        \item A linear-time decoding algorithm  with a complexity of $O_{q,t,b}(nN)$, which, for any $N \geq N_{q,b}^+(n,t) + 1$ distinct outputs $\{\boldsymbol{y}_i\}_{i=1}^N\subseteq \mathcal{I}_{t,b}(\boldsymbol{x})$, recovers the transmitted sequence $\boldsymbol{x}$.
    \end{itemize}
\end{problem}

\begin{problem}[\textbf{Burst-Deletion Channels}]\label{pro:del}
    Given integers $q\geq 2, b\geq 1, t\geq 1, n\geq b(t+1)-1$, determine:
    \begin{itemize}
    \item The minimum number of channels $N_{q,b}^-(n,t)+1$ required for unique reconstruction from $t$ burst-deletions of length $b$.
        \item A linear-time decoding algorithm with a complexity of $O_{q,t,b}(nN)$, which, for any $N \geq N_{q,b}^-(n,t) + 1$ distinct outputs $\{\boldsymbol{y}_i\}_{i=1}^N\subseteq \mathcal{D}_{t,b}(\boldsymbol{x})$, recovers the transmitted sequence $\boldsymbol{x}$.
    \end{itemize}
\end{problem}

\subsection{Our Contributions}

Levenshtein's pioneering work \cite{Levenshtein-01-JCTA-recons} established a foundational framework for solving Problems~\ref{pro:ins} and~\ref{pro:del}, specifically under the constraint $ b = 1 $. In this paper, we extend this investigation to include the remaining cases for these problems.   

In Section \ref{sec:ins}, we offer a comprehensive solution for Problem \ref{pro:ins}. Initially, in Subsection \ref{subsec:ins_ball}, we demonstrate the cardinality identity $ |\mathcal{I}_{t,b}(\boldsymbol{x})| = I_{q,b}(n,t) $ for $\boldsymbol{x} \in \Sigma_q^n$ and derive a closed-form expression for $ I_{q,b}(n,t) $. Building upon this result, we then provide a closed-form expression for $ N_{q,b}^+(n,t) $ in Subsection \ref{subsec:ins_ball_int}. Finally, in Subsection \ref{subsec:ins_alg}, we present a reconstruction algorithm with a complexity of $ O_{q,t,b}(nN) $, which utilizes $ N \geq N_{q,b}^+(n,t) + 1 $ distinct output sequences from the channel to accurately recover the original transmitted sequence.  

In Section \ref{sec:del}, we propose a partial solution for Problem \ref{pro:del}, primarily focusing on the binary alphabet while addressing several aspects concerning non-binary alphabets. In Subsection \ref{subsec:del_ball}, we analyze the structure of burst-deletion balls and compute the quantity $ D_{q,b}(n,t) $. In Subsection \ref{subsec:del_ball_int}, we derive a closed-form expression for $ N_{2,b}^-(n, t) $ utilizing the quantity $ D_{2,b}(n,t) $, with particular emphasis on the binary case. Finally, in Subsection \ref{subsec:del_alg}, we propose a reconstruction algorithm with a complexity of $ O_{t,b}(nN) $, designed to process $ N \geq N_{2,b}^-(n,t) + 1 $ distinct output sequences from the channel to uniquely recover the original transmitted sequence, again focusing specifically on the binary case.

\section{Notations}
We define $\binom{n}{i}=0$ for $n<i$ and $\binom{n}{i}=1$ for $n=i\geq 0$.
For two integers $i$ and $j$, let $[i,j]= \{i,i+1,\ldots,j\}$ if $i \leq j$ and $[i,j]= \emptyset$ otherwise.
Assume $q\geq 2$ and $n\geq 0$, let $\Sigma_q= \{0,1,\ldots,q-1\}$ and $\Sigma_q^n$ denote the set of all sequences of length $n$ over $\Sigma_q$.
Set $\Sigma_q^0= \emptyset$.
We will write a sequence $\boldsymbol{x}\in \Sigma_q^n$ as $x_1 x_2 \cdots x_n$, where $x_i$ represents the $i$-th entry of $\boldsymbol{x}$ for $i \in [1,n]$.
For a set of indices $\mathcal{S}\subseteq [1,n]$, let $\boldsymbol{x}_{\mathcal{S}}$ be the projection of $\boldsymbol{x}$ onto $\mathcal{S}$ and we say that $\boldsymbol{x}_{\mathcal{S}}$ is a \textbf{substring} of $\boldsymbol{x}$ if $\mathcal{S}$ is an integer interval. 

For two sequences $\boldsymbol{x}$ and $\boldsymbol{y}$, their \textbf{concatenation} is denoted as $\boldsymbol{x} \circ \boldsymbol{y}$. 
Similarly, for two sets of sequences $\mathcal{S}$ and $\mathcal{T}$, their \textbf{concatenation} $\{\boldsymbol{u}\circ \boldsymbol{v}: \boldsymbol{u} \in \mathcal{S}, \boldsymbol{v} \in \mathcal{T} \}$ is denoted as $\mathcal{S} \circ \mathcal{T}$. 
In particular, when $\mathcal{S}$ only contains one sequence $\boldsymbol{u}$, we will write $\mathcal{S} \circ \mathcal{T}$ as $\boldsymbol{u} \circ \mathcal{T}$ for simplicity.
Moreover, let $\mathcal{S}^{\boldsymbol{u}}$ be the set of all sequences in $\mathcal{S}$ starting with $\boldsymbol{u}$ and $\mathcal{S}_{\boldsymbol{v}}$ be the set of all sequences in $\mathcal{S}$ ending with $\boldsymbol{v}$.
It is worth noting that for any symbol $\alpha$ and any positive integer $k$, we use the term $\alpha^k$ to denote the concatenation of $k$ copies of $\alpha$.

\section{Burst-Insertion Channels}\label{sec:ins}

In this section, we will address Problem \ref{pro:ins} for $b\geq 2$. We begin with the following easily verifiable statement regarding bursts of insertions.

\begin{claim}\label{cla:ins}
For $n, t\geq 1$, we have $\mathcal{I}_{t,b}(\boldsymbol{x})=\mathcal{I}_{t,b}(\boldsymbol{x})^{0}\sqcup \mathcal{I}_{t,b}(\boldsymbol{x})^{1} \sqcup \cdots \sqcup \mathcal{I}_{t,b}(\boldsymbol{x})^{q-1}$, where $\mathcal{I}_{t,b}(\boldsymbol{x})^{x_1}= x_1 \circ \mathcal{I}_{t,b}(\boldsymbol{x}_{[2,n]})$ and $\mathcal{I}_{t,b}(\boldsymbol{x})^{\alpha}= \alpha \circ \Sigma_q^{b-1} \circ \mathcal{I}_{t-1,b}(\boldsymbol{x})$ for $\alpha \in \Sigma_q \setminus \{x_1\}$.
\end{claim}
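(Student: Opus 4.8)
The decomposition $\mathcal{I}_{t,b}(\boldsymbol{x})=\mathcal{I}_{t,b}(\boldsymbol{x})^{0}\sqcup\cdots\sqcup\mathcal{I}_{t,b}(\boldsymbol{x})^{q-1}$ is immediate, since every output sequence begins with exactly one symbol of $\Sigma_q$; the content of the claim is the identification of each piece. The plan is to work with the equivalent backward description of the ball: $\boldsymbol{y}\in\mathcal{I}_{t,b}(\boldsymbol{x})$ if and only if there exist $t$ pairwise disjoint length-$b$ substrings of $\boldsymbol{y}$ whose removal yields $\boldsymbol{x}$ (this is the non-overlapping reformulation alluded to before the claim, and I would record it first). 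I call such a collection of $t$ intervals a \emph{valid deletion pattern}, and note the elementary fact that any burst meeting position $1$ must be the interval $[1,b]$, since a length-$b$ interval containing index $1$ starts at index $1$.

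For $\alpha\neq x_1$, fix $\boldsymbol{y}\in\mathcal{I}_{t,b}(\boldsymbol{x})$ with $y_1=\alpha$. In any valid deletion pattern the first surviving symbol equals $x_1\neq\alpha$, so position $1$ must be deleted; hence the first deleted burst is $[1,b]$, and removing it leaves a sequence in $\mathcal{I}_{t-1,b}(\boldsymbol{x})$ while the deleted block is $\alpha$ followed by an arbitrary word of $\Sigma_q^{b-1}$. This gives $\mathcal{I}_{t,b}(\boldsymbol{x})^{\alpha}\subseteq\alpha\circ\Sigma_q^{b-1}\circ\mathcal{I}_{t-1,b}(\boldsymbol{x})$. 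The reverse inclusion is clear: prepending any block $\alpha\circ\boldsymbol{w}$ with $\boldsymbol{w}\in\Sigma_q^{b-1}$ to any $\boldsymbol{z}\in\mathcal{I}_{t-1,b}(\boldsymbol{x})$ realizes a sequence obtained from $\boldsymbol{x}$ by one extra leading burst, and it begins with $\alpha$.

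For $\alpha=x_1$, the inclusion $x_1\circ\mathcal{I}_{t,b}(\boldsymbol{x}_{[2,n]})\subseteq\mathcal{I}_{t,b}(\boldsymbol{x})^{x_1}$ is again immediate, because applying to $\boldsymbol{x}=x_1\circ\boldsymbol{x}_{[2,n]}$ the same $t$ bursts (shifted by one) that produce a given element of $\mathcal{I}_{t,b}(\boldsymbol{x}_{[2,n]})$ keeps $x_1$ in front. The reverse inclusion is the only real obstacle, and I expect it to require a shifting argument. Given $\boldsymbol{y}$ with $y_1=x_1$ and a valid deletion pattern, if position $1$ already survives then all bursts lie in $[2,n+tb]$ and deleting $y_1$ lands the tail in $\mathcal{I}_{t,b}(\boldsymbol{x}_{[2,n]})$. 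Otherwise position $1$ is deleted, and since the first surviving symbol is $x_1$, the prefix before the first survivor $p$ (with $y_p=x_1$) is tiled by consecutive bursts $[1,b],[b+1,2b],\ldots,[p-b,p-1]$; I would replace these by the shifted bursts $[2,b+1],\ldots,[p-b+1,p]$, producing another valid deletion pattern that deletes $y_p$ instead of $y_1$, reduces $\boldsymbol{y}$ to $\boldsymbol{x}$ (using $y_1=x_1=y_p$), and keeps position $1$. This reduces to the previous case. The crux is thus verifying that this single rotation of the leading bursts preserves validity; once that is checked, combining the three cases yields the claimed decomposition.
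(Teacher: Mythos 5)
The paper offers no proof of this claim at all --- it is introduced as an ``easily verifiable statement'' and used as given --- so there is nothing to compare your argument against line by line; I can only judge it on its own terms, and it is correct. Your route is the natural one: pass to the equivalent description of $\mathcal{I}_{t,b}(\boldsymbol{x})$ via $t$ pairwise disjoint length-$b$ intervals of $\boldsymbol{y}$ whose removal recovers $\boldsymbol{x}$, observe that any burst meeting position $1$ is forced to be $[1,b]$, and in the case $y_1=x_1$ with position $1$ deleted, use the forced tiling $[1,b],[b+1,2b],\ldots,[p-b,p-1]$ of the prefix before the first survivor $p$ together with $y_p=x_1=y_1$ to rotate the pattern one step to the right. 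The rotation argument is sound: the shifted intervals $[2,b+1],\ldots,[p-b+1,p]$ tile $[2,p]$, remain disjoint from the remaining bursts (which lie beyond $p$), and produce the same reduced word because the two patterns differ only in which copy of $x_1$ survives. This is exactly the content that makes $\mathcal{I}_{t,b}(\boldsymbol{x})^{x_1}$ equal to $x_1\circ\mathcal{I}_{t,b}(\boldsymbol{x}_{[2,n]})$ rather than merely containing it. The one step you defer rather than prove is the non-overlapping reformulation itself, which the paper likewise asserts without proof (in a remark preceding the claim): successive insertions can split an earlier inserted block, so one must note that the splitting block together with the two fragments forms a contiguous stretch of length $2b$ that can be re-cut into two disjoint length-$b$ intervals with the same removal effect. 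If you add that one sentence, the proof is complete at a level of rigor exceeding the paper's.
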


\subsection{The Size of a Radius-$t$ $b$-Burst-Insertion Ball}\label{subsec:ins_ball}
When $b=1$ (the case of the insertion ball), it is well known that the value of $|\mathcal{I}_{t,1}(\boldsymbol{x})|$ is a constant that is independent of the choice of $\boldsymbol{x}$. 
This raises the natural question of whether $|\mathcal{I}_{t,b}(\boldsymbol{x})|$ remains a fixed value for arbitrary $b \geq 2$. 
The case where $t=1$ has been established by Sun \textit{et al.} \cite{Sun-24-arXiv-BD}.
In what follows, we will demonstrate the validity of this result for $t\geq 1$.

\begin{theorem}\label{thm:size_ins}
For $n,t\geq 0$, $q\geq2$, $b\geq 1$, and $\boldsymbol{x}\in\Sigma_q^n$, we have $|\mathcal{I}_{t,b}(\boldsymbol{x})|= \delta_{q,b}(n,t)$, where
\begin{equation}\label{eq:ins_ball}
    \delta_{q,b}(n,t)= q^{t(b-1)}\sum_{i=0}^{t}\binom{n+t}{i}(q-1)^i.
\end{equation} 
Then, we get $I_{q,b}(n,t)= \delta_{q,b}(n,t)$, where $I_{q,b}(n,t) := \max\{|\mathcal{I}_{t,b}(\boldsymbol{x})|: \boldsymbol{x} \in \Sigma_q^n\}$.
\end{theorem}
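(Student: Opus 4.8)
The plan is to turn the set-level decomposition of Claim~\ref{cla:ins} into a recurrence for the cardinalities and then check that the proposed closed form $\delta_{q,b}(n,t)$ obeys the same recurrence with the correct boundary values. Since the three pieces in Claim~\ref{cla:ins} are pairwise disjoint, prepending the fixed symbol $x_1$ is a bijection, and each $\alpha\circ\Sigma_q^{b-1}\circ\mathcal{I}_{t-1,b}(\boldsymbol{x})$ (with all its sequences of common length) has cardinality $q^{b-1}|\mathcal{I}_{t-1,b}(\boldsymbol{x})|$ with $q-1$ admissible choices of $\alpha\neq x_1$, I would first read off, for $n,t\ge 1$,
\[
|\mathcal{I}_{t,b}(\boldsymbol{x})| = |\mathcal{I}_{t,b}(\boldsymbol{x}_{[2,n]})| + (q-1)q^{b-1}\,|\mathcal{I}_{t-1,b}(\boldsymbol{x})|.
\]
The first term drops the length by one at fixed radius, and the second keeps the length while dropping the radius, which dictates a double induction: outer on $t$, inner on $n$.

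For the outer base case $t=0$ one has $\mathcal{I}_{0,b}(\boldsymbol{x})=\{\boldsymbol{x}\}$, so $|\mathcal{I}_{0,b}(\boldsymbol{x})|=1=\binom{n}{0}=\delta_{q,b}(n,0)$ for every $n$, independently of $\boldsymbol{x}$. For the inner base case $n=0$ (the empty center) I would argue directly that every $w\in\Sigma_q^{tb}$ is reachable: writing $w=w^{(1)}\cdots w^{(t)}$ with each block $w^{(j)}\in\Sigma_q^b$ and inserting $w^{(t)},w^{(t-1)},\dots,w^{(1)}$ successively at the front realizes $w$ through $t$ length-$b$ bursts, while trivially $\mathcal{I}_{t,b}(\text{empty})\subseteq\Sigma_q^{tb}$; hence $|\mathcal{I}_{t,b}(\text{empty})|=q^{tb}$. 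This matches $\delta_{q,b}(0,t)=q^{t(b-1)}\sum_{i=0}^{t}\binom{t}{i}(q-1)^i=q^{t(b-1)}q^{t}=q^{tb}$ by the binomial theorem.

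In the inductive step, assuming the formula at radius $t-1$ for all lengths and at radius $t$ for length $n-1$, the recurrence yields $|\mathcal{I}_{t,b}(\boldsymbol{x})|=\delta_{q,b}(n-1,t)+(q-1)q^{b-1}\delta_{q,b}(n,t-1)$, so it remains to verify the algebraic identity $\delta_{q,b}(n,t)=\delta_{q,b}(n-1,t)+(q-1)q^{b-1}\delta_{q,b}(n,t-1)$. Cancelling the common factor $q^{t(b-1)}$ (using $q^{b-1}q^{(t-1)(b-1)}=q^{t(b-1)}$) reduces this to
\[
\sum_{i=0}^{t}\binom{n+t}{i}(q-1)^i=\sum_{i=0}^{t}\binom{n+t-1}{i}(q-1)^i+(q-1)\sum_{i=0}^{t-1}\binom{n+t-1}{i}(q-1)^i,
\]
which follows by reindexing the last sum to $\sum_{j=1}^{t}\binom{n+t-1}{j-1}(q-1)^{j}$ and applying Pascal's rule $\binom{n+t-1}{i}+\binom{n+t-1}{i-1}=\binom{n+t}{i}$ termwise (the $i=0$ term being $1=\binom{n+t}{0}$). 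Since the common value is independent of $\boldsymbol{x}$, the concluding statement $I_{q,b}(n,t)=\max_{\boldsymbol{x}}|\mathcal{I}_{t,b}(\boldsymbol{x})|=\delta_{q,b}(n,t)$ is immediate.

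I expect the only genuine subtlety to be the $n=0$ base case, where Claim~\ref{cla:ins} does not apply and one must justify the covering $\mathcal{I}_{t,b}(\text{empty})=\Sigma_q^{tb}$ by hand; the remainder is a mechanical verification of a Pascal-type recurrence, so no serious obstacle is anticipated.
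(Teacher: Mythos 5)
Your proposal is correct and follows essentially the same route as the paper: both use the decomposition of Claim~\ref{cla:ins} to obtain the recurrence $|\mathcal{I}_{t,b}(\boldsymbol{x})| = |\mathcal{I}_{t,b}(\boldsymbol{x}_{[2,n]})| + (q-1)q^{b-1}|\mathcal{I}_{t-1,b}(\boldsymbol{x})|$, handle $n=0$ by observing $\mathcal{I}_{t,b}(\text{empty})=\Sigma_q^{bt}$, and close the induction by checking that $\delta_{q,b}$ satisfies the same recurrence. The only cosmetic difference is that you verify the Pascal-type identity for $\delta_{q,b}$ directly, whereas the paper imports it from Levenshtein's known recurrences for $\delta_{q,1}$ via the factor $q^{t(b-1)}$.
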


An important observation is that $\delta_{q,b}(n,t)= q^{t(b-1)} \delta_{q,1}(n,t)$.
Moreover, it is well known that $I_{q,1}(n,t)= \delta_{q,1}(n,t)$.
To simplify our calculation, we list the following known results concerning $\delta_{q,b}(n,1)$ from the literature.

\begin{lemma}\cite[Equations (23)-(26)]{Levenshtein-01-IT-recons}\label{lem:ins_rec}
    Assume $n,t\geq 0$ and $q\geq2$, the following holds:
    \begin{itemize}
        \item $\delta_{q,1}(n,t)= |\mathcal{I}_{t,1}(\boldsymbol{x})|$ for $\boldsymbol{x}\in\Sigma_q^n$;
        \item $\delta_{q,1}(n,t)= \delta_{q,1}(n-1,t) + (q-1)\delta_{q,1}(n,t-1)$ for $n,t\geq 1$;
        \item $\delta_{q,1}(n,t)= \sum_{i=0}^{t} (q-1)^i \delta_{q,1}(n-1,t-i)$ for $n,t\geq 1$.
    \end{itemize}
\end{lemma}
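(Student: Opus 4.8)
The plan is to treat the closed form $\delta_{q,1}(n,t)=\sum_{i=0}^t\binom{n+t}{i}(q-1)^i$ as a formula to be verified against the ball size and the two recurrences. I would prove the three items in the order (ii), (iii), (i), because the two recurrences are purely algebraic consequences of Pascal's rule, while the ball-size identity (i) carries the combinatorial content and is best deduced last.

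For item (ii), I would expand both right-hand terms directly from the closed form. Writing $\delta_{q,1}(n-1,t)=\sum_{i=0}^t\binom{n+t-1}{i}(q-1)^i$ and reindexing $(q-1)\delta_{q,1}(n,t-1)=\sum_{i=1}^t\binom{n+t-1}{i-1}(q-1)^i$, the sum of the two equals $\binom{n+t-1}{0}+\sum_{i=1}^t\bigl[\binom{n+t-1}{i}+\binom{n+t-1}{i-1}\bigr](q-1)^i$, and Pascal's rule collapses the bracket to $\binom{n+t}{i}$, recovering $\delta_{q,1}(n,t)$. This is essentially a one-line identity once the reindexing is done. For item (iii), I would iterate (ii): substituting (ii) into itself repeatedly peels off one factor of $(q-1)$ at a time, giving $\delta_{q,1}(n,t)=\sum_{i=0}^{t-1}(q-1)^i\delta_{q,1}(n-1,t-i)+(q-1)^t\delta_{q,1}(n,0)$. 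Since $\delta_{q,1}(n,0)=\binom{n}{0}=1=\delta_{q,1}(n-1,0)$, the final term merges into the sum to yield the stated identity; the only care needed is tracking the telescoping index and the $t=0$ boundary value.

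The substantive step is item (i), the center-independence and closed form of $|\mathcal{I}_{t,1}(\boldsymbol{x})|$. I would set $g(n,t):=|\mathcal{I}_{t,1}(\boldsymbol{x})|$ and argue by induction on $n+t$ that $g$ is independent of the center and equals $\delta_{q,1}(n,t)$. Specializing Claim~\ref{cla:ins} to $b=1$ (so the middle block $\Sigma_q^{b-1}$ disappears) gives the disjoint decomposition $\mathcal{I}_{t,1}(\boldsymbol{x})=x_1\circ\mathcal{I}_{t,1}(\boldsymbol{x}_{[2,n]})\sqcup\bigsqcup_{\alpha\neq x_1}\alpha\circ\mathcal{I}_{t-1,1}(\boldsymbol{x})$; taking cardinalities yields $|\mathcal{I}_{t,1}(\boldsymbol{x})|=|\mathcal{I}_{t,1}(\boldsymbol{x}_{[2,n]})|+(q-1)|\mathcal{I}_{t-1,1}(\boldsymbol{x})|$. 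In the base cases $t=0$ (the ball is $\{\boldsymbol{x}\}$, size $1$) and $n=0$ (the ball is all of $\Sigma_q^t$, size $q^t=\sum_{i=0}^t\binom{t}{i}(q-1)^i=\delta_{q,1}(0,t)$ by the binomial theorem) the value is manifestly center-free. For $n,t\geq1$ the inductive hypothesis makes both right-hand terms depend only on $(n-1,t)$ and $(n,t-1)$ respectively, so $g(n,t)$ is well defined, center-independent, and satisfies $g(n,t)=g(n-1,t)+(q-1)g(n,t-1)$ — exactly the recurrence established in (ii). Matching base values then forces $g=\delta_{q,1}$.

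The main obstacle is the bookkeeping in item (i): the recurrence alone does not establish center-independence, so I must carry the statement \emph{the ball size depends only on $(n,t)$} inside the induction hypothesis rather than inferring it afterward, and I must invoke the disjointness asserted in Claim~\ref{cla:ins} in order to be entitled to add the cardinalities. Once these two points are handled, everything else reduces to Pascal's rule and the binomial theorem.
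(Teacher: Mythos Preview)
Your proposal is correct. The paper does not actually prove this lemma: it is quoted verbatim from Levenshtein~\cite{Levenshtein-01-IT-recons} as a known result, so there is no in-paper argument to compare against. That said, your proof of item~(i) is exactly the $b=1$ specialization of the paper's own inductive proof of Theorem~\ref{thm:size_ins} (same decomposition via Claim~\ref{cla:ins}, same induction on $n+t$), with the difference that you first establish the recurrence~(ii) algebraically from the closed form and then feed it into the induction, whereas the paper takes the recurrence as given by the citation and only runs the induction for general~$b$. Your ordering (ii)$\to$(iii)$\to$(i) thus makes the argument self-contained.
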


By Lemma \ref{lem:ins_rec}, we can derive the following recursive formulas for $\delta_{q,b}(n,t)$ immediately.

\begin{corollary}
    For any $n\geq 1$ and $t\geq 1$, we have 
    \begin{equation}\label{eq:ins_rec}
    \begin{aligned}
        \delta_{q,b}(n,t)
        &= \delta_{q,b}(n-1,t) + (q-1) q^{b-1} \delta_{q,b}(n,t-1)\\
        &= \sum_{i=0}^{t} (q-1)^i q^{i(b-1)} \delta_{q,b}(n-1,t-i).
    \end{aligned}
    \end{equation}
\end{corollary}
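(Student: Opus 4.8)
The plan is to prove the corollary by directly substituting the two recursive formulas from Lemma~\ref{lem:ins_rec} into the relation $\delta_{q,b}(n,t)= q^{t(b-1)}\delta_{q,1}(n,t)$, which is the key observation recorded immediately after Theorem~\ref{thm:size_ins}. Since the corollary asserts recursions for $\delta_{q,b}$ and we already know $\delta_{q,b}$ is just $\delta_{q,1}$ scaled by the factor $q^{t(b-1)}$, everything should reduce to bookkeeping of these exponential factors. First I would establish the first displayed identity. Starting from the Levenshtein recursion $\delta_{q,1}(n,t)= \delta_{q,1}(n-1,t)+(q-1)\delta_{q,1}(n,t-1)$, I multiply both sides by $q^{t(b-1)}$. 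On the left this yields $\delta_{q,b}(n,t)$. On the right, the first term $q^{t(b-1)}\delta_{q,1}(n-1,t)$ is exactly $\delta_{q,b}(n-1,t)$, while the second term requires rewriting $q^{t(b-1)}=q^{(b-1)}\cdot q^{(t-1)(b-1)}$ so that $q^{t(b-1)}(q-1)\delta_{q,1}(n,t-1)=(q-1)q^{b-1}\cdot q^{(t-1)(b-1)}\delta_{q,1}(n,t-1)=(q-1)q^{b-1}\delta_{q,b}(n,t-1)$, as claimed.

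Next I would handle the second displayed identity in the same spirit. Multiplying the Levenshtein sum $\delta_{q,1}(n,t)=\sum_{i=0}^{t}(q-1)^i\delta_{q,1}(n-1,t-i)$ by $q^{t(b-1)}$ gives $\delta_{q,b}(n,t)$ on the left. For the generic summand I split the scaling factor as $q^{t(b-1)}=q^{i(b-1)}\cdot q^{(t-i)(b-1)}$, so that $q^{t(b-1)}(q-1)^i\delta_{q,1}(n-1,t-i)=(q-1)^i q^{i(b-1)}\cdot q^{(t-i)(b-1)}\delta_{q,1}(n-1,t-i)=(q-1)^i q^{i(b-1)}\delta_{q,b}(n-1,t-i)$. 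Summing over $i$ reproduces the stated formula.

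There is no genuine obstacle here: the result follows purely by algebraic manipulation of the factor $q^{t(b-1)}$, with the only subtlety being the correct splitting of that exponent across the two terms (respectively across each summand) so that each piece attaches to the appropriate $\delta_{q,1}$ to form a $\delta_{q,b}$. The one point to verify for completeness is that the scaling relation $\delta_{q,b}(n,t)=q^{t(b-1)}\delta_{q,1}(n,t)$ holds in the boundary and degenerate cases used implicitly (for instance when $t=0$, where the factor is $1$, and when $n-1$ or $t-i$ reach $0$), but these are immediate from the closed form in~\eqref{eq:ins_ball} together with the convention $\binom{n}{i}=0$ for $n<i$. Hence the corollary is essentially a one-line consequence of Lemma~\ref{lem:ins_rec}.
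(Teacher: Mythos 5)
Your proposal is correct and matches the paper's intended derivation: the paper states the corollary follows ``immediately'' from Lemma~\ref{lem:ins_rec} together with the observation $\delta_{q,b}(n,t)=q^{t(b-1)}\delta_{q,1}(n,t)$, which is exactly the scaling argument you carry out, with the exponent split $q^{t(b-1)}=q^{i(b-1)}q^{(t-i)(b-1)}$ handled correctly in each summand.
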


\begin{IEEEproof}[Proof of Theorem \ref{thm:size_ins}]
Observe that when $n=0$, $\boldsymbol{x}$ is an empty sequence and we have $\mathcal{I}_{t,b}(\boldsymbol{x})= \Sigma_q^{bt}$, it then can be easily checked that $|\mathcal{I}_{t,b}(\boldsymbol{x})|= \delta_{q,b}(n,t)= q^{bt}$.
Below we will focus on the scenario where $n\geq 1$ and prove the theorem by induction on $n+t$.

For the base case where $n+t=1$, we have $(n,t)=(1,0)$.
In this case, there is no error occurred.
As a result, the error ball $\mathcal{I}_{t,b}(\boldsymbol{x})$ only contains $\boldsymbol{x}$ itself, implying that $|\mathcal{I}_{t,b}(\boldsymbol{x})|=1$.
Moreover, we may compute $I_{q,b}(1,0) = 1$, thereby the conclusion is valid for the base case.

Now, assuming that the conclusion is valid for $n+t<m$ with $m\geq 2$, we examine the scenario where $n+t=m$.
If $t=0$, similar to the base case, it can be easily verified that the conclusion is valid.
If $t\geq 1$, by Claim \ref{cla:ins}, we can compute 
\begin{equation}\label{eq:ins_ball'}
    \begin{aligned}
        |\mathcal{I}_{t,b}(\boldsymbol{x})|
        &= |\mathcal{I}_{t,b}(\boldsymbol{x}_{[2,n]})|+ (q-1)q^{b-1} |\mathcal{I}_{t-1,b}(\boldsymbol{x})| \\
        &\stackrel{(\ast)}{=} \delta_{q,b}(n-1,t)+(q-1)q^{b-1}\delta_{q,b}(n,t-1) \\
        &\stackrel{(\star)}{=}\delta_{q,b}(n,t),
    \end{aligned}
\end{equation}
where $(\ast)$ follows by the induction hypothesis and $(\star)$ follows from Equation (\ref{eq:ins_rec}).
Consequently, the conclusion is also valid for $n+t=m$, thereby completing the proof.
\end{IEEEproof}

For integers $n\geq 1, q\geq 2, b\geq 1, t\geq 1$, a code $\mathcal{C}\subseteq \Sigma_q^n$ is referred to as a \textbf{$t$-$b$-burst-insertion correcting code} if for any two distinct sequences $\boldsymbol{x}, \boldsymbol{y} \in \mathcal{C}$, it holds that $\mathcal{I}_{t,b}(\boldsymbol{x})\cap \mathcal{I}_{t,b}(\boldsymbol{y})= \emptyset$.
As a byproduct of Theorem \ref{thm:size_ins}, we can derive the sphere-packing upper bound on the size of $t$-$b$-burst-insertion correcting codes.

\begin{corollary}\label{cor:spb}
  For integers $n\geq 1, q\geq 2, b\geq 1, t\geq 1$, let $\mathcal{C}\subseteq \Sigma_q^n$ be a $t$-$b$-burst-insertion correcting code, then 
  \[
    |\mathcal{C}|\leq \frac{q^{n+tb}}{I_{q,b}(n,t)}= \frac{q^{n+t}}{I_{q,1}(n,t)}.
  \]
\end{corollary}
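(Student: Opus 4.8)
The plan is to prove the sphere-packing bound in Corollary~\ref{cor:spb} as a direct counting consequence of Theorem~\ref{thm:size_ins}. First I would observe that the defining property of a $t$-$b$-burst-insertion correcting code $\mathcal{C}$ is that the balls $\{\mathcal{I}_{t,b}(\boldsymbol{x})\}_{\boldsymbol{x}\in\mathcal{C}}$ are pairwise disjoint. Since every such ball consists of sequences of length $n+tb$ over $\Sigma_q$, each ball is a subset of the ambient space $\Sigma_q^{n+tb}$, which has exactly $q^{n+tb}$ elements. Disjointness then lets me sum the individual ball sizes without double-counting.

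Concretely, I would write
\[
  \sum_{\boldsymbol{x}\in\mathcal{C}} |\mathcal{I}_{t,b}(\boldsymbol{x})|
  = \Bigl| \bigsqcup_{\boldsymbol{x}\in\mathcal{C}} \mathcal{I}_{t,b}(\boldsymbol{x}) \Bigr|
  \leq |\Sigma_q^{n+tb}| = q^{n+tb},
\]
where the first equality uses pairwise disjointness and the inequality uses containment in $\Sigma_q^{n+tb}$. The crucial simplification, and the reason this bound is clean, is that Theorem~\ref{thm:size_ins} tells us every ball has the \emph{same} cardinality $|\mathcal{I}_{t,b}(\boldsymbol{x})| = I_{q,b}(n,t)$, independent of the center $\boldsymbol{x}$. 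Thus the left-hand sum collapses to $|\mathcal{C}| \cdot I_{q,b}(n,t)$, giving $|\mathcal{C}| \cdot I_{q,b}(n,t) \leq q^{n+tb}$ and hence $|\mathcal{C}| \leq q^{n+tb}/I_{q,b}(n,t)$.

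For the final equality in the displayed bound, I would invoke the closed form from Theorem~\ref{thm:size_ins}, namely $I_{q,b}(n,t) = q^{t(b-1)}\,\delta_{q,1}(n,t) = q^{t(b-1)} I_{q,1}(n,t)$, together with the factorization $q^{n+tb} = q^{t(b-1)}\cdot q^{n+t}$. Dividing numerator and denominator by the common factor $q^{t(b-1)}$ yields $q^{n+tb}/I_{q,b}(n,t) = q^{n+t}/I_{q,1}(n,t)$, establishing the second equality.

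I do not expect any genuine obstacle here: the entire argument is the standard sphere-packing template, and all the nontrivial work has already been done in Theorem~\ref{thm:size_ins} (both the center-independence of the ball size and its explicit value). The only point requiring a moment's care is making explicit that disjointness turns the union into a disjoint union so that cardinalities add exactly; the constancy of the ball size is what turns the resulting sum into a clean product, and the algebraic cancellation of $q^{t(b-1)}$ is immediate.
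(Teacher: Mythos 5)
Your proposal is correct and follows essentially the same sphere-packing argument as the paper: disjointness of the balls, containment in $\Sigma_q^{n+tb}$, and the center-independence of the ball size from Theorem~\ref{thm:size_ins} to collapse the sum to $|\mathcal{C}|\cdot I_{q,b}(n,t)$. Your explicit cancellation of the factor $q^{t(b-1)}$ for the final equality is a detail the paper leaves implicit, but it is the intended justification.
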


\begin{IEEEproof}
    Observe that $\mathcal{I}_{t,b}(\boldsymbol{x})\subseteq \Sigma_q^{n+tb}$ for $\boldsymbol{x}\in \Sigma_q^n$, we have
    \[
        \left|\bigcup_{\boldsymbol{x} \in \mathcal{C}} \mathcal{I}_{t,b}(\boldsymbol{x})\right|\leq q^{n+tb}.
    \]
    Moreover, since $\mathcal{C}$ is a $t$-$b$-burst-insertion correcting code, we can compute
    \[
      \left|\bigcup_{\boldsymbol{x} \in \mathcal{C}} \mathcal{I}_{t,b}(\boldsymbol{x})\right|
      = \sum_{\boldsymbol{x} \in \mathcal{C}} |\mathcal{I}_{t,b}(\boldsymbol{x})|
      = |\mathcal{C}| \times I_{q,b}(n,t).
    \]
    Then the conclusion follows.
\end{IEEEproof}

\begin{remark}
  The sphere-packing bound of $t$-$b$-burst-insertion correcting codes established in Corollary \ref{cor:spb} is independent of the parameter $b$ and is identical to that of $t$-insertion correcting codes. 
  This is quite interesting.
\end{remark}

\begin{remark}
    For integers $n\geq bt+1, q\geq 2, b\geq 1, t\geq 1$, a code $\mathcal{C}\subseteq \Sigma_q^n$ is referred to as a \textbf{$t$-$b$-burst-deletion correcting code} if for any two distinct sequences $\boldsymbol{x}, \boldsymbol{y} \in \mathcal{C}$, it holds that $\mathcal{D}_{t,b}(\boldsymbol{x})\cap \mathcal{D}_{t,b}(\boldsymbol{y})= \emptyset$.
    Similar to \cite[Lemma 1]{Levenshtein-66-SPD-1D} and \cite[Theorem 1]{Schoeny-17-IT-BD}, one can show that the set $\mathcal{C}\subseteq \Sigma_q^n$ is a $t$-$b$-burst-deletion correcting code if and only if it is a $t$-$b$-burst-insertion correcting code. This implies that the size of any $q$-ary length-$n$ $t$-$b$-burst-deletion correcting code is also upper bounded by $\frac{q^{n+t}}{I_{q,1}(n,t)}$.
\end{remark}

\subsection{The Maximum Intersection Size Between Two Radius-$t$ $b$-Burst-Insertion Balls}\label{subsec:ins_ball_int}

In the previous subsection, we showed that $I_{q,b}(n,t)$ is a multiple of $I_{q,1}(n,t)$, with the ratio $\frac{I_{q,b}(n,t)}{I_{q,1}(n,t)} = q^{t(b-1)}$. Now, we will demonstrate that a similar conclusion holds for the quantity $N_{q,b}^+(n,t) := \max \{|\mathcal{I}_{t,b}(\boldsymbol{x}) \cap \mathcal{I}_{t,b}(\boldsymbol{y})|: \boldsymbol{x} \neq \boldsymbol{y} \in \Sigma_q^n\}$, which is stated as follows.

\begin{theorem}\label{thm:size_int_ins}
For $n, t\geq 1$, we have $N_{q,b}^+(n,t)=\Delta_{q,b}(n,t)$, where 
\begin{equation}\label{eq:int}
\Delta_{q,b}(n,t):= q^{t(b-1)}\sum_{i=0}^{t-1} \binom{n+t}{i} (q-1)^i [1-(-1)^{t-i}].
\end{equation}
\end{theorem}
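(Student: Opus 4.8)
The plan is to mimic the inductive structure that already succeeded for the ball-size computation in Theorem~\ref{thm:size_ins}, now applied to the intersection of two balls. The key reduction is to reduce the general claim to the single worst-case pair of centers that differ only in the first coordinate, and then to set up a recursion on $n+t$ using Claim~\ref{cla:ins}. First I would establish the lower bound $N_{q,b}^+(n,t)\ge\Delta_{q,b}(n,t)$ by exhibiting an explicit pair. Guided by the abstract's statement that ``the intersection size between error balls centered at two sequences that differ only at the first position can reach this maximum value,'' I would take $\boldsymbol{x}=\alpha\circ\boldsymbol{z}$ and $\boldsymbol{y}=\beta\circ\boldsymbol{z}$ with $\alpha\ne\beta\in\Sigma_q$ and compute $|\mathcal{I}_{t,b}(\boldsymbol{x})\cap\mathcal{I}_{t,b}(\boldsymbol{y})|$ directly; using Claim~\ref{cla:ins} to split on the leading symbol, the two balls agree completely on all leading symbols other than $\alpha$ and $\beta$, so the overlap is governed by cross-terms of the form $\alpha\circ\Sigma_q^{b-1}\circ\mathcal{I}_{t-1,b}(\boldsymbol{y})$ intersected with $\alpha\circ\mathcal{I}_{t,b}(\boldsymbol{z})$ (and symmetrically for $\beta$), reducing everything to the radius-$(t-1)$ intersection plus smaller quantities.

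Next I would prove the matching upper bound $N_{q,b}^+(n,t)\le\Delta_{q,b}(n,t)$ by induction on $n+t$. The base cases ($t=1$, or small $n$) can be checked directly or pulled from Levenshtein's $b=1$ results together with the observed scaling $\Delta_{q,b}(n,t)=q^{t(b-1)}\Delta_{q,1}(n,t)$. For the inductive step, given arbitrary distinct $\boldsymbol{x},\boldsymbol{y}\in\Sigma_q^n$, I would expand both balls via Claim~\ref{cla:ins} according to the leading symbol and bound the intersection by summing the intersections of the corresponding $q$ pieces. The decomposition produces three regimes depending on whether $x_1=y_1$ or not: if $x_1=y_1$, the leading-symbol piece contributes an intersection of two radius-$t$ balls on length $n-1$ sequences (inductive hypothesis applies), while the off-diagonal pieces contribute full or partial overlaps reducible to radius-$(t-1)$ intersections; if $x_1\ne y_1$, the structure matches the extremal pair and one argues the worst case is attained exactly there. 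In every regime I would feed the resulting terms into the scaled recursion implied by Equation~(\ref{eq:ins_rec}) and verify they sum to at most $\Delta_{q,b}(n,t)$.

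The main obstacle I anticipate is the bookkeeping in the cross-terms of the decomposition: when $x_1\neq y_1$, the intersection $\mathcal{I}_{t,b}(\boldsymbol{x})^{\alpha}\cap\mathcal{I}_{t,b}(\boldsymbol{y})^{\alpha}$ need not factor cleanly, because one side may be $\alpha\circ\Sigma_q^{b-1}\circ\mathcal{I}_{t-1,b}$ while the other is $\alpha\circ\mathcal{I}_{t,b}(\cdot_{[2,n]})$, and controlling the overlap of these two forms is exactly where the alternating factor $[1-(-1)^{t-i}]$ in $\Delta_{q,b}$ must emerge. I expect this alternating sign to come from an inclusion--exclusion or telescoping identity, most cleanly by invoking Levenshtein's closed forms for $N_{q,1}^+(n,t)$ from~\cite{Levenshtein-01-JCTA-recons} and transferring them through the scaling factor $q^{t(b-1)}$, rather than rederiving the sign combinatorics from scratch. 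To make the transfer rigorous I would prove a recursion for $\Delta_{q,b}(n,t)$ analogous to~(\ref{eq:ins_rec}) and check that the combinatorial decomposition above satisfies the same recursion with the same initial data, so that the two agree by the uniqueness of the induction.
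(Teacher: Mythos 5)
Your proposal is correct and follows essentially the same route as the paper: a lower bound from the pair differing only in the first coordinate, an upper bound by induction on $n+t$ split into the cases $x_1=y_1$ and $x_1\neq y_1$ via Claim~\ref{cla:ins}, and recursions for $\Delta_{q,b}(n,t)$ obtained by scaling Levenshtein's $b=1$ identities by $q^{t(b-1)}$ (the paper's Lemma~\ref{lem:pro_int_ins}). The cross-term difficulty you anticipate is resolved in the paper exactly as you suggest, by the containment $\Sigma_q^{b-1}\circ\mathcal{I}_{t-1,b}(\boldsymbol{y})\subseteq\mathcal{I}_{t,b}(\boldsymbol{x}_{[2,n]})$, which makes each mixed piece equal to (or bounded by) the smaller cross-term alone.
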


\begin{remark}\label{rmk:t=1}
  The case where $t=1$ has been established in \cite[Theorem 3.4]{Sun-23-IT-BDR}.
  Moreover, \cite[Claim 3.3]{Sun-23-IT-BDR} stated that the maximum ball intersection size $N_{q,b}^+(n,t)=2q^{b-1}$ can be achieved when the selected two centers are one Hamming distance apart.
\end{remark}

It can be easily checked that $\Delta_{q,b}(n,t)= q^{t(b-1)} \Delta_{q,1}(n,t)$.
Similar to the previous subsection, we list the following known results concerning $\Delta_{q,1}(n,t)$ from the literature.

\begin{lemma}\cite[Equations (51)-(56)]{Levenshtein-01-IT-recons}\label{lem:ins_int_rec}
    Assume $n, t\geq 1$ and $q\geq2$, the following holds:
    \begin{itemize}
        \item $\Delta_{q,1}(n-1,t)+\Delta_{q,1}(n,t-1)= 2I_{q,1}(n,t-1)$;
        \item $\Delta_{q,1}(n,t)= \Delta_{q,1}(n-1,t) + (q-1)\Delta_{q,b}(n,t-1)= \sum_{i=0}^{t} (q-1)^i \Delta_{q,1}(n-1,t-i)$;
        \item $\Delta_{q,1}(n,t)= 2I_{q,1}(n,t-1)+ (q-2)\Delta_{q,1}(n,t)= 2\sum_{i=1}^t (q-2)^{i-1}I_{q,1}(n,t-i)$.
    \end{itemize}
\end{lemma}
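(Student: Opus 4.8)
The plan is to verify all three assertions purely algebraically, starting from the closed forms already available in the excerpt: $I_{q,1}(n,t)=\sum_{i=0}^{t}\binom{n+t}{i}(q-1)^i$ (the $b=1$ instance of Theorem~\ref{thm:size_ins}, equivalently Lemma~\ref{lem:ins_rec}) and $\Delta_{q,1}(n,t)=\sum_{i=0}^{t-1}\binom{n+t}{i}(q-1)^i[1-(-1)^{t-i}]$ (the $b=1$ instance of the definition in~(\ref{eq:int})). The only combinatorial tools needed are Pascal's identity $\binom{n+t}{i}=\binom{n+t-1}{i}+\binom{n+t-1}{i-1}$, the parity behaviour of $1-(-1)^{t-i}$ (which equals $2$ when $t-i$ is odd and $0$ otherwise), and the boundary value $\Delta_{q,1}(n,0)=0$ (an empty sum). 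I expect the entire argument to be elementary index-bookkeeping, so the only real effort lies in tracking boundary terms rather than in any conceptual difficulty.

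For the first identity, I would align the two sums $\Delta_{q,1}(n-1,t)$ and $\Delta_{q,1}(n,t-1)$ along the common binomial coefficient $\binom{n+t-1}{i}$ (note $n-1+t=n+t-1$). The key observation is the complementarity $[1-(-1)^{t-i}]+[1-(-1)^{t-1-i}]=2$, so for each index $0\le i\le t-2$ the two terms combine into exactly $2\binom{n+t-1}{i}(q-1)^i$; the leftover index $i=t-1$, present only in $\Delta_{q,1}(n-1,t)$, carries the factor $1-(-1)^{1}=2$ and supplies precisely the missing top term. Summing then gives $2\sum_{i=0}^{t-1}\binom{n+t-1}{i}(q-1)^i=2I_{q,1}(n,t-1)$, as claimed.

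For the second identity I would substitute Pascal's rule into $\Delta_{q,1}(n,t)$ and split the sum: the $\binom{n+t-1}{i}$ part is literally $\Delta_{q,1}(n-1,t)$, while reindexing $j=i-1$ in the $\binom{n+t-1}{i-1}$ part (the $j=-1$ term vanishing) produces $(q-1)\sum_{j=0}^{t-2}\binom{n+t-1}{j}(q-1)^j[1-(-1)^{(t-1)-j}]=(q-1)\Delta_{q,1}(n,t-1)$, the middle coefficient being $\Delta_{q,1}$ in accordance with $b=1$. This establishes the one-step recurrence, and iterating it in the radius down to the vanishing base $\Delta_{q,1}(n,0)=0$ telescopes into the summation form $\sum_{i=0}^{t}(q-1)^i\Delta_{q,1}(n-1,t-i)$ (the $i=t$ term being zero as well).

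Finally, the third identity follows by eliminating $\Delta_{q,1}(n-1,t)$ between the first two: substituting $\Delta_{q,1}(n-1,t)=2I_{q,1}(n,t-1)-\Delta_{q,1}(n,t-1)$ from the first identity into the recurrence of the second yields $\Delta_{q,1}(n,t)=2I_{q,1}(n,t-1)+(q-2)\Delta_{q,1}(n,t-1)$. Iterating this three-term relation in the radius, again terminating at $\Delta_{q,1}(n,0)=0$, telescopes to $2\sum_{i=1}^{t}(q-2)^{i-1}I_{q,1}(n,t-i)$. The only point requiring genuine care throughout is the treatment of boundary indices (the $i=t-1$ term in the first identity, and the $\binom{\cdot}{-1}=0$ and $\Delta_{q,1}(n,0)=0$ contributions in the telescopes); since the closed forms are fully explicit, no obstacle beyond this accounting remains. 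Alternatively, one may simply invoke Equations~(51)--(56) of~\cite{Levenshtein-01-IT-recons} directly.
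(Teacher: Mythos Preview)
Your proposal is correct. In the paper this lemma is simply quoted from Levenshtein \cite{Levenshtein-01-IT-recons} without proof, so your direct algebraic verification from the closed forms $I_{q,1}(n,t)$ and $\Delta_{q,1}(n,t)$ via Pascal's rule and the parity identity $[1-(-1)^{t-i}]+[1-(-1)^{t-1-i}]=2$ actually goes beyond what the paper provides; it is also essentially the same computation that appears (in commented-out form) for the $b$-generalized Lemma~\ref{lem:pro_int_ins}, so the approaches coincide.
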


By combining Lemma \ref{lem:ins_int_rec} with the ratio $\frac{I_{q,b}(n,t)}{I_{q,1}(n,t)}= \frac{\Delta_{q,b}(n,t)}{\Delta_{q,1}(n,t)} = q^{t(b-1)}$, we can develop the following recursive formulas for $\Delta_{q,b}(n,t)$ immediately.

\begin{lemma}\label{lem:pro_int_ins}
For $n, t\geq 1$, the following equations hold:
\begin{equation}\label{eq:ins_int&ins}
    \Delta_{q,b}(n-1,t) + q^{b-1}\Delta_{q,b}(n,t-1) = 2q^{b-1}I_{q,b}(n,t-1);
\end{equation}
\begin{equation}\label{eq:ins_int_rec}
    \begin{aligned}
        \Delta_{q,b}(n,t) 
        &= \Delta_{q,b}(n-1,t) + (q-1)q^{b-1}\Delta_{q,b}(n,t-1) \\
        &= \sum_{i=0}^{t-1} (q-1)^i q^{i(b-1)} \Delta_{q,b}(n-1,t-i);
    \end{aligned}
\end{equation}
\begin{equation}\label{eq:ins_int&ins'}
    \begin{aligned}
    \Delta_{q,b}(n,t) 
    &= 2q^{b-1}I_{q,b}(n,t-1) + (q-2)q^{b-1}\Delta_{q,b}(n,t-1) \\
    &= 2\sum_{i=1}^t (q-2)^{i-1} q^{i(b-1)} I_{q,b}(n,t-i).
    \end{aligned}
\end{equation}
\end{lemma}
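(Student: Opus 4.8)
The plan is to reduce all three identities to their $b=1$ counterparts collected in Lemma~\ref{lem:ins_int_rec}, exploiting the two scaling relations $\Delta_{q,b}(n,t)=q^{t(b-1)}\Delta_{q,1}(n,t)$ (immediate from the closed form in~\eqref{eq:int}) and $I_{q,b}(n,t)=q^{t(b-1)}I_{q,1}(n,t)$ (from Theorem~\ref{thm:size_ins} together with $\delta_{q,1}=I_{q,1}$). No new induction is needed: each claim becomes a short substitution in which every $b$-subscripted quantity is replaced by its $b=1$ analogue times the appropriate power of $q^{b-1}$, a common factor $q^{t(b-1)}$ is pulled out, and the matching line of Lemma~\ref{lem:ins_int_rec} is invoked. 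I would therefore present the proof as three single chains of equalities rather than a general argument.

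Concretely, for~\eqref{eq:ins_int&ins} I would write $\Delta_{q,b}(n-1,t)=q^{t(b-1)}\Delta_{q,1}(n-1,t)$ and $q^{b-1}\Delta_{q,b}(n,t-1)=q^{b-1}q^{(t-1)(b-1)}\Delta_{q,1}(n,t-1)=q^{t(b-1)}\Delta_{q,1}(n,t-1)$, so that both terms carry the same factor $q^{t(b-1)}$ and their bracketed sum is exactly $\Delta_{q,1}(n-1,t)+\Delta_{q,1}(n,t-1)=2I_{q,1}(n,t-1)$ by the first bullet of Lemma~\ref{lem:ins_int_rec}; multiplying back and using $2q^{t(b-1)}I_{q,1}(n,t-1)=2q^{b-1}I_{q,b}(n,t-1)$ yields~\eqref{eq:ins_int&ins}. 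Equations~\eqref{eq:ins_int_rec} and~\eqref{eq:ins_int&ins'} follow the identical template from the second and third bullets of Lemma~\ref{lem:ins_int_rec}: in each summand the explicit prefactor $q^{i(b-1)}$ combines with the radius-$(t-i)$ scaling factor $q^{(t-i)(b-1)}$ to reconstitute $q^{t(b-1)}$, so after extracting $q^{t(b-1)}$ one is left precisely with the $b=1$ recursion. I would also record the boundary fact $\Delta_{q,1}(n-1,0)=0$, which reconciles the $b=1$ sum running to $i=t$ with our sum running to $i=t-1$.

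The sole point requiring care — and hence the only ``obstacle'' in an otherwise mechanical proof — is the exponent bookkeeping: one must check that every prefactor $q^{b-1}$ (respectively $q^{i(b-1)}$) attached to a radius-$(t-1)$ (respectively radius-$(t-i)$) term is exactly what promotes its scaling exponent $(t-1)(b-1)$ (respectively $(t-i)(b-1)$) up to the shared $t(b-1)$, since $(b-1)+(t-1)(b-1)=t(b-1)$ and $i(b-1)+(t-i)(b-1)=t(b-1)$. Because this balance holds term by term, the factor $q^{t(b-1)}$ separates cleanly out of each identity and all three claims collapse onto Lemma~\ref{lem:ins_int_rec}, completing the proof.
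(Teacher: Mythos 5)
Your proposal is correct and matches the paper's own argument: the paper likewise observes $\Delta_{q,b}(n,t)=q^{t(b-1)}\Delta_{q,1}(n,t)$ and $I_{q,b}(n,t)=q^{t(b-1)}I_{q,1}(n,t)$ and derives all three identities directly from the $b=1$ recursions of Lemma~\ref{lem:ins_int_rec}, exactly as you do. Your explicit exponent bookkeeping and the boundary remark $\Delta_{q,1}(n-1,0)=0$ (reconciling the upper summation limits $t$ versus $t-1$) are the only details the paper leaves implicit.
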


Specially, when $q=2$, we can establish the following alternative expression for $\Delta_{q,b}(n,t)$ using Equations (\ref{eq:ins_ball}), (\ref{eq:ins_int&ins}), and (\ref{eq:ins_int_rec}).

\begin{corollary}
    For $n, t\geq 1$, we have $\Delta_{2,b}(n,t) = 2^b I_{2,b}(n,t-1) = 2^{t(b-1)+1}\sum_{i=0}^{t-1}\binom{n+t-1}{i}$.
\end{corollary}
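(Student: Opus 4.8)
The plan is to obtain both equalities directly from the recursive identities collected in Lemma~\ref{lem:pro_int_ins}, specialized to $q=2$, together with the closed form for the ball size in Equation~(\ref{eq:ins_ball}); no induction is required here since the needed recursions have already been established.

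For the first equality $\Delta_{2,b}(n,t)=2^b I_{2,b}(n,t-1)$, I would start from the recursion in Equation~(\ref{eq:ins_int_rec}). Setting $q=2$ collapses the factor $(q-1)q^{b-1}$ to $2^{b-1}$, giving
\begin{equation*}
\Delta_{2,b}(n,t)=\Delta_{2,b}(n-1,t)+2^{b-1}\Delta_{2,b}(n,t-1).
\end{equation*}
The key observation is that the right-hand side is \emph{exactly} the left-hand side of Equation~(\ref{eq:ins_int&ins}) at $q=2$, namely $\Delta_{2,b}(n-1,t)+2^{b-1}\Delta_{2,b}(n,t-1)=2\cdot 2^{b-1}I_{2,b}(n,t-1)$. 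Substituting this identity into the display above yields $\Delta_{2,b}(n,t)=2^b I_{2,b}(n,t-1)$ immediately.

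For the second equality, I would substitute the explicit value of $I_{2,b}(n,t-1)$ from Equation~(\ref{eq:ins_ball}). Since $q-1=1$ at $q=2$, the sum $\sum_{i=0}^{t-1}\binom{n+t-1}{i}(q-1)^i$ reduces to $\sum_{i=0}^{t-1}\binom{n+t-1}{i}$ and the prefactor becomes $2^{(t-1)(b-1)}$; multiplying by $2^b$ and simplifying the exponent via $b+(t-1)(b-1)=t(b-1)+1$ produces the claimed closed form $2^{t(b-1)+1}\sum_{i=0}^{t-1}\binom{n+t-1}{i}$.

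The only genuine piece of arithmetic is the exponent identity $b+(t-1)(b-1)=t(b-1)+1$; everything else is a direct substitution, so I do not anticipate a real obstacle. I would note that one could instead try to verify both equalities purely from the definition~(\ref{eq:int}) of $\Delta_{2,b}(n,t)$: at $q=2$ the parity factor $[1-(-1)^{t-i}]$ equals $2$ when $t-i$ is odd and $0$ otherwise, so $\Delta_{2,b}(n,t)=2^{t(b-1)+1}\sum_{i:\,t-i\ \mathrm{odd}}\binom{n+t}{i}$; but then matching this alternating-parity sum term-by-term against $\sum_{i=0}^{t-1}\binom{n+t-1}{i}$ requires a separate binomial identity and is less transparent than the recursion-based route, so I would prefer the approach above.
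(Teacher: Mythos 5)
Your proposal is correct and follows exactly the route the paper indicates: it combines Equation~(\ref{eq:ins_int_rec}) at $q=2$ with Equation~(\ref{eq:ins_int&ins}) to get $\Delta_{2,b}(n,t)=2^bI_{2,b}(n,t-1)$, then substitutes the closed form from Equation~(\ref{eq:ins_ball}), and the exponent arithmetic $b+(t-1)(b-1)=t(b-1)+1$ checks out. (One could reach the first equality even more directly from Equation~(\ref{eq:ins_int&ins'}), where the $(q-2)$ term vanishes at $q=2$, but this is an equivalent one-line variant of the same argument.)
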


\begin{IEEEproof}[Proof of Theorem \ref{thm:size_int_ins}]
We first show that $N_{q,b}^+(n,t)\geq \Delta_{q,b}(n,t)$ by finding two sequences such that the intersection size between their $b$-burst-insertion balls of radius $t$ equals $\Delta_{q,b}(n,t)$.
For any $\boldsymbol{x}=x_1x_2 \cdots x_n\in\Sigma_q^n$, let $\boldsymbol{y} \in \Sigma_q^n$ be such that $y_1\neq x_1$ and $y_i=x_i$ for $i\in [2,n]$. We claim that $| \mathcal{I}_{t,b}(\boldsymbol{x}) \cap \mathcal{I}_{t,b}(\boldsymbol{y}) |=\Delta_{q,b}(n,t)$ and demonstrate its correctness by induction on $n+t$.
For the base case where $n+t=2$, we have $t=1$. Then the conclusion follows by Remark \ref{rmk:t=1}.
Now, assuming that the conclusion is valid for $n+t<m$ with $m\geq 3$, we examine the scenario where $n+t=m$.
By Claim \ref{cla:ins}, we can obtain 
\begin{equation*}
  \mathcal{I}_{t,b}(\boldsymbol{x})^{\alpha}= 
  \begin{cases}
    x_1\circ \mathcal{I}_{t,b}(\boldsymbol{x}_{[2,n]}), & \mbox{if } \alpha= x_1; \\
    \alpha \circ \Sigma_q^{b-1} \circ \mathcal{I}_{t-1,b}(\boldsymbol{x}), & \mbox{if } \alpha \in \Sigma_q \setminus \{x_1\},
  \end{cases}
\end{equation*}
and 
\begin{equation*}
  \mathcal{I}_{t,b}(\boldsymbol{y})^{\alpha}= 
  \begin{cases}
    y_1\circ \mathcal{I}_{t,b}(\boldsymbol{x}_{[2,n]}), & \mbox{if } \alpha= y_1; \\
    \alpha \circ \Sigma_q^{b-1} \circ \mathcal{I}_{t-1,b}(\boldsymbol{y}), & \mbox{if } \alpha \in \Sigma_q \setminus \{y_1\}.
  \end{cases}
\end{equation*}
Observe that 
\begin{gather*}
  \Sigma_q^{b-1}\circ \mathcal{I}_{t-1,b}(\boldsymbol{y})\subseteq \mathcal{I}_{t,b}(\boldsymbol{y}_{[2,n]})= \mathcal{I}_{t,b}(\boldsymbol{x}_{[2,n]}), \\
  \Sigma_q^{b-1}\circ \mathcal{I}_{t-1,b}(\boldsymbol{x})\subseteq \mathcal{I}_{t,b}(\boldsymbol{x}_{[2,n]})= \mathcal{I}_{t,b}(\boldsymbol{y}_{[2,n]}).
\end{gather*}
Let $\mathcal{S}:= \mathcal{I}_{t,b}(\boldsymbol{x})\cap \mathcal{I}_{t,b}(\boldsymbol{y})$, it follows that
\begin{equation*}
  \mathcal{S}^{\alpha}= 
  \begin{cases}
    x_1 \circ \Sigma_q^{b-1}\circ \mathcal{I}_{t-1,b}(\boldsymbol{y}), & \mbox{if } \alpha= x_1; \\
    y_1 \circ \Sigma_q^{b-1}\circ \mathcal{I}_{t-1,b}(\boldsymbol{x}), & \mbox{if } \alpha= y_1;\\
    \alpha \circ \Sigma_q^{b-1}\circ \big( \mathcal{I}_{t-1,b}(\boldsymbol{x}) \cap \mathcal{I}_{t-1,b}(\boldsymbol{y}) \big), & \mbox{if } \alpha \in \Sigma_q\setminus \{x_1,y_1\}.
  \end{cases}
\end{equation*}
Then we can compute
\begin{align*}
    |\mathcal{S}|
    &= |\mathcal{S}^{x_1}|+ |\mathcal{S}^{y_1}|+ \sum_{\alpha \in \Sigma_q \setminus \{x_1,y_1\}} |\mathcal{S}^{\alpha}| \\
    &= q^{b-1} |\mathcal{I}_{t-1,b}(\boldsymbol{y})|+ q^{b-1} |\mathcal{I}_{t-1,b}(\boldsymbol{x})|+ (q-2)q^{b-1} |\mathcal{I}_{t-1,b}(\boldsymbol{x}) \cap \mathcal{I}_{t-1,b}(\boldsymbol{y})|\\
    &\stackrel{(\ast)}{=} 2q^{b-1}I_{q,b}(n,t-1)+ (q-2)q^{b-1} \Delta_{q,b}(n,t-1) \\
    &\stackrel{(\star)}{=}\Delta_{q,b}(n,t),
\end{align*}
where $(\ast)$ follows by the induction hypothesis and $(\star)$ follows by Equation (\ref{eq:ins_int&ins'}).
Consequently, the conclusion is also valid for $n+t=m$, thereby $| \mathcal{I}_{t,b}(\boldsymbol{x}) \cap \mathcal{I}_{t,b}(\boldsymbol{y}) |=\Delta_{q,b}(n,t)$. Then we get $N_{q,b}^+(n,t)\geq \Delta_{q,b}(n,t)$.

To complete the proof, it remains to show that $N_{q,b}^+(n,t)\leq\Delta_{q,b}(n,t)$. We demonstrate its correctness again by induction on $n+t$.
For the base case where $n+t=2$, we have $t=1$. Then the conclusion follows by Remark \ref{rmk:t=1}.
Now, assuming the conclusion is valid for $n+t<m$, we examine the scenario where $n+t=m\geq 3$. For any two distinct sequences $\boldsymbol{x}, \boldsymbol{y} \in\Sigma_q^{n}$, let $\mathcal{S}:= \mathcal{I}_{t,b}(\boldsymbol{x})\cap \mathcal{I}_{t,b}(\boldsymbol{y})$.
We will distinguish between two cases based on whether $x_1=y_1$.
\begin{itemize}
    \item If $x_1=y_1$, by Claim \ref{cla:ins}, we can obtain
    \begin{equation*}
      \mathcal{S}^{\alpha}= 
      \begin{cases}
        x_1 \circ \big( \mathcal{I}_{t,b}(\boldsymbol{x}_{[2,n]}) \cap \mathcal{I}_{t,b}(\boldsymbol{y}_{[2,n]})\big), & \mbox{if } \alpha= x_1; \\
        \alpha \circ \Sigma_q^{b-1} \circ \big( \mathcal{I}_{t-1,b}(\boldsymbol{x}) \cap \mathcal{I}_{t-1,b}(\boldsymbol{y})\big), & \mbox{if } \alpha \in \Sigma_q \setminus \{x_1\}.
      \end{cases}
    \end{equation*}
    Then we can compute 
    \begin{align*}
        |\mathcal{S}|
        &= |\mathcal{S}^{x_1}|+ \sum_{\alpha \in \Sigma_q \setminus \{x_1\}} |\mathcal{S}^{\alpha}| \\
        &\stackrel{(\ast)}{\leq} \Delta_{q,b}(n-1,t) + (q-1)q^{b-1}\Delta_{q,b}(n,t-1) \\
        &\stackrel{(\star)}{=} \Delta_{q,b}(n,t),
    \end{align*}
    where $(\ast)$ follows by the induction hypothesis and $(\star)$ follows by Equation (\ref{eq:ins_int_rec}).
    
    \item If $x_1\neq y_1$, again by Claim \ref{cla:ins}, we can obtain
    \begin{equation*}
      \mathcal{S}^{\alpha}\subseteq 
      \begin{cases}
        \mathcal{I}_{t,b}(\boldsymbol{y})^{x_1}
        \subseteq x_1 \circ \Sigma_q^{b-1}\circ \mathcal{I}_{t-1,b}(\boldsymbol{y}), & \mbox{if } \alpha= x_1; \\
        \mathcal{I}_{t,b}(\boldsymbol{x})^{y_1}
        \subseteq y_1 \circ \Sigma_q^{b-1}\circ \mathcal{I}_{t-1,b}(\boldsymbol{x}), & \mbox{if } \alpha= y_1;\\
        \alpha \circ \Sigma_q^{b-1}\circ \big( \mathcal{I}_{t-1,b}(\boldsymbol{x}) \cap \mathcal{I}_{t-1,b}(\boldsymbol{y}) \big), & \mbox{if } \alpha \in \Sigma_q\setminus \{x_1,y_1\}.
      \end{cases}
    \end{equation*}
    Then we can compute
    \begin{align*}
    |\mathcal{S}|
    &= |\mathcal{S}^{x_1}|+ |\mathcal{S}^{y_1}|+ \sum_{\alpha \in \Sigma_q \setminus \{x_1,y_1\}} |\mathcal{S}^{\alpha}| \\
    &\leq q^{b-1} |\mathcal{I}_{t-1,b}(\boldsymbol{y})|+ q^{b-1} |\mathcal{I}_{t-1,b}(\boldsymbol{x})|+ (q-2)q^{b-1} |\mathcal{I}_{t-1,b}(\boldsymbol{x}) \cap \mathcal{I}_{t-1,b}(\boldsymbol{y})|\\
    &\stackrel{(\ast)}{=} 2q^{b-1}I_{q,b}(n,t-1)+ (q-2)q^{b-1} \Delta_{q,b}(n,t-1) \\
    &\stackrel{(\star)}{=}\Delta_{q,b}(n,t),
    \end{align*}
    where $(\ast)$ follows by the induction hypothesis and $(\star)$ follows by Equation (\ref{eq:ins_int&ins'}).
\end{itemize}
Consequently, the conclusion is also valid for $n+t=m$, thereby completing the proof.
\end{IEEEproof}

\subsection{The Reconstruction Algorithm for Burst-Insertion Errors} \label{subsec:ins_alg}

We will now design a straightforward algorithm, presented in Algorithm \ref{alg:ins}, to recover an arbitrary $\boldsymbol{x} \in \Sigma_q^n$, where $n \geq 1$, given its length $n$ and $N \geq N_{q,b}^+(n, t) + 1$ distinct sequences $\boldsymbol{y}_1, \boldsymbol{y}_2, \ldots, \boldsymbol{y}_N$ in $\mathcal{I}_{t,b}(\boldsymbol{x})$.
This algorithm generalizes the threshold reconstruction algorithm presented in \cite[Section 6]{Levenshtein-01-JCTA-recons}.
The main challenge is to determine appropriate thresholds, which we present in Lemmas \ref{lem:ins_x1} and \ref{lem:ins_number}.

\begin{algorithm}
    \caption{Sequence Reconstruction Algorithm under Burst-Insertion Channel}\label{alg:ins}
    \KwIn{$\mathcal{U} \subseteq \mathcal{I}_{t,b}(\boldsymbol{x})$ for some $\boldsymbol{x}\in \Sigma_q^n$ with $|\mathcal{U}|\geq N_{q,b}^+(n,t)+1$}
    \KwOut{$\boldsymbol{x}= x_1\cdots x_n$}

    \textbf{Initialization:} Set $i=0$\\
        \While{$i<n$ or $t>0$}
            {$i \leftarrow i+1$\\
             Let $\beta \in \Sigma_q$ be such that $|\mathcal{U}^{\alpha \succ \beta,b}|< |\mathcal{U}^{\beta \succ \alpha,b}|$ for any $\alpha\in \Sigma_q\setminus \{\beta\}$\\
             $x_i \leftarrow \beta$\\
             Let $j\in [0,t]$ be the largest index such that $|\mathcal{U}^{x_i,j+1,b}|\geq (q-1)^j q^{j(b-1)} N_{q,b}^+(n-i,t-j)+1$\\
             $\mathcal{U}\leftarrow \overline{\mathcal{U}^{x_i,j+1,b}}$\\
             \eIf{$j=t$}
                {\{$x_{i+1}\cdots x_n\} \leftarrow \mathcal{U}$}
                {$t\leftarrow t-j$}
            }
\end{algorithm}

The algorithm is operated as follows:
\begin{itemize}
  \item Firstly, we define the set $\mathcal{U} = \{ \boldsymbol{y}_1, \boldsymbol{y}_2, \ldots, \boldsymbol{y}_N \}$, which can be considered as a matrix of size $(n + bt) \times N$, with the vector in the $i$-th column set as $\boldsymbol{y}_i$ for $i \in [1, N]$. 
  \item We then use Lemma \ref{lem:ins_x1} to recover $x_1$ from the first $tb + 1$ rows of this matrix. This completes the recovery of $\boldsymbol{x}$ if $n = 1$. 
  \item In the case where $n \geq 2$, we will apply Lemma \ref{lem:ins_number} to identify a submatrix $\mathcal{U}'$ of the matrix $\mathcal{U}$, which has size $(n' + bt') \times N'$ and consists of distinct columns $\boldsymbol{y}' \in \mathcal{I}_{t',b}(\boldsymbol{x}')$, where $\boldsymbol{x}' = x_2 \cdots x_n$, $n' = n - 1$, $t' = t - j$, and $N' \geq N_{q,b}^+(n', t') + 1$.
  \item If $t' = t - j = 0$, then $N_{q,b}^+(n', t')=0$, implying that the matrix $\mathcal{U}'$ consists of only the column $\boldsymbol{x}'$, and recovering $\boldsymbol{x} = x_1 \circ \boldsymbol{x}'$ is completed. 
  \item In the case where $t' = t - j \geq 1$, recovering $\boldsymbol{x} \in \Sigma_q^n$ reduces to recovering $\boldsymbol{x}' \in \Sigma_q^{n'}$ with the help of $N' \geq N_{q,b}^+(n', t') + 1$ distinct sequences of length $n' + bt'$ in $\mathcal{I}_{t',b}(\boldsymbol{x}')$. 
  \item Since $n' = n - 1$ and $t' \leq t$, we will eventually reach a stage where $n = 1$ or $t = 0$. At this point, we can derive the correct $\boldsymbol{x}$.
\end{itemize}

We begin with the following definitions.

\begin{definition}
  Assume $n,t\geq 1$. For any $\mathcal{U}\subseteq \Sigma_q^{n+bt}$, $\alpha \in \Sigma_q$, and $i\in [1,t+1]$, we define $\mathcal{U}^{\alpha,i,b}$ as the set consisting of all the sequences in $\mathcal{U}$ whose first appearance of the symbol $\alpha$ is at position $(i-1)b+1$ among the indices $1, b+1, \ldots, tb+1$. In other words, 
  \begin{align*}
    \mathcal{U}^{\alpha,i,b}=\{\boldsymbol{y}\in \mathcal{U}: y_{(i-1)b+1}=\alpha \text{ and } y_{(j-1)b+1}\neq \alpha \text{ for any } j\in [1,i-1]\}.
  \end{align*}
  Since sequences in $\mathcal{U}^{\alpha,i,b}$ have $(q-1)^{i-1} q^{(i-1)(b-1)}$ choices for the first $(i-1)b+1$ entries, we can partition $\mathcal{U}^{\alpha,i,b}$ into $(q-1)^{i-1} q^{(i-1)(b-1)}$ disjoint subsets.
  We then arbitrarily choose one subset with the largest cardinality among these subsets and denote it as $\underline{\mathcal{U}^{\alpha,i,b}}$. 
  Finally, we define $\overline{\mathcal{U}^{\alpha,i,b}}$ as the set of sequences that can be obtained from $\underline{\mathcal{U}^{\alpha,i,b}}$ by deleting the first $(i-1)b+1$ entries. That is, 
  \begin{align*}
    \overline{\mathcal{U}^{\alpha,i,b}}=\{\boldsymbol{y}_{[(i-1)b+2,n]}: \boldsymbol{y} \in \underline{\mathcal{U}^{\alpha,i,b}}\}.
  \end{align*}
\end{definition}

\begin{definition}
  Assume $n,t\geq 1$. For any $\mathcal{U}\subseteq \Sigma_q^{n+bt}$, any index $i\in [1,t+1]$, and any two distinct symbols $\alpha, \beta\in \Sigma_q$, we define $\mathcal{U}^{\alpha \succ \beta,i,b}$ as the set consisting of all the sequences in $\mathcal{U}$ whose first appearance of the symbol $\alpha$ is $(i-1)b+1$ and $\alpha$ appears before $\beta$ among the indices $1, b+1, \ldots, tb+1$, i.e., $\mathcal{U}^{\alpha\succ \beta,i,b}=\{\boldsymbol{y}\in \mathcal{U}: y_{(i-1)b+1}=\alpha \text{ and } y_{(j-1)b+1}\not\in \{\alpha,\beta\} \text{ for any } j\in [1,i-1]\}$.
  Then we set $\mathcal{U}^{\alpha\succ \beta,b}= \cup_{i=1}^{t+1} \mathcal{U}^{\alpha\succ \beta,i,b}$.
\end{definition}

\begin{example}
  Assume $n=t=q=b=2$.
  Let $\mathcal{U}= \{010000, 010100, 010101, 100001, 101001,101011\}$, then we have
  \begin{itemize}
    \item $\{1,b+1,\ldots,tb+1\}$= $\{1,3,5\}$;
    \item $\mathcal{U}^{0,1,b}= \{010000, 010100, 010101\}$, $\mathcal{U}^{0,2,b}= \{100001\}$, $\mathcal{U}^{0,3,b}= \{101001\}$, $\mathcal{U}^{1,0,b}= \{100001,101001,101011\}$, and $\mathcal{U}^{1,1,b}=\mathcal{U}^{1,2,b}=\emptyset$;
    \item $\overline{\mathcal{U}^{0,1,b}}= \{10000, 10100, 10101\}$, $\overline{\mathcal{U}^{0,2,b}}= \{001\}$, $\overline{\mathcal{U}^{0,3,b}}= \{1\}$, $\overline{\mathcal{U}^{1,0,b}}= \{00001,01001,01011\}$, and $\overline{\mathcal{U}^{1,1,b}}= \overline{\mathcal{U}^{1,2,b}} =\emptyset$;
    \item $\mathcal{U}^{0\succ 1,1,b}= \{010000, 010100, 010101\}$, $\mathcal{U}^{0\succ 1,2,b}= \mathcal{U}^{0\succ 1,3,b}= \emptyset$, $\mathcal{U}^{1\succ 0,1,b}= \{100001, 101001,101011\}$, and $\mathcal{U}^{1\succ 0,2,b}=\mathcal{U}^{1\succ 0,3,b}= \emptyset$;
    \item $\mathcal{U}^{0\succ 1,b}= \{010000, 010100, 010101\}$ and $\mathcal{U}^{1\succ 0,b}= \{100001, 101001,101011\}$.
  \end{itemize}
\end{example}

\begin{observation}\label{obs:ins_int}
    For any $\boldsymbol{x}\in \Sigma_q^n$ with $n\geq 1$ and $\boldsymbol{y}\in \mathcal{I}_{t,b}(\boldsymbol{x})$ with $t,b\geq 1$, we have $y_{ib+1}=x_1$ for some $i\in [0,t]$. 
    Therefore, for any $\mathcal{U}\subseteq \mathcal{I}_{t,b}(\boldsymbol{x})$, we can partition $\mathcal{U}$ into the following $t+1$ disjoint subsets: $\mathcal{U}^{x_1,1,b}, \mathcal{U}^{x_1,2,b}, \ldots, \mathcal{U}^{x_1,t+1,b}$.
    As a result, we have $|\mathcal{U}|= \sum_{i=1}^{t+1} |\mathcal{U}^{x_1,i,b}|$.
    Moreover, for any $\alpha \in \Sigma_q \setminus\{x_1\}$, we can also partition $\mathcal{U}$ into the following two disjoint subsets: $\mathcal{U}^{\alpha \succ x_1,b}, \mathcal{U}^{x_1 \succ \alpha,b}$.
\end{observation}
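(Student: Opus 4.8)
The plan is to establish the single positional fact that drives the entire observation: for every $\boldsymbol{y}\in\mathcal{I}_{t,b}(\boldsymbol{x})$ there exists some $i\in[0,t]$ with $y_{ib+1}=x_1$. To prove this, I would track the original first symbol $x_1$ through the sequence of $t$ burst insertions that produce $\boldsymbol{y}$ from $\boldsymbol{x}$. Applying one burst of length $b$ at a single position either inserts all $b$ new symbols strictly before the current copy of $x_1$, increasing the length of the prefix preceding it by exactly $b$, or inserts all $b$ symbols at or after it, leaving that prefix length unchanged; a burst can never be split around $x_1$, since its $b$ symbols are contiguous and placed at one position. Consequently, if $k\in[0,t]$ of the bursts land before $x_1$, then the copy of $x_1$ in $\boldsymbol{y}$ occupies index $kb+1$, so $y_{kb+1}=x_1$, and $kb+1$ is one of the indices $1,b+1,\ldots,tb+1$.

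Given this fact, the first partition is immediate. For each $\boldsymbol{y}\in\mathcal{U}$ the position of the first occurrence of $x_1$ among the indices $1,b+1,\ldots,tb+1$ is well defined, because by the fact above $x_1$ occurs at at least one such index; writing that first index as $(i-1)b+1$ assigns $\boldsymbol{y}$ to a unique $\mathcal{U}^{x_1,i,b}$ with $i\in[1,t+1]$. Disjointness holds since the index of first occurrence is uniquely determined, and the sets cover $\mathcal{U}$ by the existence just guaranteed; summing cardinalities then yields $|\mathcal{U}|=\sum_{i=1}^{t+1}|\mathcal{U}^{x_1,i,b}|$.

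For the second partition, I would fix $\alpha\in\Sigma_q\setminus\{x_1\}$ and consider, for each $\boldsymbol{y}\in\mathcal{U}$, the smallest index among $1,b+1,\ldots,tb+1$ at which a symbol of $\{\alpha,x_1\}$ appears. This index exists because $x_1$ always appears among these indices, and the symbol sitting there is exactly one of $\alpha$ or $x_1$. If it is $\alpha$ then $\boldsymbol{y}\in\mathcal{U}^{\alpha\succ x_1,b}$, while if it is $x_1$ then $\boldsymbol{y}\in\mathcal{U}^{x_1\succ\alpha,b}$; these two cases are mutually exclusive and exhaustive, giving the claimed two-set partition.

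The only genuinely substantive point is the positional fact in the first paragraph; everything else is bookkeeping over the definitions of $\mathcal{U}^{x_1,i,b}$ and $\mathcal{U}^{\alpha\succ x_1,b}$. I expect the main obstacle to be formalizing cleanly that each burst lands entirely before or entirely after $x_1$, which is best handled by an induction on the number of applied bursts, tracking the index of $x_1$ after each insertion step and noting that its increment is always either $0$ or $b$.
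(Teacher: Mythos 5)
Your proof is correct, and it supplies exactly the reasoning the paper leaves implicit: the paper states this as an unproved Observation, and the one substantive point — that the original symbol $x_1$ is shifted by a multiple of $b$ because each burst of $b$ contiguous insertions lands entirely before or entirely after it — is precisely your first paragraph, with the two partitions then following from the definitions as you describe. (An equivalent route would be induction on $t$ via Claim \ref{cla:ins}, but it amounts to the same tracking argument.)
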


The following lemma shows that we can always determine the first symbol of $\boldsymbol{x}$ when we have received more than $N_{q,b}^+(n,t)$ distinct sequences derived from $\boldsymbol{x}$ after $t$ bursts of insertions, each of length exactly $b$.

\begin{lemma}\label{lem:ins_x1}
    Let $n,t\geq 1$ and $q,b\geq 2$. Given $\boldsymbol{x}\in \Sigma_q^n$ and $\mathcal{U}\subseteq \mathcal{I}_{t,b}(\boldsymbol{x})$.
    For any $\alpha\in \Sigma_q\setminus \{x_1\}$, we have $|\mathcal{U}^{\alpha \succ x_1,b}|\leq \frac{1}{2}N_{q,b}^+(n,t)$.
    As a result, when $|\mathcal{U}|\geq N_{q,b}^+(n,t)+1$, we have $|\mathcal{U}^{\alpha \succ x_1,b}|< |\mathcal{U}^{x_1 \succ \alpha,b}|$.
\end{lemma}

\begin{IEEEproof}
    By Observation \ref{obs:ins_int}, we can partition $\mathcal{U}^{\alpha \succ x_1,b}$ into the following $t$ disjoint subsets: 
    \begin{align*}
      \mathcal{U}^{\alpha \succ x_1,1,b}, \mathcal{U}^{\alpha \succ x_1,2,b}, \ldots, \mathcal{U}^{\alpha \succ x_1,t,b}.
    \end{align*}
    Let $\mathcal{T} := \big(\Sigma_q \setminus \{\alpha,x_1\}\big) \circ \Sigma_q^{b-1}$. 
    For any $i\in [1,t]$, it can be easily checked that
    \begin{align*}
      \mathcal{U}^{\alpha \succ x_1,i,b}
      &\subseteq \underbrace{\mathcal{T} \circ \mathcal{T} \circ \cdots \circ \mathcal{T}}_{i-1} \circ~ \alpha \circ \Sigma_q^{b-1} \circ \mathcal{I}_{t-i,b}(\boldsymbol{x}).
    \end{align*}
    It follows by Theorem \ref{thm:size_ins} that
    \begin{align*}
      |\mathcal{U}^{\alpha \succ x_1,i,b}|
      &\leq \left[(q-2)q^{b-1}\right]^{i-1} q^{b-1} I_{q,b}(n,t-i) \\
      &= (q-2)^{i-1}q^{i(b-1)}I_{q,b}(n,t-i).
    \end{align*}
    Then we can compute
    \begin{align*}
      |\mathcal{U}^{\alpha \succ x_1,b}|
      &= \sum_{i=1}^t |\mathcal{U}^{\alpha \succ x_1,i,b}|\\
      &\leq \sum_{i=1}^t (q-2)^{i-1}q^{i(b-1)}I_{q,b}(n,t-i) \\
      &= \frac{1}{2}N_{q,b}^+(n,t),
    \end{align*}
    where the last equality follows by Theorem \ref{thm:size_int_ins} and Equation (\ref{eq:ins_int&ins'}), thereby completing the proof.
\end{IEEEproof}

Once we determine the first symbol $x_1$, we need to identify a certain number $j\in [0,t]$ of bursts of insertions that were occurred before $x_1$ such that the submatrix $\overline{\mathcal{U}^{x_1,j+1,b}}$ satisfies the condition $|\overline{\mathcal{U}^{x_1,j+1,b}}|\geq N_{q,b}^+(n-1,t-j)+1$. This can be achieved by applying Theorem \ref{thm:size_int_ins} and Equation (\ref{eq:ins_int_rec}), as demonstrated below.

\begin{lemma}\label{lem:ins_number}
    Let $n,t\geq 1$ and $q,b\geq 2$. Given $\boldsymbol{x}\in \Sigma_q^n$ and $\mathcal{U}\subseteq \mathcal{I}_{t,b}(\boldsymbol{x})$ with $|\mathcal{U}|\geq N_{q,b}^+(n,t)+1$, then $\overline{\mathcal{U}^{x_1,i,b}}\subseteq \mathcal{I}_{t-i+1,b}(\boldsymbol{x}_{[2,n]})$ for $i\in [1,t+1]$ and there exists some $j \in [0,t]$ such that $|\overline{\mathcal{U}^{x_1,j+1,b}}|\geq N_{q,b}^+(n-1,t-j)+1$.
    In particular, if $j=t$, $\overline{\mathcal{U}^{x_1,j+1,b}}$ consists of the only sequence $x_2\cdots x_n$.
\end{lemma}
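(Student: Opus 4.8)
The plan is to prove the two assertions together, using the recursive ball decomposition of Claim~\ref{cla:ins} for the structural containment and a pigeonhole counting argument for the existence of $j$. First I would settle the containment $\overline{\mathcal{U}^{x_1,i,b}}\subseteq\mathcal{I}_{t-i+1,b}(\boldsymbol{x}_{[2,n]})$. Fix $i\in[1,t+1]$ and take $\boldsymbol{y}\in\mathcal{U}^{x_1,i,b}$, so that the checkpoints $y_1,y_{b+1},\dots,y_{(i-2)b+1}$ all differ from $x_1$ while $y_{(i-1)b+1}=x_1$. Unfolding Claim~\ref{cla:ins} once peels off the first checkpoint: if it equals $x_1$ the remaining suffix already lies in $\mathcal{I}_{t,b}(\boldsymbol{x}_{[2,n]})$, and otherwise it forces a fresh burst, placing the suffix after position $b$ in $\mathcal{I}_{t-1,b}(\boldsymbol{x})$. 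Applying this $i-1$ times to the non-$x_1$ checkpoints and once more to the $x_1$ checkpoint shows $\boldsymbol{y}\in\alpha_1\circ\Sigma_q^{b-1}\circ\cdots\circ\alpha_{i-1}\circ\Sigma_q^{b-1}\circ x_1\circ\mathcal{I}_{t-i+1,b}(\boldsymbol{x}_{[2,n]})$ with each $\alpha_\ell\neq x_1$; deleting the first $(i-1)b+1$ entries yields the claimed containment.

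For the existence of $j$, I would record the elementary counting bound $|\mathcal{U}^{x_1,i,b}|\le (q-1)^{i-1}q^{(i-1)(b-1)}\,|\overline{\mathcal{U}^{x_1,i,b}}|$. This is immediate from the definition: $\mathcal{U}^{x_1,i,b}$ splits into at most $(q-1)^{i-1}q^{(i-1)(b-1)}$ classes according to the first $(i-1)b+1$ entries (there are $i-1$ checkpoints each avoiding $x_1$ and $(i-1)(b-1)$ free positions), each class has size at most that of the largest one $\underline{\mathcal{U}^{x_1,i,b}}$, and deleting the common prefix is a bijection, so $|\underline{\mathcal{U}^{x_1,i,b}}|=|\overline{\mathcal{U}^{x_1,i,b}}|$.

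I would then argue by contradiction. By Observation~\ref{obs:ins_int}, $|\mathcal{U}|=\sum_{i=1}^{t+1}|\mathcal{U}^{x_1,i,b}|$. Suppose that for every $i\in[1,t+1]$ one had $|\overline{\mathcal{U}^{x_1,i,b}}|\le N_{q,b}^+(n-1,t-i+1)$. Combining this with the counting bound and reindexing by $j=i-1$ gives $|\mathcal{U}|\le\sum_{j=0}^{t}(q-1)^j q^{j(b-1)}N_{q,b}^+(n-1,t-j)$. The top term $j=t$ carries the factor $N_{q,b}^+(n-1,0)=\Delta_{q,b}(n-1,0)=0$, so it vanishes, and by Theorem~\ref{thm:size_int_ins} the remaining sum is exactly the right-hand side of the recursion~(\ref{eq:ins_int_rec}), namely $\Delta_{q,b}(n,t)=N_{q,b}^+(n,t)$. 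This contradicts $|\mathcal{U}|\ge N_{q,b}^+(n,t)+1$, so some $j\in[0,t]$ must satisfy $|\overline{\mathcal{U}^{x_1,j+1,b}}|\ge N_{q,b}^+(n-1,t-j)+1$. For the final assertion, when $j=t$ the containment gives $\overline{\mathcal{U}^{x_1,t+1,b}}\subseteq\mathcal{I}_{0,b}(\boldsymbol{x}_{[2,n]})=\{x_2\cdots x_n\}$, and since its size is at least $N_{q,b}^+(n-1,0)+1=1$ it must equal $\{x_2\cdots x_n\}$.

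The structural containment and the pigeonhole estimate are routine; the only delicate point is making the counting factor $(q-1)^{i-1}q^{(i-1)(b-1)}$ align exactly with the coefficients of the recursion~(\ref{eq:ins_int_rec}), together with the observation that the top-index term vanishes because radius-$0$ balls intersect trivially. Getting these bookkeeping details to match so that the telescoped sum collapses precisely to $N_{q,b}^+(n,t)$ is where I would focus the care.
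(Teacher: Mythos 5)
Your proposal is correct and follows essentially the same route as the paper: the containment $\mathcal{U}^{x_1,i,b}\subseteq \big((\Sigma_q\setminus\{x_1\})\circ\Sigma_q^{b-1}\big)^{i-1}\circ x_1\circ\mathcal{I}_{t-i+1,b}(\boldsymbol{x}_{[2,n]})$ via Claim~\ref{cla:ins}, the partition $|\mathcal{U}|=\sum_{i=1}^{t+1}|\mathcal{U}^{x_1,i,b}|$ from Observation~\ref{obs:ins_int}, and the pigeonhole against the recursion~(\ref{eq:ins_int_rec}). Your explicit note that the top-index term vanishes because $N_{q,b}^+(n-1,0)=0$ is a detail the paper leaves implicit, but the argument is the same.
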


\begin{IEEEproof}
    Let $\mathcal{T} := \big(\Sigma_q \setminus \{x_1\}\big) \circ \Sigma_q^{b-1}$. 
    For any $i\in [1,t+1]$, it can be easily checked that
    \begin{align*}
      \mathcal{U}^{x_1,i,b}
      &\subseteq \underbrace{\mathcal{T} \circ \mathcal{T} \circ \cdots \circ \mathcal{T}}_{i-1} \circ~ x_1 \circ \mathcal{I}_{t-i+1,b}(\boldsymbol{x}_{[2,n]}).
    \end{align*}
    This implies that $\overline{\mathcal{U}^{x_1,i,b}}\subseteq \mathcal{I}_{t-i+1,b}(\boldsymbol{x}_{[2,n]})$ for $i\in [1,t+1]$.
    Since by Observation \ref{obs:ins_int} we have $|\mathcal{U}|= \sum_{i=1}^{t+1} |\mathcal{U}^{x_1,i,b}|$, by Theorem \ref{thm:size_int_ins} and Equation (\ref{eq:ins_int_rec}), there exists some $j\in [0,t]$ such that $|\mathcal{U}^{x_1,j+1,b}|\geq (q-1)^{j} q^{j(b-1)} N_{q,b}^+(n-1,t-j)+1$.
    It then follows by the definition of $\overline{\mathcal{U}^{x_1,j+1,b}}$ and the pigeonhole principle that $|\overline{\mathcal{U}^{x_1,j+1,b}}|\geq N_{q,b}^+(n-1,t-j)+1$, thereby completing the proof.
\end{IEEEproof}

Now we are ready to present our main contribution in this subsection.

\begin{theorem}
  Let $n,t\geq 1$ and $q,b\geq 2$. Given $\boldsymbol{x}\in \Sigma_q^n$ and $\mathcal{U}\subseteq \mathcal{I}_{t,b}(\boldsymbol{x})$ with $|\mathcal{U}|=N\geq N_{q,b}^+(n,t)+1$, Algorithm \ref{alg:ins} outputs the correct $\boldsymbol{x}$ and runs in $O_{q,t,b}(nN)$ time.
\end{theorem}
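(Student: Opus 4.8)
**The plan is to prove correctness and the runtime bound separately, with correctness following from the loop-invariant that is exactly captured by Lemmas \ref{lem:ins_x1} and \ref{lem:ins_number}.**

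The plan is to argue by induction on $n+t$ that Algorithm \ref{alg:ins} outputs the correct $\boldsymbol{x}$, establishing at each iteration a loop invariant: at the start of an iteration with current parameters $(n,t)$, the working set $\mathcal{U}$ satisfies $\mathcal{U}\subseteq \mathcal{I}_{t,b}(\boldsymbol{x}')$ for the suffix $\boldsymbol{x}'$ that remains to be recovered, and $|\mathcal{U}|\geq N_{q,b}^+(n,t)+1$. The base cases are $n=1$ and $t=0$. For $t=0$, no insertions occurred, so $\mathcal{U}$ contains only $\boldsymbol{x}'$ itself and the remaining entries are read off directly. For $n=1$, a single application of Lemma \ref{lem:ins_x1} recovers $x_1$ and the loop terminates.

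For the inductive step, I would verify the three substantive lines of the loop body in order. First, the line selecting $\beta$ recovers $x_i$ correctly: by Lemma \ref{lem:ins_x1}, since $|\mathcal{U}|\geq N_{q,b}^+(n-i+1,t)+1$, for every $\alpha\neq x_1'$ (the true leading symbol) we have $|\mathcal{U}^{\alpha\succ x_1',b}|\leq\frac{1}{2}N_{q,b}^+<|\mathcal{U}^{x_1'\succ\alpha,b}|$, so the strict inequality $|\mathcal{U}^{\alpha\succ\beta,b}|<|\mathcal{U}^{\beta\succ\alpha,b}|$ for all $\alpha\neq\beta$ is met uniquely by $\beta=x_1'$, hence $x_i\leftarrow\beta$ is correct. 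Second, the choice of $j$ and the update $\mathcal{U}\leftarrow\overline{\mathcal{U}^{x_i,j+1,b}}$ are justified by Lemma \ref{lem:ins_number}: such a $j\in[0,t]$ exists, $\overline{\mathcal{U}^{x_i,j+1,b}}\subseteq\mathcal{I}_{t-j,b}(\boldsymbol{x}'_{[2,n]})$, and $|\overline{\mathcal{U}^{x_i,j+1,b}}|\geq N_{q,b}^+(n-i,t-j)+1$, which is precisely the loop invariant re-established for the updated parameters $(n-i,t-j)$. Third, the branch $j=t$ is handled by the ``in particular'' clause of Lemma \ref{lem:ins_number}, which guarantees $\overline{\mathcal{U}^{x_i,j+1,b}}$ is the singleton $\{x_{i+1}\cdots x_n\}$, so the remaining suffix is output directly and the loop halts. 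Since each iteration strictly decreases $n$ (as $i$ increments) while keeping $t$ nonincreasing, the algorithm terminates, and by the induction hypothesis applied to the reduced instance $(\boldsymbol{x}'_{[2,n]},t-j)$, the recovered suffix is correct; concatenating with $x_i$ gives the correct $\boldsymbol{x}$.

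For the complexity, I would bound the cost of a single iteration and sum over the at most $n$ iterations. Representing $\mathcal{U}$ as a matrix whose $N$ columns are the received length-$(n+bt)$ sequences, every operation in the loop body—computing the counts $|\mathcal{U}^{\alpha\succ\beta,b}|$ for all $q$ symbols, partitioning into the sets $\mathcal{U}^{x_i,i',b}$, locating the largest index $j$ meeting the threshold from Lemma \ref{lem:ins_number}, extracting the densest subset $\underline{\mathcal{U}^{x_i,j+1,b}}$, and deleting prefixes to form $\overline{\mathcal{U}^{x_i,j+1,b}}$—requires only scanning the first $tb+1$ entries (at positions $1,b+1,\ldots,tb+1$) of each of the at most $N$ columns, together with bookkeeping that depends only on $q,t,b$. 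Thus each iteration costs $O_{q,t,b}(N)$, and over at most $n$ iterations the total is $O_{q,t,b}(nN)$.

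The main obstacle I expect is the bookkeeping in Lemma \ref{lem:ins_number}'s threshold step: one must confirm that the pigeonhole passage from $|\mathcal{U}^{x_i,j+1,b}|\geq(q-1)^j q^{j(b-1)}N_{q,b}^+(n-i,t-j)+1$ to $|\overline{\mathcal{U}^{x_i,j+1,b}}|\geq N_{q,b}^+(n-i,t-j)+1$ interacts correctly with the extraction of the densest among the $(q-1)^j q^{j(b-1)}$ prefix-classes, and that the existence of a valid $j$ never fails—both of which are already secured by Lemma \ref{lem:ins_number}, so the remaining work is mainly a careful matching of the algorithm's indices against the lemmas' hypotheses.
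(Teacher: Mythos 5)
Your proposal is correct and follows essentially the same route as the paper: correctness is derived from Lemmas \ref{lem:ins_x1} and \ref{lem:ins_number} exactly as the paper does (the paper simply states this as ``clear'' where you spell out the loop invariant and the uniqueness of $\beta$), and the runtime analysis --- $O_{q,t,b}(N)$ per iteration from scanning only the relevant $tb+1$ rows, times at most $n$ iterations --- is identical. No gaps.
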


\begin{IEEEproof}
Clearly, the correctness of Algorithm \ref{alg:ins} is supported by Lemmas \ref{lem:ins_x1} and \ref{lem:ins_number}. 
In each iteration of the algorithm, specifically in Steps 3, 6, and 7, we need to calculate $|\mathcal{U}^{\alpha \succ \beta, b}|$ for $\alpha \neq \beta \in \Sigma_q$, compute $|\mathcal{U}^{x_i, j+1, b}|$, and determine $\overline{\mathcal{U}^{x_i, j+1, b}}$, respectively, from a matrix with $(tb+1)$ rows and at most $N$ columns. This process runs in $O_{q,t,b}(N)$ time. 
Since the algorithm iterates at most $n$ times, the total complexity is $O_{q,t,b}(nN)$, thereby completing the proof.

\end{IEEEproof}

\section{Burst-Deletion Channels}\label{sec:del}

In this section, we will address Problem \ref{pro:del} for $b\geq 2$ and $q=2$.
It should be noted that most results are proven for $q\geq 2$.
We begin with a few easily verifiable statements regarding bursts of deletions.

\begin{claim}\label{cla:del}
Assume $n\geq tb+1$ where $t,b\geq 1$, we consider the $b$-burst-deletion ball of radius $t$ centered at $\boldsymbol{x}\in \Sigma_q^n$, i.e.,  $\mathcal{D}_{t,b}(\boldsymbol{x})$, and the following statements are true:
\begin{itemize}
    \item $\mathcal{D}_{t,b}(\boldsymbol{x}) \subseteq \mathcal{D}_{t+1,b}(\boldsymbol{y})$ for any $\boldsymbol{y} \in \mathcal{I}_{1,b}(\boldsymbol{x})$;

    
    \item $\mathcal{D}_{t,b}(\boldsymbol{x})= \mathcal{D}_{t,b}(\boldsymbol{x})^0 \sqcup \mathcal{D}_{t,b}(\boldsymbol{x})^1 \sqcup \cdots \sqcup \mathcal{D}_{t,b}(\boldsymbol{x})^{q-1}$ and $\mathcal{D}_{t,b}(\boldsymbol{x})^{\alpha}= \alpha \circ \mathcal{D}_{t-i,b}(\boldsymbol{x}_{[ib+2,n]})$ for any $\alpha\in \Sigma_q$, where $i\geq 0$ represents the smallest index such that $x_{ib+1}=\alpha$;

    \item $\mathcal{D}_{t,b}(\boldsymbol{x})= \mathcal{D}_{t,b}(\boldsymbol{x})_0 \sqcup \mathcal{D}_{t,b}(\boldsymbol{x})_1 \sqcup \cdots \sqcup \mathcal{D}_{t,b}(\boldsymbol{x})_{q-1}$ and $\mathcal{D}_{t,b}(\boldsymbol{x})_{\alpha}= \mathcal{D}_{t-i,b}(\boldsymbol{x}_{[1,n-ib-2]})\circ \alpha$ for any $\alpha\in \Sigma_q$, where $i\geq 0$ represents the smallest index such that $x_{n-ib-1}=\alpha$.
\end{itemize}
\end{claim}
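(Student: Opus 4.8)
The plan is to handle the three bullets in order, with the first-symbol partition (the second bullet) carrying essentially all of the technical content; the other two reduce to an insertion--deletion duality and a mirror-image argument.

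For the first bullet I would use the standard duality between a burst-insertion and its inverse burst-deletion. If $\boldsymbol{y}\in\mathcal{I}_{1,b}(\boldsymbol{x})$, then $\boldsymbol{y}$ is obtained from $\boldsymbol{x}$ by inserting a single length-$b$ burst, and deleting exactly that inserted burst from $\boldsymbol{y}$ returns $\boldsymbol{x}$; hence $\boldsymbol{x}\in\mathcal{D}_{1,b}(\boldsymbol{y})$. Then for any $\boldsymbol{z}\in\mathcal{D}_{t,b}(\boldsymbol{x})$ I would realize $\boldsymbol{z}$ from $\boldsymbol{y}$ by first performing the one burst-deletion $\boldsymbol{y}\to\boldsymbol{x}$ and then the $t$ burst-deletions witnessing $\boldsymbol{z}\in\mathcal{D}_{t,b}(\boldsymbol{x})$, giving $t+1$ burst-deletions in total, so $\boldsymbol{z}\in\mathcal{D}_{t+1,b}(\boldsymbol{y})$. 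This uses only the transitivity $\mathcal{D}_{t,b}\big(\mathcal{D}_{1,b}(\boldsymbol{y})\big)\subseteq\mathcal{D}_{t+1,b}(\boldsymbol{y})$, which is immediate from the definition of iterated burst-deletion.

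The heart of the matter is a structural lemma about which positions of $\boldsymbol{x}$ can be removed by $t$ burst-deletions. I would prove, by induction on the number of deletions applied, that the set $D\subseteq[1,n]$ of positions deleted (read back in the original $\boldsymbol{x}$) is always a disjoint union of maximal runs, each of length a multiple of $b$. The inductive step is the delicate point and I expect it to be the main obstacle: a later burst removes $b$ positions that are consecutive \emph{among the currently surviving} positions, so when mapped back to $\boldsymbol{x}$ these positions $p_1<\cdots<p_b$ may be separated by previously deleted runs. The key observation is that each gap $(p_j,p_{j+1})$ is filled entirely by a single existing maximal run of $D$ (of length a multiple of $b$ by the hypothesis), so after the deletion the whole interval $[p_1,p_b]$ becomes deleted, of length $b+(\text{multiples of }b)$, again a multiple of $b$, and any further merging with runs abutting $p_1$ or $p_b$ preserves this. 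I would record the resulting clean characterization: $\boldsymbol{z}\in\mathcal{D}_{t,b}(\boldsymbol{x})$ if and only if $\boldsymbol{z}$ arises from $\boldsymbol{x}$ by deleting a union of runs whose lengths are multiples of $b$ and sum to $tb$. With this, the second bullet follows: the partition $\mathcal{D}_{t,b}(\boldsymbol{x})=\bigsqcup_{\alpha}\mathcal{D}_{t,b}(\boldsymbol{x})^{\alpha}$ is just grouping by first symbol, and for $\mathcal{D}_{t,b}(\boldsymbol{x})^{\alpha}$ the leading deleted run has length $jb$ for some $j\in[0,t]$, so the first surviving position is $jb+1$, this run consumes exactly $j$ bursts, and the suffix contributes an element of $\mathcal{D}_{t-j,b}(\boldsymbol{x}_{[jb+2,n]})$; thus $\mathcal{D}_{t,b}(\boldsymbol{x})^{\alpha}=\bigcup_{j:\,x_{jb+1}=\alpha}\alpha\circ\mathcal{D}_{t-j,b}(\boldsymbol{x}_{[jb+2,n]})$. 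Finally I would collapse this union to its smallest index $i$: for $j>i$ with $x_{ib+1}=x_{jb+1}=\alpha$, deleting the length-$(j-i)b$ prefix $\boldsymbol{x}_{[ib+2,jb+1]}$ as $j-i$ aligned bursts shows $\mathcal{D}_{t-j,b}(\boldsymbol{x}_{[jb+2,n]})\subseteq\mathcal{D}_{t-i,b}(\boldsymbol{x}_{[ib+2,n]})$, leaving only the $i$-term.

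For the third bullet I would apply the second bullet to the reversed sequence $\boldsymbol{x}^{R}$, since reversal is a bijection that intertwines $\mathcal{D}_{t,b}$ with itself and interchanges leading and trailing symbols. Translating the first-symbol statement back through the reversal yields the trailing-run statement: the last surviving symbol sits at position $n-ib$, the preceding part ranges over $\mathcal{D}_{t-i,b}(\boldsymbol{x}_{[1,n-ib-1]})$, and $i$ is the smallest index with $x_{n-ib}=\alpha$; the output length $n-tb$ confirms the index bookkeeping. I expect no new difficulty here beyond this careful index translation.
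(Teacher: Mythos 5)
The paper states Claim~\ref{cla:del} without proof, introducing it as an ``easily verifiable'' statement, so there is no in-paper argument to compare against; your proposal supplies the missing verification and is correct. The first bullet's composition argument and the third bullet's reversal argument are both sound. The real content is your structural lemma --- that the set of positions removed from $\boldsymbol{x}$ by $t$ $b$-burst-deletions is a union of maximal runs whose lengths are multiples of $b$ summing to $tb$, and conversely --- and you correctly identify and resolve the one delicate point in its inductive step: a later burst deletes $b$ positions that are consecutive only among the survivors, but each gap between consecutive survivors is exactly one previously-deleted maximal run (it is bounded on both sides by surviving positions), so the merged interval again has length a multiple of $b$, as does any further merging at its endpoints. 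The collapse of the union $\bigcup_{j}\alpha\circ\mathcal{D}_{t-j,b}(\boldsymbol{x}_{[jb+2,n]})$ to the single term at the smallest index $i$, via deleting $\boldsymbol{x}_{[ib+2,jb+1]}$ as $j-i$ aligned bursts, is also right. One remark: your version of the third bullet (condition $x_{n-ib}=\alpha$, prefix $\boldsymbol{x}_{[1,n-ib-1]}$) differs from the claim as printed (condition $x_{n-ib-1}=\alpha$, prefix $\boldsymbol{x}_{[1,n-ib-2]}$); yours is the correct mirror image of the second bullet and is the version the paper actually invokes in the proof of Lemma~\ref{lem:del_rec}, where it uses $\mathcal{D}_{t-i,b}(\boldsymbol{X}_{n-ib-1,q,b}^{0})\circ\sigma_i$, so the printed indices are off by one and your bookkeeping is the right one.
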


\subsection{The Size of a Radius-$t$ $b$-Burst-Deletion Ball}\label{subsec:del_ball}

When $b=1$ (the case of the deletion ball), it is well known that the value of $|\mathcal{D}_{t,1}(\boldsymbol{x})|$ depends on the choice of $\boldsymbol{x}$ and is strongly correlated with the number of runs in $\boldsymbol{x}$.
A \textbf{run} in $\boldsymbol{x}$ is a substring of $\boldsymbol{x}$ consisting of the same symbol with maximal length.
Intuitively, the size of $\mathcal{D}_{t,b}(\boldsymbol{x})$ also depends on the choice of $\boldsymbol{x}$.
This leads to the natural question: what factors determine the value of $|\mathcal{D}_{t,b}(\boldsymbol{x})|$ for arbitrary $b\geq 2$?
We first consider the simplest case where $t=1$ to identify the quantity that determines the size of $\mathcal{D}_{1,b}(\boldsymbol{x})$.
Subsequently, we will demonstrate that this quantity is also strongly connected to the maximum size of a $b$-burst-deletion ball of radius $t$ for arbitrary $t\geq 2$.

When $t=1$, Levenshtein \cite{Levenshtein-70-BD} derived an expression of $|\mathcal{D}_{1,b}(\boldsymbol{x})|$ by viewing $\boldsymbol{x}$ as an array with $b$ rows and using the number of runs in each row, which we present as follows.

\begin{lemma}\cite{Levenshtein-70-BD}\label{lem:Levenshtein}
    Assume $n\geq b+1$ with $b\geq 1$. For any $\boldsymbol{x}\in \Sigma_q^n$, we define the \textbf{$b\times \lceil \frac{n}{b} \rceil$ array representation} of $\boldsymbol{x}$ by $A_b(\boldsymbol{x})$, which can be obtained from $\boldsymbol{x}$ by arranging it in a $b\times \lceil \frac{n}{b} \rceil$ array column by column.
    Note that the last column may not be fully filled, we will fill in the blank space by repeating the last symbol in the corresponding row.
    For $i\in [1,b]$, let $A_{b,i}(\boldsymbol{x})$ be the $i$-th row of $A_b(\boldsymbol{x})$ and let $r(A_{b,i}(\boldsymbol{x}))$ be the number of runs in $A_{b,i}(\boldsymbol{x})$, then \begin{align*}
        |\mathcal{D}_{1,b}(\boldsymbol{x})|= 1+ \sum_{i=1}^b \Big(r\big(A_{b,i}(\boldsymbol{x})\big)-1\Big).
    \end{align*}
\end{lemma}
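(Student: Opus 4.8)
The plan is to prove Lemma~\ref{lem:Levenshtein} by reducing a single $b$-burst-deletion to $b$ independent single deletions performed row-wise on the array representation $A_b(\boldsymbol{x})$, and then counting the resulting images via the classical single-deletion formula. The key observation is that a $b$-burst-deletion at position $i$ removes the substring $x_i x_{i+1} \cdots x_{i+b-1}$, i.e.\ exactly one symbol from each of the $b$ rows of $A_b(\boldsymbol{x})$, and moreover removes them from the \emph{same} column (the column containing index $i$, and for indices wrapping across a column boundary, one consecutive symbol per row). So the plan is first to establish the bijective correspondence: each sequence $\boldsymbol{y}\in\mathcal{D}_{1,b}(\boldsymbol{x})$ corresponds to a tuple $(\boldsymbol{z}_1,\ldots,\boldsymbol{z}_b)$ where $\boldsymbol{z}_\ell\in\mathcal{D}_{1,1}(A_{b,\ell}(\boldsymbol{x}))$ is obtained by deleting a single symbol from row $\ell$, with the constraint that all $b$ deletions occur at the \emph{same column position}. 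The array $A_b(\boldsymbol{y})$ is then obtained by applying these synchronized single-row deletions to $A_b(\boldsymbol{x})$.

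The main technical content is to show that the number of distinct such synchronized images equals $1+\sum_{\ell=1}^{b}\bigl(r(A_{b,\ell}(\boldsymbol{x}))-1\bigr)$. The plan is to invoke the standard single-deletion counting principle: for a single sequence $\boldsymbol{u}$, the distinct images under one deletion are in bijection with the \emph{runs} of $\boldsymbol{u}$, giving $r(\boldsymbol{u})$ distinct images, and two deletions produce the same image precisely when they delete symbols within the same run. Here the subtlety is the synchronization constraint: we delete from one common column across all $b$ rows simultaneously. I would argue that two column choices $c$ and $c'$ (with $c<c'$) yield the same burst-deletion image if and only if, in \emph{every} row $\ell$, positions $c$ and $c'$ lie within the same run of $A_{b,\ell}(\boldsymbol{x})$; equivalently, columns $c, c+1, \ldots, c'$ are constant in each row, meaning the whole block of columns from $c$ to $c'$ is a repetition of a single column vector. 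This lets me count equivalence classes of column positions.

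The counting step is then an inclusion principle over runs. Deleting from column $c$ versus column $c+1$ gives the same image exactly when column $c$ equals column $c+1$ as vectors, i.e.\ when \emph{no} row has a run-boundary between columns $c$ and $c+1$. Hence the number of distinct images is $1$ plus the number of adjacent column-pairs $(c,c+1)$ at which \emph{at least one} row changes symbol. The total count of row-changes, summed over all rows, is $\sum_{\ell=1}^b (r(A_{b,\ell}(\boldsymbol{x}))-1)$; the plan is to show that distinct column boundaries in different rows never coincide in a way that would cause overcounting, so that these row-change events are in bijection with the distinct-image-creating boundaries. This requires arguing that the map from ``columns modulo the equivalence of yielding equal images'' to the set of realized image vectors is injective, and that each row-boundary contributes a genuinely new class.

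The hardest part will be handling the boundary effects introduced by the padding of the last (possibly incomplete) column and by burst-deletions straddling two columns. A burst at a position not aligned to a column start deletes one symbol from each row but from two \emph{adjacent} columns (a ``staircase'' cut), so I must verify that these straddling cuts produce images already accounted for by column-aligned cuts, and that the padding convention (repeating the last symbol) does not create spurious runs that inflate the count. The plan is to treat the straddling-cut images by a shifting argument showing each equals some aligned-cut image, and to verify directly that the padding symbols, being repetitions, add no new run-boundaries beyond those already present in $\boldsymbol{x}$, so the formula with $r(A_{b,\ell}(\boldsymbol{x}))$ counted on the padded array agrees with the true count. Since this lemma is attributed to Levenshtein~\cite{Levenshtein-70-BD}, I expect the cleanest route is to cite the original argument and give the array-synchronization reduction above as the conceptual proof sketch.
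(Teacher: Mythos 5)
The paper offers no proof of this lemma at all: it is imported verbatim from Levenshtein's 1970 paper, and the authors then rederive the equivalent identity $|\mathcal{D}_{1,b}(\boldsymbol{x})| = 1+|\{j\in[b+1,n]: x_j\neq x_{j-b}\}|$ (Equation~\eqref{eq:del_ball}) by a different route, namely Lemma~\ref{lem:del_ball}, which puts the distinct images in bijection with the $b$-runs of $\boldsymbol{x}$. Your proposal attempts a direct array-based proof, but its central counting step has a genuine gap.

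Concretely: you assert that the straddling (``staircase'') cuts yield images already produced by column-aligned cuts, you count the distinct aligned-cut images as $1$ plus the number of adjacent column pairs at which \emph{at least one} row changes, and you then hope to equate this with $\sum_{\ell}\bigl(r(A_{b,\ell}(\boldsymbol{x}))-1\bigr)$ by arguing that run boundaries in different rows never coincide at the same column pair. All three claims fail on $\boldsymbol{x}=0011$ with $b=2$: the two aligned cuts give $11$ and $00$, while the staircase cut at position $2$ gives the genuinely new image $01$, so $|\mathcal{D}_{1,2}(0011)|=3$. Both rows of $A_2(\boldsymbol{x})$ equal $01$, so both rows change at the unique column boundary; your count $1+\#\{\text{column pairs with at least one change}\}=2$ undercounts, whereas the lemma's $1+\sum_{\ell}(r_\ell-1)=3$ is correct. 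The difference between ``at least one row changes'' and ``total changes summed over rows'' is precisely the contribution of the staircase cuts you discarded. A correct argument must characterize when two cuts at positions $i<i'$ (aligned or not) give the same image --- namely iff $x_j=x_{j-b}$ for all $j\in[i+b,i'+b-1]$, which is Lemma~\ref{lem:del_ball} --- so that the images are in bijection with the $b$-runs of $\boldsymbol{x}$, whose number is $1+|\{j\in[b+1,n]: x_j\neq x_{j-b}\}|=1+\sum_{\ell=1}^{b}\bigl(r(A_{b,\ell}(\boldsymbol{x}))-1\bigr)$; your padding discussion is fine, but the bijection you actually need is this one, not the column-synchronized one you set up.
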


Observe that for any $\boldsymbol{x}\in \Sigma_q^n$, we have $r(\boldsymbol{x})= 1+ |\{j \in [2,n]: x_j\neq x_{j-1}\}|$. Then by Lemma \ref{lem:Levenshtein}, we can compute 
\begin{equation}\label{eq:del_ball}
    \begin{aligned}
    |\mathcal{D}_{1,b}(\boldsymbol{x})|
    &= 1+ \sum_{i=1}^b |\{ j\in [2,\lceil n/b \rceil]: x_{(j-1)b+i}\neq x_{(j-2)b+i} \}|\\
    &= 1+ |\{j\in [b+1,n]: x_j\neq x_{j-b}\}|.
\end{aligned}
\end{equation}
Indeed, Equation (\ref{eq:del_ball}) generalizes the expression for the case where $b=1$.
To this end, we introduce the concept of a $b$-run in a sequence, which serves as a natural generalization of a run.
Assume $n\geq b$, we say that $\boldsymbol{x}$ is \textbf{$b$-run sequence} if $x_i=x_{i-b}$ for any $i\in [b+1,n]$.
Furthermore, for any interval $\mathcal{K}\subseteq [1,n]$ with $|\mathcal{K}|\geq b$, we say that $\boldsymbol{x}_{\mathcal{K}}$ is a \textbf{$b$-run} of $\boldsymbol{x}$ if there is no interval $ \mathcal{K}' \supset \mathcal{K}$ such that $\boldsymbol{x}_{\mathcal{K}'}$ is a $b$-run sequence. 
From this perspective, we can provide an alternative proof for Equation (\ref{eq:del_ball}).

\begin{lemma}\cite[Claim 3.1]{Sun-23-IT-BDR}\label{lem:del_ball}
    Let $\mathcal{K}= [k_1,k_2] \subseteq [1,n]$ with $k_2-k_1\geq b\geq 1$ and $n\geq b+1$.
    For any $\boldsymbol{x}\in \Sigma_q^n$, $\boldsymbol{x}_{\mathcal{K}}$ is $b$-run of $\boldsymbol{x}$ if and only if $\boldsymbol{x}_{[1,n]\setminus [k,k+b-1]}= \boldsymbol{x}_{[1,n]\setminus [k',k'+b-1]}$ for any $k,k'\in [k_1,k_2-b+1]$.
\end{lemma}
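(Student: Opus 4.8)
The plan is to prove the stated equivalence through a single explicit position-by-position comparison of two burst-deletion outputs, after reducing the quantifier ``for any $k,k'$'' to consecutive start positions. Since equality of sequences is symmetric and transitive, the condition that $\boldsymbol{x}_{[1,n]\setminus[k,k+b-1]}$ is the same sequence for all $k\in[k_1,k_2-b+1]$ holds if and only if it holds for every consecutive pair $k,k+1$ in this range. Note that the hypothesis $k_2-k_1\geq b$ guarantees $[k_1,k_2-b+1]$ contains at least two elements, so at least one such pair exists and the range is nonempty.

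First I would establish the key identity: for any $k$ with $k,k+1\in[k_1,k_2-b+1]$,
\[
\boldsymbol{x}_{[1,n]\setminus[k,k+b-1]}=\boldsymbol{x}_{[1,n]\setminus[k+1,k+b]}\iff x_k=x_{k+b}.
\]
To see this, I would write both outputs explicitly: deleting the burst $[k,k+b-1]$ keeps the indices $[1,k-1]\cup[k+b,n]$ and yields $x_1\cdots x_{k-1}x_{k+b}x_{k+b+1}\cdots x_n$, while deleting $[k+1,k+b]$ keeps $[1,k]\cup[k+b+1,n]$ and yields $x_1\cdots x_{k-1}x_kx_{k+b+1}\cdots x_n$. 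Both have length $n-b$. Comparing them coordinate by coordinate, they agree on the prefix indexed by $[1,k-1]$ and on the suffix $x_{k+b+1}\cdots x_n$; the only coordinate where they can differ is position $k$, which carries $x_{k+b}$ in the first output and $x_k$ in the second. Hence the two outputs coincide exactly when $x_k=x_{k+b}$.

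Chaining these equivalences over all consecutive pairs, the deletion-invariance condition is equivalent to $x_k=x_{k+b}$ for every $k\in[k_1,k_2-b]$, that is, to $x_i=x_{i-b}$ for every $i\in[k_1+b,k_2]$. This last condition is exactly the statement that $\boldsymbol{x}_{\mathcal{K}}$ is a $b$-run sequence, i.e.\ by the definition above that $\boldsymbol{x}_{\mathcal{K}}$ is a $b$-run of $\boldsymbol{x}$, and since the whole argument is a chain of ``if and only if'' steps, both directions of the claimed equivalence are obtained simultaneously.

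The main obstacle I anticipate is not conceptual but lies in the careful index bookkeeping of the explicit comparison: verifying the alignment of the two truncated sequences (that the $j$-th entry of each output is $x_j$ for $j\leq k-1$ and $x_{j+b}$ for $j\geq k+1$, with the single discrepancy at $j=k$) and correctly treating the degenerate boundary cases, namely an empty prefix when $k=1$ and an empty suffix when $k+b=n$. The standing assumptions $n\geq b+1$ and $k_2-k_1\geq b$ are precisely what keep all the relevant index ranges nonempty and the burst-start positions $k\in[k_1,k_2-b+1]$ well-defined.
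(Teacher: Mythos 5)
The paper does not actually prove this lemma --- it is imported verbatim as \cite[Claim 3.1]{Sun-23-IT-BDR} --- so there is no in-paper argument to compare against. Your computational core is correct and is surely the intended one: reducing the universally quantified condition to consecutive pairs $k,k+1$ via transitivity of equality, and then checking that the two outputs $x_1\cdots x_{k-1}\circ x_{k+b}\cdots x_n$ and $x_1\cdots x_k \circ x_{k+b+1}\cdots x_n$ agree everywhere except possibly at coordinate $k$, where they carry $x_{k+b}$ and $x_k$ respectively. The index bookkeeping you flag as the main risk is in fact fine, and the observation that $k_2-k_1\geq b$ makes $[k_1,k_2-b+1]$ contain at least two elements is the right sanity check.

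The one genuine gap is your final sentence, where you equate ``$\boldsymbol{x}_{\mathcal{K}}$ is a $b$-run sequence'' with ``$\boldsymbol{x}_{\mathcal{K}}$ is a $b$-run of $\boldsymbol{x}$.'' Under the paper's definition these are not the same: a $b$-run additionally requires maximality, namely that no interval $\mathcal{K}'\supset\mathcal{K}$ yields a $b$-run sequence. Your chain of equivalences establishes exactly that the deletion-invariance condition on $[k_1,k_2-b+1]$ is equivalent to $x_i=x_{i-b}$ for all $i\in[k_1+b,k_2]$, i.e.\ to the $b$-run-\emph{sequence} property of $\boldsymbol{x}_{\mathcal{K}}$, and this does not imply maximality. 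Indeed, with maximality included the ``if'' direction of the lemma as literally stated is false: take $\boldsymbol{x}=0^n$, $b=1$, and $\mathcal{K}=[1,2]$ with $n\geq 3$; all single deletions starting in $[1,2]$ produce $0^{n-1}$, yet $\boldsymbol{x}_{[1,2]}$ is not a maximal run. So either the lemma is to be read with ``$b$-run sequence'' in place of ``$b$-run'' (in which case your proof is complete), or the statement itself is imprecise and no proof of the literal claim exists; in either case you should not paper over the distinction with an ``i.e.''. A clean fix is to prove the equivalence with the $b$-run-sequence property, which is all that the paper's subsequent use (counting maximal such intervals in Lemma~\ref{lem:del_ball_t=1}) actually requires, and to remark explicitly that maximality is a separate condition not captured by the displayed equivalence.
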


\begin{lemma}\label{lem:del_ball_t=1}
    For any $n\geq b+1$ with $b\geq 1$, we have $|\mathcal{D}_{1,b}(\boldsymbol{x})|
    = 1+ |\{j\in [b+1,n]: x_j\neq x_{j-b}\}| \leq n-b+1$.
    Moreover, $|\mathcal{D}_{1,b}(\boldsymbol{x})|=n-b+1$ holds if and only if $x_j\neq x_{j-b}$ for any $j\in [b+1,n]$.
\end{lemma}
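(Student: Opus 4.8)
The plan is to prove Lemma~\ref{lem:del_ball_t=1} by combining the counting identity in Equation~(\ref{eq:del_ball}), which is already available to us, with an elementary bound on the counting set. The equality $|\mathcal{D}_{1,b}(\boldsymbol{x})|= 1+ |\{j\in [b+1,n]: x_j\neq x_{j-b}\}|$ is exactly Equation~(\ref{eq:del_ball}), so the first assertion is free and I would simply cite it. Everything then reduces to understanding the cardinality of the index set $\mathcal{J}(\boldsymbol{x}):= \{j\in [b+1,n]: x_j\neq x_{j-b}\}$.

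For the inequality, observe that $\mathcal{J}(\boldsymbol{x})\subseteq [b+1,n]$, and the interval $[b+1,n]$ contains exactly $n-b$ integers. Hence $|\mathcal{J}(\boldsymbol{x})|\leq n-b$, which immediately gives $|\mathcal{D}_{1,b}(\boldsymbol{x})|= 1+|\mathcal{J}(\boldsymbol{x})|\leq n-b+1$. This is the whole of the bound; no induction or array manipulation is needed.

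For the equality characterization, I would argue both directions through the same observation. Equality $|\mathcal{D}_{1,b}(\boldsymbol{x})|=n-b+1$ holds if and only if $|\mathcal{J}(\boldsymbol{x})|=n-b$, which, since $\mathcal{J}(\boldsymbol{x})$ is a subset of the $(n-b)$-element set $[b+1,n]$, happens precisely when $\mathcal{J}(\boldsymbol{x})=[b+1,n]$. Unwinding the definition of $\mathcal{J}(\boldsymbol{x})$, this says exactly that $x_j\neq x_{j-b}$ for every $j\in [b+1,n]$, which is the stated condition. I would present this as a short chain of equivalences so that both implications are covered simultaneously.

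There is essentially no hard part here: the content of the lemma is the identity in Equation~(\ref{eq:del_ball}) (itself derived from Lemma~\ref{lem:Levenshtein}), and the remaining work is a trivial cardinality count on a subset of a fixed interval. The only thing to be careful about is bookkeeping the index range: confirming that $[b+1,n]$ has cardinality $n-b$ (not $n-b+1$, since we count the endpoints $b+1,\ldots,n$ inclusive) and that the $+1$ in the formula accounts for the undeleted sequence's own contribution. I would keep the proof to a few lines, citing Equation~(\ref{eq:del_ball}) for the identity and then doing the subset-cardinality argument explicitly.
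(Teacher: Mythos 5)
Your proposal is correct and essentially matches the paper's proof: both reduce the lemma to the identity $|\mathcal{D}_{1,b}(\boldsymbol{x})| = 1+|\{j\in[b+1,n]: x_j\neq x_{j-b}\}|$ and then finish with the trivial observation that the index set lies in $[b+1,n]$, which has $n-b$ elements. The only cosmetic difference is that you cite Equation~(\ref{eq:del_ball}) (derived via Lemma~\ref{lem:Levenshtein}) for the identity, whereas the paper's proof invokes the $b$-run characterization of Lemma~\ref{lem:del_ball}; both justifications are already available in the text at that point.
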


\begin{IEEEproof}
    By Lemma \ref{lem:del_ball}, we know that $|\mathcal{D}_{1,b}(\boldsymbol{x})|$ counts the number of $b$-runs of $\boldsymbol{x}$, which is given by $1+ |\{j\in [b+1,n]: x_j\neq x_{j-b}\}|$.
    Then the conclusion follows.
\end{IEEEproof}

As a byproduct of Equation (\ref{eq:del_ball}), we can derive the following conclusion more simply than \cite[Lemma 8]{Schoeny-17-IT-BD} and \cite[Claim 4]{Wang-24-IT-BD}. Furthermore, our result is applicable for any $n \geq b$, whereas they only considered the case where $b \mid n$.

\begin{lemma}
    For any $n\geq b+1$ with $b\geq 1$ and $i \in [1,n-b+1]$, we have $|\{\boldsymbol{x}\in \Sigma_q^n: |\mathcal{D}_{1,b}(\boldsymbol{x})|=i \}|= q^b(q-1)^{i-1}\binom{n-b}{i-1}$.
\end{lemma}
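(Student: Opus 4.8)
The plan is to reduce the statement to a direct enumeration problem via Equation~(\ref{eq:del_ball}) and then count by an explicit constructive bijection. First I would invoke Equation~(\ref{eq:del_ball}), which asserts $|\mathcal{D}_{1,b}(\boldsymbol{x})|= 1+ |\{j\in [b+1,n]: x_j\neq x_{j-b}\}|$. Consequently the event $|\mathcal{D}_{1,b}(\boldsymbol{x})|=i$ is equivalent to the condition that exactly $i-1$ of the $n-b$ indices $j\in [b+1,n]$ satisfy $x_j\neq x_{j-b}$. Thus it suffices to count the sequences $\boldsymbol{x}\in \Sigma_q^n$ possessing precisely $i-1$ such positions, which I will call \emph{descent positions}.

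Next I would describe a sequence $\boldsymbol{x}$ by the data $(x_1\cdots x_b,\, S,\, (x_j)_{j\in S})$, where $x_1\cdots x_b\in \Sigma_q^b$ is the prefix, $S\subseteq [b+1,n]$ is the set of descent positions, and for each $j\in S$ we record the value $x_j$, which must be chosen among the $q-1$ symbols distinct from $x_{j-b}$. The crucial observation is that for every $j\in [b+1,n]$ the entry $x_j$ depends only on $x_{j-b}$, and $j-b<j$ lies strictly to the left; hence the entries can be reconstructed one at a time from left to right. For $j\notin S$ the value $x_j=x_{j-b}$ is forced, while for $j\in S$ the value is one of the $q-1$ recorded choices. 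This gives a well-defined sequence, and conversely every sequence determines its prefix, its descent set, and the values on that set uniquely, so the correspondence is a bijection onto the target set.

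Finally I would carry out the count over this parametrization: the prefix contributes $q^b$ choices, the descent set $S$ with $|S|=i-1$ contributes $\binom{n-b}{i-1}$ choices (since $|[b+1,n]|=n-b$), and the values on $S$ contribute $(q-1)^{i-1}$ choices, the positions outside $S$ being determined. Multiplying yields $q^b(q-1)^{i-1}\binom{n-b}{i-1}$, as claimed. I expect no serious obstacle here; the only point requiring care — and the heart of the argument — is verifying that the recipe is a genuine bijection, namely that each admissible triple produces a sequence with exactly $i-1$ descents and that distinct triples produce distinct sequences. Both facts follow immediately from the left-to-right reconstruction property, so the proof is essentially routine once Equation~(\ref{eq:del_ball}) is in hand.
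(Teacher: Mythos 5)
Your proposal is correct and follows essentially the same route as the paper: reduce via Equation~(\ref{eq:del_ball}) to counting sequences with exactly $i-1$ positions $j\in[b+1,n]$ where $x_j\neq x_{j-b}$, then enumerate by choosing the length-$b$ prefix, the set of such positions, and the deviating values. Your explicit verification that this parametrization is a bijection (via left-to-right reconstruction) is a slightly more careful rendering of the same counting argument the paper gives.
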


\begin{IEEEproof}
    When $|\mathcal{D}_{1,b}(\boldsymbol{x})|=i$, by Equation (\ref{eq:del_ball}), we observe the following
    \begin{itemize}
        \item the first $b$ symbols of $\boldsymbol{x}$ can be chosen arbitrarily;

        \item in the remaining $n-b$ positions, there are exactly $i-1$ choices for $j$ such that $x_j \neq x_{j-b}$;

        \item for each of these $i-1$ positions, there are $q-1$ options for the symbols;

        \item for the other positions, there is exactly one option for the symbols.
    \end{itemize}
    Consequently, the total number of possible selections for $\boldsymbol{x}$ is $q^{b}\binom{n-b}{i-1}(q-1)^{i-1}$, thereby completing the proof.

\end{IEEEproof}

Now we focus on the general parameter $t$ and examine the maximum size of a $b$-burst-deletion ball of radius $t$. When $t = 1$, the $b$-burst-deletion ball reaches its maximum size when the center contains the highest number of $b$-runs. This observation suggests that for $t \geq 2$, the error ball may also be maximized at a center with a similarly high number of $b$-runs.
In particular, for the special case where $b=1$, \cite[Theorem 1]{Hirschberg-99} proved this by taking the cyclic sequence $\boldsymbol{x}=x_1x_2\cdots x_n$, where $x_i\equiv i-1 \pmod{q}$ for $i\in [1,n]$.
To this end, we define the following $b$-cyclic sequence, whose $b$-burst-deletion ball will be shown to achieve the maximum size among all $b$-burst-deletion balls centered at sequences in $\Sigma_q^n$.

\begin{definition}
  For any $\sigma\in \Sigma_q$ and $n\geq 1$, we define $\boldsymbol{X}_{n,q,b}^{\sigma}= X_1 X_2 \cdots X_n\in \Sigma_q^n$ as a \textbf{$b$-cyclic sequence starting with $\sigma^b$} if $X_i\equiv \sigma+\lfloor (i-1)/b \rfloor \pmod{q}$ for $i\in [1,n]$. In other words, let $m= \lceil n/bq \rceil$ and $\boldsymbol{x}= 0^b \circ 1^b \circ \cdots \circ (q-1)^b$, 
  then we have $\boldsymbol{X}_{n,q,b}^{0}= (\underbrace{\boldsymbol{x}\circ \boldsymbol{x}\circ \cdots \circ \boldsymbol{x}}_{m})_{[1,n]}$ and $\boldsymbol{X}_{n,q,b}^{\sigma}= \big( \sigma^{b} \circ (\sigma+1)^{b} \circ \cdots \circ (q-1)^{b} \circ \boldsymbol{X}_{n,q,b}^{0} \big)_{[1,n]}$.
\end{definition}

\begin{example}
  Assume $n=10$, $q=3$, and $b=2$, we have $\boldsymbol{X}_{n,q,b}^{0}= 0011220011$, $\boldsymbol{X}_{n,q,b}^{1}= 1122001122$, and $\boldsymbol{X}_{n,q,b}^{2}= 2200112200$.
\end{example}

A simple but useful observation is that the size of the $b$-burst-deletion ball centered at a $b$-cyclic sequence is independent of the starting symbol and is determined solely by the length of the sequence and the size of the alphabet.

\begin{lemma}
  Assume $q\geq 2$, $t,b\geq 1$ and $n\geq bt+1$. For any $\sigma,\sigma'\in \Sigma_q$, we have $|\mathcal{D}_{t,b}(\boldsymbol{X}_{n,q,b}^{\sigma})|= |\mathcal{D}_{t,b}(\boldsymbol{X}_{n,q,b}^{\sigma'})|$.
\end{lemma}

\begin{IEEEproof}
    Consider the bijection mapping $\psi: \Sigma_q \rightarrow \Sigma_q$ defined by $\psi(\alpha)\equiv \alpha+\sigma-\sigma' \pmod{q}$ for $\alpha \in \Sigma_q$. Assume that $\boldsymbol{x}= x_1x_2\cdots x_{n-bt}$ is obtained from $\boldsymbol{X}_{n,q,b}^{\sigma}$ after $t$ bursts of deletions, each of length exactly $b$.
    We then define $\psi(\boldsymbol{x}):= \psi(x_1) \psi(x_2) \cdots \psi(x_{n-bt})$. 
    It follows that $\psi(\boldsymbol{x})$ can also be obtained from $\boldsymbol{X}_{n,q,b}^{\sigma'}$ after $t$ times of $b$-burst-deletion operations at the same positions from which $\boldsymbol{x}$ was derived from $\boldsymbol{X}_{n,q,b}^{\sigma}$.
    This implies $|\mathcal{D}_{t,b}(\boldsymbol{X}_{n,q,b}^{\sigma})|\leq  |\mathcal{D}_{t,b}(\boldsymbol{X}_{n,q,b}^{\sigma'})|$.
    Similarly, we can also obtain $|\mathcal{D}_{t,b}(\boldsymbol{X}_{n,q,b}^{\sigma})|\geq  |\mathcal{D}_{t,b}(\boldsymbol{X}_{n,q,b}^{\sigma'})|$, thereby completing the proof.
\end{IEEEproof}

As a result, we can define 
\begin{align*}
  d_{q,b}(n,t):= |\mathcal{D}_{t,b}(\boldsymbol{X}_{n,q,b}^{\sigma})|,
\end{align*}
for $\sigma\in \Sigma_q$ and $n\geq bt+1$.
Moreover, we set $d_{q,b}(n,t)= 1$ if $n=bt$ and $d_{q,b}(n,t)= 0$ if $n<bt$.

When $b=1$, the value of $d_{q,b}(n,t)$ has been established in \cite{Hirschberg-99}.

\begin{lemma}\cite[Theorem 2.6]{Hirschberg-99}\label{lem:size_b=1}
    For any $n\geq t+1$ with $t\geq 1$, we have $d_{q,1}(n,t)= \sum_{i=0}^t \binom{n-t}{i} d_{q-1,1}(t,t-i)$.
\end{lemma}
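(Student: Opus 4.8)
The plan is to compute $d_{q,1}(n,t)$ directly as the number of distinct length-$(n-t)$ subsequences of the cyclic sequence $\boldsymbol{X}:=\boldsymbol{X}_{n,q,1}^{0}=012\cdots(q-1)01\cdots$, and then match that count to the claimed sum by an elementary decomposition. Writing $m:=n-t$, the set $\mathcal{D}_{t,1}(\boldsymbol{X})$ is exactly the set of length-$m$ subsequences of $\boldsymbol{X}$ (each arises by deleting the $t$ complementary positions), so $d_{q,1}(n,t)$ is the number of such distinct words. I would then invoke the standard \emph{leftmost} (greedy) embedding criterion: a word $\boldsymbol{z}=z_1\cdots z_m\in\Sigma_q^m$ is a subsequence of $\boldsymbol{X}$ if and only if its leftmost embedding ends at a position $P(\boldsymbol{z})\leq n$, and this embedding is canonical so that each $\boldsymbol{z}$ is counted exactly once.

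The key computation is an explicit formula for $P(\boldsymbol{z})$. Because the symbols of $\boldsymbol{X}$ cycle through $0,1,\ldots,q-1$, after a symbol $s$ is matched at some position the first later occurrence of a symbol $a$ lies exactly $((a-s-1)\bmod q)+1$ positions ahead, while the first occurrence of $z_1$ is at position $z_1+1$. Hence $P(\boldsymbol{z})=(z_1+1)+\sum_{k=1}^{m-1}\big(((z_{k+1}-z_k-1)\bmod q)+1\big)=m+\sum_{k=0}^{m-1}a_k$, where $a_0:=z_1$ and $a_k:=(z_{k+1}-z_k-1)\bmod q$ for $k\in[1,m-1]$. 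The assignment $\boldsymbol{z}\mapsto(a_0,\ldots,a_{m-1})$ is a bijection from $\Sigma_q^m$ onto $\{0,\ldots,q-1\}^m$ (recover $z_{k+1}=(z_k+1+a_k)\bmod q$), so the condition $P(\boldsymbol{z})\leq n=m+t$ becomes $\sum_{k=0}^{m-1}a_k\leq t$. This yields $d_{q,1}(n,t)=\#\{(a_0,\ldots,a_{m-1})\in\{0,\ldots,q-1\}^m:\sum_k a_k\leq t\}$. Applying the same identity with $q$ replaced by $q-1$ gives $d_{q-1,1}(t,t-i)=\#\{(c_1,\ldots,c_i)\in\{0,\ldots,q-2\}^i:\sum_j c_j\leq t-i\}$, since there the tuple length is $t-(t-i)=i$ and the budget is $t-i$.

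Finally I would match the two lattice-point counts. I classify each tuple $(a_0,\ldots,a_{m-1})$ counted by $d_{q,1}(n,t)$ according to the number $i$ of nonzero coordinates: there are $\binom{m}{i}=\binom{n-t}{i}$ choices for their locations, and writing each nonzero coordinate as $1+c_j$ with $c_j\in\{0,\ldots,q-2\}$ converts the constraint $\sum_k a_k\leq t$ into $\sum_j c_j\leq t-i$. Summing over $i\in[0,t]$ (the cases $i>t$ are impossible, as each nonzero coordinate is at least $1$) reproduces exactly $\sum_{i=0}^{t}\binom{n-t}{i}d_{q-1,1}(t,t-i)$, which is the claimed formula.

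I expect the main obstacle to be the embedding analysis rather than the final decomposition: one must justify that the leftmost embedding is canonical so distinct subsequences (not embeddings) are counted, and verify that the per-step cost $((a-s-1)\bmod q)+1$ together with the initial cost $z_1+1$ remains valid across the cycle boundaries of $\boldsymbol{X}$. Once $d_{q,1}(n,t)$ is reduced to the count $\sum_k a_k\le t$ over $\{0,\ldots,q-1\}^m$, everything else is routine. An alternative would be an induction on $n+t$ via a burst-deletion recursion analogous to Claim~\ref{cla:del}, but the direct counting route above is cleaner and also explains why $d_{q-1,1}$ appears.
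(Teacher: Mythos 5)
Your proposal is correct. Note that the paper itself gives no proof of this lemma; it is imported verbatim from Hirschberg--Regnier \cite[Theorem 2.6]{Hirschberg-99}, so there is no in-paper argument to compare against. Your derivation is a sound, self-contained proof in the spirit of the classical subsequence-counting argument: for $b=1$ the radius-$t$ ball around the cyclic word $\boldsymbol{X}_{n,q,1}^{0}$ is exactly its set of length-$(n-t)$ subsequences, the leftmost embedding is canonical (any embedding dominates the greedy one coordinatewise, so greedy success within $n$ positions characterizes subsequence-hood and counts each word once), and your step-cost $\bigl((z_{k+1}-z_k-1)\bmod q\bigr)+1$ together with the initial cost $z_1+1$ is uniformly valid because $X_i=(i-1)\bmod q$ is purely periodic, with no cycle-boundary issue. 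The bijection $\boldsymbol{z}\mapsto(a_0,\ldots,a_{m-1})$ onto $\{0,\ldots,q-1\}^{m}$ reduces everything to counting tuples with $\sum_k a_k\le t$, and the classification by the number $i$ of nonzero coordinates (substituting $a=1+c$ with $c\in\{0,\ldots,q-2\}$) cleanly reproduces $\sum_{i=0}^{t}\binom{n-t}{i}\,d_{q-1,1}(t,t-i)$, including the edge cases $i=0$ and $t-i=0$ via the paper's conventions $d_{q,b}(bt,t)=1$ and $\binom{n}{i}=0$ for $n<i$. This route has the added benefit of explaining structurally why the factor $d_{q-1,1}(t,t-i)$ appears, namely as the count over the reduced alphabet $\{0,\ldots,q-2\}$ of the excess values of the nonzero coordinates.
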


In what follows, we will determine the quantity $d_{q,b}(n,t)$ for $b\geq 2$, which constitutes the main contribution of this subsection. 

\begin{theorem}\label{thm:del_ball}
    For any $n\geq bt+1$ with $t\geq 1$ and $b\geq 1$, we have $d_{q,b}(n,t)=D_{q,b}(n,t)=  \sum_{i=0}^t \binom{n-bt}{i} d_{q-1,1}(t,t-i)$, where $D_{q,b}(n,t)=\max\{|\mathcal{D}_{t,b}(\boldsymbol{x})|: \boldsymbol{x} \in \Sigma_q^n\}$.
    In particular, when $q=2$, we have $d_{2,b}(n,t)=D_{2,b}(n,t)= \sum_{i=0}^t \binom{n-bt}{i}$. 
\end{theorem}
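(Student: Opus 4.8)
The plan is to establish both asserted equalities—optimality of the $b$-cyclic center ($d_{q,b}(n,t)=D_{q,b}(n,t)$) and the closed form—by two inductions on $n+t$, both driven by the first-symbol decomposition of Claim \ref{cla:del}. Write $f_{q,b}(n,t):=\sum_{i=0}^{t}\binom{n-bt}{i}d_{q-1,1}(t,t-i)$ for the claimed value. The engine of everything is the single combinatorial identity
\begin{equation*}
  f_{q,b}(n,t)=\sum_{k=0}^{q-1} f_{q,b}(n-kb-1,\,t-k). \tag{$\star$}
\end{equation*}
I would prove $(\star)$ first. Setting $N:=n-bt$ and noting that $(n-kb-1)-b(t-k)=N-1$ is independent of $k$, identity $(\star)$ is equivalent to $h_q(N,t)=\sum_{k=0}^{q-1}h_q(N-1,t-k)$, where $h_q(N,t):=\sum_{i=0}^t\binom{N}{i}d_{q-1,1}(t,t-i)$. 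Applying Pascal's rule $\binom{N}{i}=\binom{N-1}{i}+\binom{N-1}{i-1}$ and reindexing, this reduces to the cyclic recursion $d_{Q,1}(t,s)=\sum_{\alpha=0}^{Q-1}d_{Q,1}(t-\alpha-1,s-\alpha)$ for alphabet size $Q=q-1$, which is precisely Claim \ref{cla:del} (with $b=1$) for the ordinary cyclic sequence, or an algebraic consequence of Hirschberg's formula in Lemma \ref{lem:size_b=1}. As a corollary I would record, by induction on $N$, that $h_q(N,t)$ is non-decreasing in $t$: the base case $h_q(0,t)=d_{q-1,1}(t,t)=1$ is constant, and $(\star)$ gives $h_q(N,t)-h_q(N,t-1)=h_q(N-1,t)-h_q(N-1,t-q)\ge 0$. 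Equivalently, $\phi(k):=f_{q,b}(n-kb-1,t-k)$ is non-increasing in $k$.

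For the lower bound I would not treat $\boldsymbol{X}_{n,q,b}^{0}$ in isolation, since Claim \ref{cla:del} applied to it produces subsequences such as $\alpha^{b-1}\circ(\alpha+1)^b\circ(\alpha+2)^b\circ\cdots$ with a defective first block. I would therefore introduce the class of \emph{cyclic-tailed} sequences $\gamma^{\ell}\circ(\gamma+1)^b\circ(\gamma+2)^b\circ\cdots$ with $\ell\in[1,b]$ (the proper $b$-cyclic sequences being $\ell=b$) and verify two structural facts: first, for every such sequence the first occurrence of the symbol $\gamma+k$ among the sampling positions $1,b+1,2b+1,\dots$ lies at index exactly $k$ for $k\in[0,q-1]$; and second, the subsequences produced by the first-symbol decomposition are again cyclic-tailed, each branch decrementing the defect from $\ell$ to $\ell-1$ (and resetting to a full block of the next symbol when $\ell=1$). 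Granting these, induction on $n+t$ with Claim \ref{cla:del} gives $|\mathcal{D}_{t,b}(\boldsymbol{w})|=\sum_{k=0}^{q-1}|\mathcal{D}_{t-k,b}(\boldsymbol{w}_{[kb+2,\cdot]})|=\sum_{k=0}^{q-1}f_{q,b}(n-kb-1,t-k)=f_{q,b}(n,t)$ for every cyclic-tailed $\boldsymbol{w}$, the last step being $(\star)$; in particular $d_{q,b}(n,t)=f_{q,b}(n,t)$.

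For the upper bound I would show $|\mathcal{D}_{t,b}(\boldsymbol{x})|\le f_{q,b}(n,t)$ for all $\boldsymbol{x}\in\Sigma_q^n$, again by induction on $n+t$. Claim \ref{cla:del} yields $|\mathcal{D}_{t,b}(\boldsymbol{x})|=\sum_{\alpha\in\Sigma_q}|\mathcal{D}_{t-i_\alpha,b}(\boldsymbol{x}_{[i_\alpha b+2,n]})|$, where $i_\alpha$ is the smallest index with $x_{i_\alpha b+1}=\alpha$; the map $\alpha\mapsto i_\alpha$ is injective, since distinct symbols cannot share a sampling position. By the induction hypothesis each summand is at most $f_{q,b}(n-i_\alpha b-1,t-i_\alpha)=\phi(i_\alpha)$, so $|\mathcal{D}_{t,b}(\boldsymbol{x})|\le\sum_{\alpha}\phi(i_\alpha)$. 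As the $i_\alpha$ are distinct non-negative integers (symbols absent from the sampling positions contribute $0$) and $\phi$ is non-increasing, this sum is maximized by the index set $\{0,1,\dots,q-1\}$, whence $\sum_\alpha\phi(i_\alpha)\le\sum_{k=0}^{q-1}\phi(k)=f_{q,b}(n,t)$. This gives $D_{q,b}(n,t)\le f_{q,b}(n,t)$, and combined with $D_{q,b}(n,t)\ge d_{q,b}(n,t)=f_{q,b}(n,t)$ we conclude $d_{q,b}(n,t)=D_{q,b}(n,t)=f_{q,b}(n,t)$; the case $q=2$ then follows from $d_{1,1}(t,t-i)=1$. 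The main obstacle I anticipate is the lower-bound bookkeeping for the defective first block: pinning down the cyclic-tailed class, its closure under the decomposition (including the $\ell=1$ wrap-around), and the degenerate ranges when $q>t+1$ (where only symbols with $i_\alpha\le t$ contribute) is the delicate part, whereas $(\star)$ and the majorization step are comparatively routine once set up.
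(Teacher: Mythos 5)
Your proposal is correct and follows essentially the same route as the paper: your ``cyclic-tailed'' sequences are exactly the paper's $\boldsymbol{Y}_{n,q,b}^{\sigma,j}$ of Theorem \ref{thm:del_size}, your identity $(\star)$ is the paper's Lemma \ref{lem:del_rec} combined with the closed-form induction in the proof of Theorem \ref{thm:del_ball}, and your injectivity-plus-majorization step for the upper bound is the paper's sorting argument $f_{\theta_0}\leq f_{\theta_1}\leq\cdots$ with $f_{\theta_i}\geq i$ together with Lemma \ref{lem:del_inequality}. The only differences are cosmetic: you decompose by the first symbol where the paper uses the symmetric last-symbol decomposition of Claim \ref{cla:del}, and you derive the recursion and monotonicity algebraically from the closed form rather than combinatorially from ball inclusions.
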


We begin by exploring several properties of $d_{q,b}(n,t)$.

\begin{lemma}\label{lem:del_inequality}
    For any $n\geq bt+1$ with $t,b\geq 1$, we have $d_{q,b}(n-b,t-1) \leq d_{q,b}(n,t)$.
\end{lemma}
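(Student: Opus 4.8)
The plan is to exploit the self-similar structure of the $b$-cyclic sequence together with the first bullet of Claim~\ref{cla:del}, which already packages exactly the containment I need. The key observation I would record first is that deleting the initial burst of $\boldsymbol{X}_{n,q,b}^{\sigma}$ recovers a shorter cyclic sequence. Concretely, from the defining relation $X_i\equiv \sigma+\lfloor (i-1)/b\rfloor \pmod q$, the first $b$ entries are all equal to $\sigma$, and for $i\in[b+1,n]$ we have $X_i\equiv (\sigma+1)+\lfloor (i-b-1)/b\rfloor \pmod q$. Hence the substring on positions $[b+1,n]$ is precisely $\boldsymbol{X}_{n-b,q,b}^{\sigma+1}$, which gives the clean factorization $\boldsymbol{X}_{n,q,b}^{\sigma}= \sigma^b \circ \boldsymbol{X}_{n-b,q,b}^{\sigma+1}$. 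This is the only piece of index arithmetic in the argument, and it is routine.

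From this factorization it is immediate that $\boldsymbol{X}_{n,q,b}^{\sigma}$ is obtainable from $\boldsymbol{X}_{n-b,q,b}^{\sigma+1}$ by a single $b$-burst-insertion at the front, namely by inserting $\sigma^b$; that is, $\boldsymbol{X}_{n,q,b}^{\sigma}\in \mathcal{I}_{1,b}(\boldsymbol{X}_{n-b,q,b}^{\sigma+1})$. Applying the first statement of Claim~\ref{cla:del} with radius $t-1$, center $\boldsymbol{X}_{n-b,q,b}^{\sigma+1}$, and the insertion image $\boldsymbol{X}_{n,q,b}^{\sigma}$ yields the containment $\mathcal{D}_{t-1,b}(\boldsymbol{X}_{n-b,q,b}^{\sigma+1})\subseteq \mathcal{D}_{t,b}(\boldsymbol{X}_{n,q,b}^{\sigma})$. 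Here the hypothesis $n\geq bt+1$ guarantees $n-b\geq b(t-1)+1$, so both balls in question are well defined and the radius-$(t-1)$ ball is nonempty.

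Taking cardinalities and invoking the earlier fact that $|\mathcal{D}_{t,b}(\boldsymbol{X}_{n,q,b}^{\sigma})|$ is independent of the starting symbol $\sigma$, I conclude
\[
d_{q,b}(n-b,t-1)= |\mathcal{D}_{t-1,b}(\boldsymbol{X}_{n-b,q,b}^{\sigma+1})| \leq |\mathcal{D}_{t,b}(\boldsymbol{X}_{n,q,b}^{\sigma})| = d_{q,b}(n,t),
\]
which is the claimed inequality.

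I do not expect a genuine obstacle here: the whole proof reduces to one structural identity about the cyclic sequence plus a direct appeal to Claim~\ref{cla:del}. The only point demanding minor care is the modular bookkeeping that identifies the tail $\boldsymbol{x}_{[b+1,n]}$ of $\boldsymbol{X}_{n,q,b}^{\sigma}$ with $\boldsymbol{X}_{n-b,q,b}^{\sigma+1}$; everything after that is a one-line containment and a cardinality comparison.
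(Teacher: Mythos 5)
Your proof is correct and follows essentially the same route as the paper: the paper likewise notes that $\boldsymbol{X}_{n-b,q,b}^{1}\in\mathcal{D}_{1,b}(\boldsymbol{X}_{n,q,b}^{0})$ (the deletion-side view of your factorization $\boldsymbol{X}_{n,q,b}^{\sigma}=\sigma^b\circ\boldsymbol{X}_{n-b,q,b}^{\sigma+1}$) and then invokes the first statement of Claim~\ref{cla:del} to get the same containment of balls. Your version just spells out the index arithmetic more explicitly; there is no substantive difference.
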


\begin{IEEEproof}
    Since $\boldsymbol{X}_{n-b,q,b}^{1} \in \mathcal{D}_{1,b}(\boldsymbol{X}_{n,q,b}^{0})$, by the first statement of Claim \ref{cla:del}, we obtain $\mathcal{D}_{t-1,b}(\boldsymbol{X}_{n-b,q,b}^{1}) \subseteq \mathcal{D}_{t,b}(\boldsymbol{X}_{n,q,b}^{0})$, then the conclusion follows.
\end{IEEEproof}

\begin{lemma}\label{lem:del_rec}
    For any $n\geq bt+1$ with $t,b\geq 1$, we have $d_{q,b}(n,t)= \sum_{i=0}^{q-1}d_{q,b}(n-ib-1,t-i)$.
\end{lemma}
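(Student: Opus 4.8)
The plan is to compute $d_{q,b}(n,t)=|\mathcal{D}_{t,b}(\boldsymbol{X}_{n,q,b}^{0})|$ via the prefix decomposition of Claim~\ref{cla:del} and to recognize the residual balls as balls of shorter \emph{windows} of one fixed cyclic pattern. Let $\boldsymbol{C}=C_1C_2\cdots$ be the infinite sequence with $C_j\equiv\lfloor (j-1)/b\rfloor \pmod q$, so that $\boldsymbol{X}_{m,q,b}^{0}=\boldsymbol{C}_{[1,m]}$ and, more generally, every contiguous window $\boldsymbol{C}_{[a+1,a+m]}$ is an increasing cyclic pattern whose leading block may be truncated. For $\boldsymbol{X}=\boldsymbol{X}_{n,q,b}^{0}$ the first occurrence of the symbol $\alpha$ among the positions $1,b+1,\dots$ is at index $i=\alpha$, so Claim~\ref{cla:del} yields
\begin{equation*}
  d_{q,b}(n,t)=\sum_{\alpha=0}^{q-1}\bigl|\mathcal{D}_{t-\alpha,b}(\boldsymbol{X}_{[\alpha b+2,\,n]})\bigr|,
\end{equation*}
and each residual $\boldsymbol{X}_{[\alpha b+2,n]}=\boldsymbol{C}_{[\alpha b+2,\,n]}$ is a window of length $n-\alpha b-1$. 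Hence the lemma follows once I establish the auxiliary \emph{window invariance}: for every offset $a\ge 0$, radius $s\ge 0$, and length $m$, the ball size $|\mathcal{D}_{s,b}(\boldsymbol{C}_{[a+1,a+m]})|$ depends only on $m$ and equals $d_{q,b}(m,s)$.

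I will prove window invariance by strong induction on $m$. The base cases ($s=0$, or $m\le bs$) are immediate from the conventions $|\mathcal{D}_{s,b}|=1$ when $s=0$ or $m=bs$, and $|\mathcal{D}_{s,b}|=0$ when $m<bs$. For the inductive step ($s\ge 1$, $m\ge bs+1$) I apply Claim~\ref{cla:del} to $\boldsymbol{w}=\boldsymbol{C}_{[a+1,a+m]}$. Writing $\rho\equiv\lfloor a/b\rfloor\pmod q$, the symbol at position $ib+1$ of $\boldsymbol{w}$ is $w_{ib+1}\equiv\rho+i\pmod q$, so the smallest index $i$ with $w_{ib+1}=\alpha$ equals $i_\alpha=(\alpha-\rho)\bmod q$, and the residual $\boldsymbol{w}_{[i_\alpha b+2,m]}$ is again a window, of length $m-i_\alpha b-1<m$. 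By the induction hypothesis its ball size is $d_{q,b}(m-i_\alpha b-1,\,s-i_\alpha)$, so
\begin{equation*}
  |\mathcal{D}_{s,b}(\boldsymbol{w})|=\sum_{\alpha=0}^{q-1} d_{q,b}(m-i_\alpha b-1,\,s-i_\alpha)=\sum_{i=0}^{q-1} d_{q,b}(m-ib-1,\,s-i),
\end{equation*}
the last equality holding because $\alpha\mapsto i_\alpha$ is a permutation of $\{0,\dots,q-1\}$. The right-hand side is manifestly independent of the offset $a$; taking $a=0$ gives $\boldsymbol{w}=\boldsymbol{X}_{m,q,b}^{0}$, whose ball size is $d_{q,b}(m,s)$ by definition. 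Thus every length-$m$ window has ball size $d_{q,b}(m,s)$, which closes the induction, and specializing the displayed identity to $(m,s)=(n,t)$ gives exactly $d_{q,b}(n,t)=\sum_{i=0}^{q-1}d_{q,b}(n-ib-1,t-i)$.

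The subtle point I want to stress is that window invariance is \emph{not} automatic, and this is where the main obstacle lies. The residual $\boldsymbol{X}_{[\alpha b+2,n]}$ is genuinely not a $b$-cyclic sequence (its first block has length $b-1$), so one cannot argue from the radius-one count of Lemma~\ref{lem:del_ball_t=1} alone: for $q\ge 3$ two distinct sequences satisfying $x_j\neq x_{j-b}$ for all $j$ (for instance $012012$ and $010101$ when $b=1$) can have different ball sizes, so the mere ``maximal number of $b$-runs'' property does not pin the ball size. What saves the argument is that a window inherits the full arithmetic structure $C_{j+b}\equiv C_j+1\pmod q$, which forces the permutation identity above and makes the residual sum offset-independent. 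The remaining care is purely bookkeeping: when $i_\alpha>s$ or $i_\alpha b+1>m$ the symbol $\alpha$ cannot head any output, and I must adopt the conventions $d_{q,b}(\cdot,s')=0$ for $s'<0$ and $d_{q,b}(m',\cdot)=0$ for $m'<0$ so that these empty pieces of the decomposition match the vanishing summands, while invoking Claim~\ref{cla:del} only on windows of length at least $sb+1$.
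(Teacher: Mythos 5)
Your proof is correct, but it takes a genuinely different route from the paper's. The paper decomposes $\mathcal{D}_{t,b}(\boldsymbol{X}_{n,q,b}^{0})$ by the \emph{last} symbol of the output (the third bullet of Claim~\ref{cla:del}): since every prefix of $\boldsymbol{X}_{n,q,b}^{0}$ is again a canonical $b$-cyclic sequence $\boldsymbol{X}_{n',q,b}^{0}$, the recursion closes in three lines with no auxiliary statement. You instead decompose by the \emph{first} symbol, and the residual suffixes are not canonical $b$-cyclic sequences (their leading block is truncated), which forces you to prove the stronger ``window invariance'' claim that $|\mathcal{D}_{s,b}(\boldsymbol{C}_{[a+1,a+m]})|$ depends only on $m$ and $s$; your induction on $m$, using the fact that $\alpha\mapsto i_\alpha=(\alpha-\rho)\bmod q$ permutes $\Sigma_q$ so that the recursive sum is offset-independent, is sound, and your bookkeeping conventions for negative radius and for symbols that cannot head any output are exactly the right ones. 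What you pay in length you recover elsewhere: your window-invariance statement is essentially the identity $|\mathcal{D}_{t,b}(\boldsymbol{Y}_{n,q,b}^{\sigma,j})|=d_{q,b}(n,t)$ that the paper has to establish separately inside Theorem~\ref{thm:del_size} (the sequences $\boldsymbol{Y}_{n,q,b}^{\sigma,j}=\sigma^j\circ\boldsymbol{X}_{n-j,q,b}^{\sigma+1}$ are precisely your offset windows), so your argument proves the recursion and that half of Theorem~\ref{thm:del_size} in one pass, whereas the paper keeps the two results cleanly separated by always peeling from the end where the cyclic structure is preserved. Your cautionary remark that maximal $b$-run count alone does not determine the ball size for $q\geq 3$ is also correct and correctly identifies why the arithmetic structure $C_{j+b}\equiv C_j+1$ is the load-bearing fact.
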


\begin{IEEEproof}
    Let $\sigma_0$ be the last symbol of $\boldsymbol{X}_{n,q,b}^{0}$ and let $\sigma_i\in \Sigma_q$ be defined such that $\sigma_i \equiv \sigma_0-i \pmod{q}$ for $i\in [1,q-1]$. By the definition of $\boldsymbol{X}_{n,q,b}^{0}$, we observe that the last appearance of $\sigma_i$ in $\boldsymbol{X}_{n,q,b}^{0}$ belongs to the interval $[n-ib,n-(i-1)b-1]$.
    Then, by the last statement of Claim \ref{cla:del}, we can obtain
    \begin{align*}
        \mathcal{D}_{t,b}(\boldsymbol{X}_{n,q,b}^{0})
        &= \sqcup_{i=0}^{q-1} \mathcal{D}_{t,b}(\boldsymbol{X}_{n,q,b}^{0})_{\sigma_i} \\
        &= \sqcup_{i=0}^{q-1} \mathcal{D}_{t-i,b}(\boldsymbol{X}_{n-ib-1,q,b}^{0}) \circ \sigma_i.
    \end{align*}
    Then the conclusion follows.
\end{IEEEproof}

Now we are prepared to demonstrate that the maximum size of a $b$-burst-deletion ball of radius $t$ is equal to $D_{q,b}(n,t)$.

\begin{theorem}\label{thm:del_size}
    Let $n\geq bt+1$ with $t,b\geq 1$. For any $\sigma\in \Sigma_q$ and $j \in [0,b-1]$, let 
    \begin{equation}\label{eq:Y}
      \boldsymbol{Y}_{n,q,b}^{\sigma,j}:= \sigma^j \circ \boldsymbol{X}_{n-j,q,b}^{\sigma+1},
    \end{equation}
    where $\sigma+1\in \Sigma_q$, then $d_{q,b}(n,t)= |\mathcal{D}_{t,b}(\boldsymbol{Y}_{n,q,b}^{\sigma,j})|=D_{q,b}(n,t)$, where $D_{q,b}(n,t)=\max\{|\mathcal{D}_{t,b}(\boldsymbol{x})|: \boldsymbol{x} \in \Sigma_q^n\}$.
\end{theorem}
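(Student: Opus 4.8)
The plan is to prove the two inequalities $D_{q,b}(n,t)\le d_{q,b}(n,t)$ and $d_{q,b}(n,t)\le D_{q,b}(n,t)$ separately, and then to pin down the exact value of $|\mathcal{D}_{t,b}(\boldsymbol{Y}_{n,q,b}^{\sigma,j})|$. The reverse inequality $d_{q,b}(n,t)\le D_{q,b}(n,t)$ is immediate, since $\boldsymbol{X}_{n,q,b}^{\sigma+1}$ is an admissible center. Everything else rests on the prefix decomposition in the second bullet of Claim~\ref{cla:del}: writing $i_\alpha$ for the smallest index with $x_{i_\alpha b+1}=\alpha$ (when such an index exists and lies in $[0,t]$), we have $\mathcal{D}_{t,b}(\boldsymbol{x})^\alpha=\alpha\circ \mathcal{D}_{t-i_\alpha,b}(\boldsymbol{x}_{[i_\alpha b+2,n]})$, whence $|\mathcal{D}_{t,b}(\boldsymbol{x})|=\sum_\alpha |\mathcal{D}_{t-i_\alpha,b}(\boldsymbol{x}_{[i_\alpha b+2,n]})|$. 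The key structural remark is that the $i_\alpha$ are pairwise distinct (each position $ib+1$ carries a single symbol), so $\{i_\alpha\}$ is a set of at most $q$ distinct elements of $[0,t]$.

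For the upper bound $|\mathcal{D}_{t,b}(\boldsymbol{x})|\le d_{q,b}(n,t)$ I would induct on $n$. Each tail $\boldsymbol{x}_{[i_\alpha b+2,n]}$ has length $n-i_\alpha b-1<n$, so the induction hypothesis gives $|\mathcal{D}_{t-i_\alpha,b}(\boldsymbol{x}_{[i_\alpha b+2,n]})|\le d_{q,b}(n-i_\alpha b-1,\,t-i_\alpha)$, and therefore $|\mathcal{D}_{t,b}(\boldsymbol{x})|\le \sum_\alpha d_{q,b}(n-i_\alpha b-1,\,t-i_\alpha)$. Now set $f(i):=d_{q,b}(n-ib-1,\,t-i)$; applying Lemma~\ref{lem:del_inequality} with $(n-ib-1,\,t-i)$ in place of $(n,t)$ shows that $f$ is non-increasing in $i$, and it is plainly non-negative (with the conventions $d_{q,b}(m,s)=0$ for $s<0$ or $m<bs$). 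Since the $i_\alpha$ form a set of at most $q$ distinct non-negative integers, a one-line rearrangement argument — injecting each level $\ge q$ into an unused slot of $[0,q-1]$, which only raises the value of $f$ — yields $\sum_\alpha f(i_\alpha)\le \sum_{i=0}^{q-1}f(i)$. By Lemma~\ref{lem:del_rec} the right-hand side is exactly $d_{q,b}(n,t)$, completing the bound and hence $D_{q,b}(n,t)=d_{q,b}(n,t)$.

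To evaluate $|\mathcal{D}_{t,b}(\boldsymbol{Y}_{n,q,b}^{\sigma,j})|$ I would induct on $j$. For $j=0$ the sequence is the cyclic $\boldsymbol{X}_{n,q,b}^{\sigma+1}$, so the identity holds by the definition of $d_{q,b}(n,t)$. For $j\ge 1$, a direct bookkeeping of positions shows that in $\boldsymbol{Y}_{n,q,b}^{\sigma,j}$ the symbols at indices $1,b+1,2b+1,\dots$ cycle through $\sigma,\sigma+1,\dots$, so $i_\alpha=(\alpha-\sigma)\bmod q$ runs over all of $\{0,\dots,q-1\}$, and moreover the corresponding tail satisfies $\big(\boldsymbol{Y}_{n,q,b}^{\sigma,j}\big)_{[ib+2,n]}=\boldsymbol{Y}_{n-ib-1,q,b}^{\sigma+i,\,j-1}$. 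The induction hypothesis then evaluates each term to $d_{q,b}(n-ib-1,\,t-i)$, and Lemma~\ref{lem:del_rec} reassembles the sum into $d_{q,b}(n,t)$; together with the upper bound this shows that every $\boldsymbol{Y}_{n,q,b}^{\sigma,j}$ attains $D_{q,b}(n,t)$.

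The step I expect to be the main obstacle is the upper bound, and within it the monotonicity/rearrangement argument: one must argue cleanly that spreading the first-occurrence levels as $0,1,\dots,q-1$ (the cyclic pattern) is optimal, which is precisely where Lemma~\ref{lem:del_inequality} enters and where the edge conventions for $d_{q,b}$ must be invoked so that the boundary terms of Lemma~\ref{lem:del_rec} behave. The position-tracking for the $\boldsymbol{Y}$ family is routine, but must be carried out carefully to confirm that each tail is again a member of the $\boldsymbol{Y}$ family with the first-run length $j$ decreased by one.
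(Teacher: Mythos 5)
Your proposal is correct and follows essentially the same route as the paper: the paper likewise decomposes $\mathcal{D}_{t,b}(\boldsymbol{x})$ by the first occurrence of each symbol among positions $1,b+1,\dots$ (it uses the suffix version of Claim~\ref{cla:del}, a trivial symmetry), sorts the resulting distinct indices so that the $i$-th smallest is at least $i$, and invokes Lemma~\ref{lem:del_inequality} together with Lemma~\ref{lem:del_rec} exactly as in your rearrangement step. The only cosmetic differences are that the paper runs a single induction on $n+t$ covering both the upper bound and the evaluation of $|\mathcal{D}_{t,b}(\boldsymbol{Y}_{n,q,b}^{\sigma,j})|$, whereas you split these into an induction on $n$ and an induction on $j$.
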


\begin{IEEEproof}
    We prove the theorem by induction on $n+t$.
    For the base case where $n+t=b+2$, we have $n=b+1$ and $t=1$.
    In this case, we observe that $\mathcal{D}_{t,b}(\boldsymbol{x})= \{x_1,x_{b+1}\}$ and that $|\mathcal{D}_{t,b}(\boldsymbol{x})|\leq 2$ for any $\boldsymbol{x} \in \Sigma_q^n$. Since the first and $(b+1)$-th symbols in $\boldsymbol{Y}_{n,q,b}^{\sigma,j}$ are distinct, the conclusion is valid for the base case.
    
    Now, assuming that the conclusion is valid for $n+t<m$, we examine the scenario where $n+t=m\geq b+3$.
    Similar to the proof of Lemma \ref{lem:del_rec}, let $\sigma_0$ be the last symbol of $\boldsymbol{Y}_{n,q,b}^{\sigma,j}$ and let $\sigma_i\in \Sigma_q$ be defined such that $\sigma_i\equiv \sigma_0-i \pmod{q}$ for $i\in [1,q-1]$, we have
    \begin{align*}
        \mathcal{D}_{t,b}(\boldsymbol{Y}_{n,q,b}^{\sigma,j})
        &= \sqcup_{i=0}^{q-1} \mathcal{D}_{t,b}(\boldsymbol{Y}_{n,q,b}^{\sigma,j})_{\sigma_i} \\
        &= \sqcup_{i=0}^{q-1} \mathcal{D}_{t-i,b}(\boldsymbol{Y}_{n-ib-1,q,b}^{\sigma,j}) \circ \sigma_i.
    \end{align*}
    Then we can compute 
    \begin{align*}
       |\mathcal{D}_{t,b}(\boldsymbol{Y}_{n,q,b}^{\sigma,j})|
        &= \sum_{i=0}^{q-1}d_{q,b}(n-ib-1,t-i) \\
        &= d_{q,b}(n,t),
    \end{align*}
    where the first equality follows by induction hypothesis and the second equality follows by Lemma \ref{lem:del_rec}.
    
    To complete the proof, we need to show that $|\mathcal{D}_{t,b}(\boldsymbol{x})|\leq d_{q,b}(n,t)$ for any $\boldsymbol{x} \in \Sigma_q^n$.
    To this end, for any $i\in [0,q-1]$, let $f_i$ denote the smallest index such that $x_{n-f_ib}= i$ if such an index exists; otherwise, set $f_i=\frac{n}{b}$. 
    There exists an arrangement of the elements $(0,1,\ldots,q-1)$, denoted as $(\theta_0,\theta_1,\ldots,\theta_{q-1})$, such that $f_{\theta_0} \leq f_{\theta_1} \leq \cdots \leq f_{\theta_{q-1}}$. 
    Observe that $f_{\theta_i} \geq i$ for $i\in [0,q-1]$, then by the last statement of Claim \ref{cla:del}, we can compute
    \begin{align*}
        |\mathcal{D}_{t,b}(\boldsymbol{x})|
        &= \sum_{i=0}^{q-1} |\mathcal{D}_{t,b}(\boldsymbol{x})_{\theta_i}| \\
        &= \sum_{i=0}^{q-1} |\mathcal{D}_{f_{\theta_i},b}(\boldsymbol{x}_{[1,n-f_{\theta_i}b-1]})|\\
        &\stackrel{(\ast)}{\leq} \sum_{i=0}^{q-1} d_{q,b}(n-f_{\theta_i}b-1,t-f_{\theta_i})\\
        &\stackrel{(\star)}{\leq} \sum_{i=0}^{q-1} d_{q,b}(n-ib-1,t-i)\\
        &\stackrel{(\diamond)}{=} d_{q,b}(n,t),
    \end{align*}
    where $(\ast)$ follows by induction hypothesis, $(\star)$ follows by Lemma \ref{lem:del_inequality}, and $(\diamond)$ follows by Lemma \ref{lem:del_rec}.
    Consequently, the conclusion is also valid for $n+t=m$, thereby completing the proof.
\end{IEEEproof}

Now we are ready to prove the main theorem of this subsection.

\smallskip

\begin{IEEEproof}[Proof of Theorem \ref{thm:del_ball}]
    By Theorem \ref{thm:del_size}, we have $d_{q,b}(n,t)= D_{q,b}(n,t)$. In the following, we prove the rest conclusion by induction on $n+t$. For the base case where $n+t=b+2$, we have $n=b+1$ and $t=1$.
    Then the conclusion follows by Lemma \ref{lem:size_b=1}. Now, assuming that the conclusion is valid for $n+t<m$, we examine the scenario where $n+t= m \geq b+3$.
    By Lemma \ref{lem:del_rec}, we have $D_{q,b}(n,t) = \sum_{i=0}^{q-1} D_{q,b}(n - ib - 1, t - i)$. Below we divide the computation of $D_{q,b}(n,t)$ into two steps. 
    
    In the first step, we compute the term $\sum_{i=1}^{q-1} D_{q,b}(n - ib - 1, t - i)$, which is given by
    \begin{equation}\label{eq:first_step}
    \begin{aligned}
      \sum_{i=1}^{q-1} D_{q,b}(n-ib-1,t-i)
        &\stackrel{(\ast)}{=}\sum_{i=1}^{q-1} \sum_{j=0}^{t-i} \binom{n-bt-1}{j} D_{q-1,1}(t-i,t-i-j) \\
        &\stackrel{(\star)}{=}\sum_{i=1}^{q-1} \sum_{j=0}^{t-1} \binom{n-bt-1}{j} D_{q-1,1}(t-i,t-i-j) \\
        &=\sum_{i=0}^{q-2} \sum_{j=1}^{t} \binom{n-bt-1}{j-1} D_{q-1,1}(t-i-1,t-i-j) \\
        &=\sum_{j=1}^{t} \binom{n-bt-1}{j-1} \sum_{i=0}^{q-2} D_{q-1,1}(t-i-1,t-i-j) \\
        &\stackrel{(\diamond)}{=}\sum_{j=1}^{t} \binom{n-bt-1}{j-1} D_{q-1,1}(t,t-j),
    \end{aligned}
    \end{equation}
    where $(\ast)$ follows by induction hypothesis, $(\star)$ holds since $D_{q-1,1}(r,s)=0$ for $s>r$, and $(\diamond)$ follows by Lemma \ref{lem:del_rec}.
    
    In the second step, we compute $D_{q,b}(n,t)$ as follows:
    \begin{align*}
        D_{q,b}(n,t) 
        &= D_{q,b}(n-1,t)+ \sum_{i=1}^{q-1} D_{q,b}(n-ib-1,t-i) \\
        &\stackrel{(\ast)}{=}\sum_{j=0}^{t} \binom{n-bt-1}{j} D_{q-1,1}(t,t-j)+ \sum_{j=1}^{t} \binom{n-bt-1}{j-1} D_{q-1,1}(t,t-j) \\
        &\stackrel{(\star)}{=}\sum_{j=0}^{t} \binom{n-bt}{j} D_{q-1,1}(t,t-j),
    \end{align*}
    where $(\ast)$ follows by induction hypothesis and Equation (\ref{eq:first_step}), and $(\star)$
    holds since $\binom{n-bt}{0}= \binom{n-bt-1}{0}=1$ and $\binom{n-bt}{i}= \binom{n-bt-1}{i}+\binom{n-bt-1}{i-1}$ for $i\in [1,t]$.
    Consequently, the conclusion is also valid for $n+t=m$, thereby completing the proof.
\end{IEEEproof}

\subsection{The Maximum Intersection Size Between Two Radius-$t$ $b$-Burst-Deletion Balls}\label{subsec:del_ball_int}

In this subsection, we will determine the maximum intersection size between two burst-deletion balls.
Recall that 
\begin{align*}
    N_{q,b}^-(n,t)=\max\{ |\mathcal{D}_{t,b}(\boldsymbol{x}) \cap \mathcal{D}_{t,b}(\boldsymbol{y})|: \boldsymbol{x} \neq \boldsymbol{y} \in \Sigma_q^n\}.
\end{align*}

When $b=1$, the value of $N_{q,b}^-(n,t)$ has been established in \cite{Levenshtein-01-JCTA-recons}.

\begin{lemma}\cite[Equation (28), Lemma 1, and Theorem 1]{Levenshtein-01-JCTA-recons}\label{lem:del_int_b=1}
    For any $t\geq 1$, $q\geq 2$, and $n\geq t+2$, we have $N_{q,1}^-(n,t)= D_{q,1}(n,t)- D_{q,1}(n-1,t)+ D_{q,1}(n-2,t-1)$.
\end{lemma}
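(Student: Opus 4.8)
The plan is to prove the two inequalities
$N_{q,1}^-(n,t)\ge D_{q,1}(n,t)-D_{q,1}(n-1,t)+D_{q,1}(n-2,t-1)$ and
$N_{q,1}^-(n,t)\le D_{q,1}(n,t)-D_{q,1}(n-1,t)+D_{q,1}(n-2,t-1)$ separately, working throughout with the reformulation that a length-$(n-t)$ word lies in $\mathcal{D}_{t,1}(\boldsymbol{x})\cap\mathcal{D}_{t,1}(\boldsymbol{y})$ precisely when it is a common subsequence of the two length-$n$ words $\boldsymbol{x}$ and $\boldsymbol{y}$. The whole argument proceeds by induction on $n+t$, with the first-symbol decomposition $\mathcal{D}_{t,1}(\boldsymbol{x})=\bigsqcup_{\alpha}\mathcal{D}_{t,1}(\boldsymbol{x})^{\alpha}$ from Claim~\ref{cla:del} as the engine, and with the closed form and recursions for $D_{q,1}$ supplied by Theorem~\ref{thm:del_ball}, Lemma~\ref{lem:del_rec}, and Lemma~\ref{lem:del_inequality} used to collapse the resulting sums into the three-term formula.

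For the lower bound I would exhibit an explicit extremal pair. Guided by the computation for $t=1$ and small cases, the natural candidates are two consecutive cyclic words $\boldsymbol{x}=\boldsymbol{X}_{n,q,1}^{0}$ and $\boldsymbol{y}=\boldsymbol{X}_{n,q,1}^{1}$ (for $q=2$ these are the complementary alternating words $0101\cdots$ and $1010\cdots$, which one checks directly realize the value, e.g.\ giving $6$ when $q=2,n=5,t=2$). Applying the prefix decomposition of Claim~\ref{cla:del} to $\mathcal{S}:=\mathcal{D}_{t,1}(\boldsymbol{x})\cap\mathcal{D}_{t,1}(\boldsymbol{y})$ and tracking, for each leading symbol $\alpha$, how many leading deletions each of $\boldsymbol{x}$ and $\boldsymbol{y}$ must absorb before its first occurrence of $\alpha$, I would derive a recurrence for $|\mathcal{S}|$ analogous to Lemma~\ref{lem:del_rec}, and then verify by induction that its solution is exactly $D_{q,1}(n,t)-D_{q,1}(n-1,t)+D_{q,1}(n-2,t-1)$, substituting the values of $D_{q,1}$ from Theorem~\ref{thm:del_ball}.

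For the upper bound I would fix arbitrary distinct $\boldsymbol{x},\boldsymbol{y}\in\Sigma_q^{n}$, set $\mathcal{S}:=\mathcal{D}_{t,1}(\boldsymbol{x})\cap\mathcal{D}_{t,1}(\boldsymbol{y})$, and split on whether $x_1=y_1$. When $x_1=y_1$ the leading symbol can be matched at position one in both words, so $\mathcal{S}^{x_1}=x_1\circ\big(\mathcal{D}_{t,1}(\boldsymbol{x}_{[2,n]})\cap\mathcal{D}_{t,1}(\boldsymbol{y}_{[2,n]})\big)$ and each remaining class reduces to a strictly smaller intersection; bounding every piece by the inductive hypothesis and resumming via Lemma~\ref{lem:del_rec} keeps the total at or below the claimed value. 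When $x_1\ne y_1$ the decomposition is asymmetric: for each leading symbol $\alpha$ at least one of the two words must delete its entire prefix up to the first $\alpha$, so the classes must be bounded by $D_{q,1}$ of a reduced ball on one side and by $N_{q,1}^-$ on the other, then recombined. The desired closed form is recovered from the recursions for $D_{q,1}$ together with the monotonicity $D_{q,1}(n-1,t-1)\le D_{q,1}(n,t)$ of Lemma~\ref{lem:del_inequality}.

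The main obstacle is twofold. First, a single word in a deletion ball typically arises from many different deletion patterns, so the first-symbol classes overlap as collections of patterns even though they partition as sets of sequences; to make the recursion exact rather than a mere inequality one must argue with the actual sequences, using that greedy left-to-right matching is optimal for deciding subsequence membership. Second, and unlike the insertion case of Theorem~\ref{thm:size_int_ins} where the extremal pair sits at Hamming distance one, here the extremal pair appears to be \emph{maximally} spread (shifted cyclic words), so the lower-bound computation cannot be imported from a neighbor argument and must be carried out directly; confirming that this pair, and no pair of larger intersection, attains the bound is where the bulk of the effort lies.
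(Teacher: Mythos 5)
The paper itself does not prove this lemma: it is imported verbatim from Levenshtein's JCTA paper (Equation (28), Lemma 1, Theorem 1 there), so there is no in-paper proof to compare against. The closest in-paper analogue is the pair Lemma~\ref{lem:del_lb}/Theorem~\ref{thm:del_int} for $b\geq 2$, $q=2$, and your architecture — induction on $n+t$, the first-symbol decomposition of Claim~\ref{cla:del}, an explicit extremal pair for the lower bound, a case split on $x_1=y_1$ for the upper bound — is exactly that template. You also correctly identify that the extremal pair here is the shifted cyclic pair (e.g.\ $0101\cdots$ and $1010\cdots$ for $q=2$) rather than a Hamming-distance-one pair, and your numerical checks are right.

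There is, however, a concrete gap in the upper bound, in the case $x_1=y_1$, which is precisely the branch you dispatch with ``bounding every piece by the inductive hypothesis and resumming via Lemma~\ref{lem:del_rec}.'' From Claim~\ref{cla:del} one gets $|\mathcal{S}^{x_1}|\leq N_{q,1}^-(n-1,t)$, and the best uniform bound on the remaining classes (via distinct first occurrences and Lemma~\ref{lem:del_rec}) is $\sum_{\alpha\neq x_1}|\mathcal{S}^{\alpha}|\leq D_{q,1}(n,t)-D_{q,1}(n-1,t)$; closing the induction from these two facts would require $N_{q,1}^-(n-1,t)\leq D_{q,1}(n-2,t-1)$. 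But already for $q=2$ the inductive hypothesis gives $N_{2,1}^-(n-1,t)=2D_{2,1}(n-3,t-1)$ while $D_{2,1}(n-2,t-1)=D_{2,1}(n-3,t-1)+D_{2,1}(n-4,t-2)$, and $D_{2,1}(n-3,t-1)-D_{2,1}(n-4,t-2)=\binom{n-t-2}{t-1}>0$ for $n\geq 2t+1$, so the naive resummation strictly overshoots the target. Making this case work requires the finer subcasing (on whether $x_2=x_1$ or $y_2=y_1$, and on whether the suffixes beyond the first run coincide) that the paper carries out in the four cases of the proof of Theorem~\ref{thm:del_int}; for $q>2$ there is the additional wrinkle that the classes $\mathcal{S}^{\alpha}$ with $\alpha\notin\{x_1,y_1\}$ reduce to intersections $\mathcal{D}_{t-i,1}(\boldsymbol{x}_{[i+2,n]})\cap\mathcal{D}_{t-j,1}(\boldsymbol{y}_{[j+2,n]})$ with possibly $i\neq j$, to which the inductive hypothesis (equal lengths, equal radii) does not directly apply. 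Finally, your ``main obstacle'' paragraph points at non-issues: the overlap of deletion patterns is already neutralized by the exact set identity in Claim~\ref{cla:del}, and the lower-bound computation for the cyclic pair is routine once one partitions by shortest producing prefixes as in Lemma~\ref{lem:prefix}; the genuine difficulty sits in the $x_1=y_1$ branch of the upper bound.
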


One may expect that $N_{q,b}^-(n,t)= D_{q,b}(n,t)- D_{q,b}(n-b,t)+ D_{q,b}(n-2b,t-1)$ holds for any $b\geq 2$.
However, we will demonstrate that this is not the case, at least when $q=2$.
To explain this abnormal phenomenon, we first present the following conclusion derived in \cite{Sun-23-IT-BDR}.

\begin{lemma}\cite[Theorem 3.4]{Sun-23-IT-BDR}\label{lem:del_int}
For any $b\geq 1$, $q\geq 2$, and $n\geq 2b-1$, we have $N_{q,b}^-(n,1)= \max\{2,b\}$.
\end{lemma}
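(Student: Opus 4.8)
The plan is to establish the two inequalities $N_{q,b}^-(n,1)\ge \max\{2,b\}$ and $N_{q,b}^-(n,1)\le \max\{2,b\}$ separately, treating the values $2$ and $b$ as coming from two different mechanisms. For the lower bound I would exhibit explicit extremal pairs. When $b\ge 2$, take $\boldsymbol{x}=0^{b}\circ 1^{b-1}\circ \boldsymbol{s}$ and $\boldsymbol{y}=0^{b-1}\circ 1^{b}\circ \boldsymbol{s}$, where $\boldsymbol{s}$ is any common suffix of length $n-(2b-1)\ge 0$; these sequences differ only at position $b$. A direct computation shows that deleting the burst $[k,k+b-1]$ for $k\in[1,b]$ transforms either center into $0^{k-1}\circ 1^{b-k}\circ \boldsymbol{s}$, so both balls contain the $b$ distinct sequences in $\{0^{k-1}\circ 1^{b-k}\circ \boldsymbol{s}:k\in[1,b]\}$, yielding $|\mathcal{D}_{1,b}(\boldsymbol{x})\cap \mathcal{D}_{1,b}(\boldsymbol{y})|\ge b$. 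When $b=1$, the classical pair $\boldsymbol{x}=01\circ \boldsymbol{s}$ and $\boldsymbol{y}=10\circ \boldsymbol{s}$ yields the two common single-deletion descendants $1\circ \boldsymbol{s}$ and $0\circ \boldsymbol{s}$, so the intersection is at least $2$. In both cases the bound equals $\max\{2,b\}$.

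For the upper bound I would fix distinct $\boldsymbol{x},\boldsymbol{y}\in\Sigma_q^n$ and let $p$ and $s$ be the first and last coordinates at which they differ. For any common descendant $\boldsymbol{z}$, choose deletion intervals $[i,i+b-1]$ in $\boldsymbol{x}$ and $[j,j+b-1]$ in $\boldsymbol{y}$ producing $\boldsymbol{z}$, and assume $i\le j$ (the case $i\ge j$ follows by swapping the roles of $\boldsymbol{x}$ and $\boldsymbol{y}$). Comparing the two representations of $\boldsymbol{z}$ coordinatewise gives $x_k=y_k$ for $k<i$, the shift relation $x_{k+b}=y_k$ for $k\in[i,j-1]$, and $x_{k+b}=y_{k+b}$ for $k\ge j$. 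From this I would extract two structural facts: first, $\boldsymbol{x}$ and $\boldsymbol{y}$ agree outside the window $[i,j+b-1]$, which forces $i\le p$ and $j\ge s-b+1$; and second, the shift relation $y_k=x_{k+b}$ holds on $[i,j-1]$, a range containing $[p,s-b]$ whenever the latter is nonempty.

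The two regimes of the argument are exactly where the factor $\max\{2,b\}$ arises. If $s-p+1\le b$, then every common descendant comes from deleting a single length-$b$ window containing $[p,s]$ from both centers; such windows start at $i\in[s-b+1,p]$, so there are at most $b-(s-p)\le b$ of them, and the distinct descendants they produce number at most $b$. If instead $s-p+1>b$, then $i<j$ is forced and the shift relation holds on a nonempty stretch, imposing a period-$b$ pattern across the entire difference region; paralleling the classical single-deletion analysis, I would argue that this rigidity pins the admissible deletion windows down to at most two distinct descendants. Combining the two regimes yields $|\mathcal{D}_{1,b}(\boldsymbol{x})\cap \mathcal{D}_{1,b}(\boldsymbol{y})|\le \max\{2,b\}$, completing the proof.

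The main obstacle I anticipate lies in the bookkeeping of the upper bound: showing that misaligned deletions ($i<j$) in the short-span regime never create descendants beyond the $b$ aligned ones, and making the period-$b$ rigidity in the long-span regime precise enough to cap the count at exactly $2$ while avoiding double counting across the $i\le j$ and $i\ge j$ symmetry. An alternative route, which may streamline this casework, is induction on $n$ via the first-symbol decomposition $\mathcal{D}_{1,b}(\boldsymbol{x})=\bigsqcup_{\alpha}\mathcal{D}_{1,b}(\boldsymbol{x})^{\alpha}$ from Claim \ref{cla:del}, mirroring the inductive argument used for the insertion intersection in Theorem \ref{thm:size_int_ins}.
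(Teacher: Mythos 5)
First, note that the paper does not prove this lemma at all: it imports it verbatim from \cite[Theorem 3.4]{Sun-23-IT-BDR}, so there is no in-paper argument to measure you against. The closest in-paper analogue is the induction used for the general-$t$ statement in Theorem \ref{thm:del_int}, which runs through the first-symbol decomposition of Claim \ref{cla:del} --- precisely the ``alternative route'' you mention in your last sentence. Your lower bound is complete and correct: the pair $\boldsymbol{x}=0^{b}\circ 1^{b-1}\circ \boldsymbol{s}$, $\boldsymbol{y}=0^{b-1}\circ 1^{b}\circ \boldsymbol{s}$ does produce the $b$ distinct common descendants $0^{k-1}\circ 1^{b-k}\circ \boldsymbol{s}$, and it is the same flavor of Hamming-distance-one pair the paper uses in Lemma \ref{lem:del_lb} for general $t$. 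The coordinatewise comparison you set up for the upper bound (agreement before $i$, the shift relation $y_k=x_{k+b}$ on $[i,j-1]$, agreement of shifted tails after $j$, hence $i\le p$ and $j\ge s-b+1$) is also sound.

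The genuine gap is that the two counting claims which constitute the entire content of the upper bound are asserted rather than proved, and you say so yourself. In the long-span regime the claim ``at most $2$'' is plausible --- for a fixed ordering $i\le j$ the shift relation forces $b$-periodicity of $\boldsymbol{x}$ on $[i,p+b-1]$ and of $\boldsymbol{y}$ on $[s-b+1,j+b-1]$, so by the sliding characterization of Lemma \ref{lem:del_ball} the descendant is unique for that ordering, giving one descendant per ordering --- but none of this is in your write-up. The short-span regime is the more dangerous one: you must rule out that a pair admits the $b$ aligned common descendants \emph{and} additional misaligned ones (one for each ordering of $i$ and $j$), which would push the count to $b+2$ and falsify the bound. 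Nothing in your sketch excludes this; you would need to show that any misaligned common descendant, via the periodicity forced by the shift relation on $[i,j-1]$ together with Lemma \ref{lem:del_ball}, can be re-realized as an aligned deletion of a window containing $[p,s]$, or else that its presence destroys some of the aligned descendants. Until that step (and the corresponding rigidity argument in the long-span regime) is written out, the upper bound --- and hence the lemma --- is not established. The strategy looks workable, but as it stands the proposal proves only $N_{q,b}^-(n,1)\ge \max\{2,b\}$.
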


From Lemma \ref{lem:del_int}, we observe a distinction between the cases where $b=1$ and $b\geq 2$.
Intuitively, we expect that when $t\geq 2$, several differences may arise between the cases where $b=1$ and $b\geq 2$, making it challenging to estimate the value of $N_{q,b}^-(n,t)$.
In what follows, we will primarily consider the case where $q=2$ and determine the exact value of $N_{2,b}^-(n,t)$.

\begin{theorem}\label{thm:del_int}
For $n\geq (t+1)b-1$ with $b\geq 2$ and $t\geq 1$, we have 
\begin{align*}
  N_{2,b}^-(n,t)
  &= D_{2,b}(n,t) - D_{2,b}(n-b,t)+ D_{2,b}(n-3b,t-2)\\
  &=D_{2,b}(n,t) - \binom{n-(t+1)b+1}{t},
\end{align*}
where the last equality follows by Theorem \ref{thm:del_ball}.
\end{theorem}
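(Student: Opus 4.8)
The second displayed equality is a routine computation. Substituting the closed form $D_{2,b}(m,s)=\sum_{i=0}^{s}\binom{m-bs}{i}$ from Theorem~\ref{thm:del_ball} gives $D_{2,b}(n-b,t)-D_{2,b}(n-3b,t-2)=\binom{n-(t+1)b}{t-1}+\binom{n-(t+1)b}{t}$, which collapses to $\binom{n-(t+1)b+1}{t}$ by Pascal's rule. So the whole task is the first equality. Writing $T(n,t):=D_{2,b}(n,t)-D_{2,b}(n-b,t)+D_{2,b}(n-3b,t-2)$, I would first record the key recursion $T(n,t)=T(n-1,t)+T(n-b-1,t-1)$, obtained by expanding each of the three terms with the deletion-ball recursion $D_{2,b}(n,t)=D_{2,b}(n-1,t)+D_{2,b}(n-b-1,t-1)$ of Lemma~\ref{lem:del_rec}; thus $T$ obeys the same recursion as $D_{2,b}$ but with different boundary values, which is exactly what makes $T<D_{2,b}$. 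The base case $t=1$ already matches, since $T(n,1)=b=N_{2,b}^-(n,1)$ by Lemma~\ref{lem:del_int}.

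For the lower bound I take the pair $\boldsymbol{x}=\boldsymbol{X}_{n,2,b}^{0}=0^b1^b0^b\cdots$ and $\boldsymbol{y}=0^{b-1}1\circ\boldsymbol{X}_{n-b,2,b}^{1}$, which differ only in position $b$, and evaluate $|\mathcal{D}_{t,b}(\boldsymbol{x})\cap\mathcal{D}_{t,b}(\boldsymbol{y})|$ using the prefix decomposition of Claim~\ref{cla:del}. Writing $(\boldsymbol{u}_k,\boldsymbol{v}_k)$ for the pair of length $\ell$ that agrees everywhere except at position $k$ and has leading block $0^{k}1^{b}\cdots$ versus $0^{k-1}1^{b+1}\cdots$, the $1$-branch of both balls is identical (the two sequences agree from position $b+1$ onward), contributing the full ball $D_{2,b}(\ell-b-1,t-1)$, while the $0$-branch reduces the pair to $(\boldsymbol{u}_{k-1},\boldsymbol{v}_{k-1})$ of length $\ell-1$. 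This yields, for $2\le k\le b$,
\[
  I_k(\ell,t)=I_{k-1}(\ell-1,t)+D_{2,b}(\ell-b-1,t-1),
\]
which I unwind from $k=b,\ \ell=n$ down to $k=1$. The terminal case $k=1$ is where $x_1\neq y_1$: here $\boldsymbol{v}_1=1^{b+1}0^b\cdots$ has its first $0$ at sampled index $2$, so extracting its $0$-branch drops the radius by two and produces the $D_{2,b}(\cdot,t-2)$ contribution. Evaluating $I_1$ and telescoping the sum (using the $D_{2,b}$-recursion together with $T(n,t)=T(n-1,t)+T(n-b-1,t-1)$) gives $I_b(n,t)=T(n,t)$, hence $N_{2,b}^-(n,t)\ge T(n,t)$.

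For the upper bound I would induct on $n+t$, again splitting $\mathcal{D}_{t,b}(\boldsymbol{x})\cap\mathcal{D}_{t,b}(\boldsymbol{y})$ through the prefix decomposition and distinguishing $x_1=y_1$ from $x_1\neq y_1$. In each case one branch is an intersection of two strictly smaller balls (bounded by $T$ of smaller parameters via the induction hypothesis) and the other is bounded by a full ball $D_{2,b}(n-ib-1,t-i)\le D_{2,b}(n-b-1,t-1)$ using monotonicity (Lemma~\ref{lem:del_inequality}). The main obstacle is that these two crude bounds add up to $T(n-1,t)+D_{2,b}(n-b-1,t-1)$, which exceeds the target $T(n,t)=T(n-1,t)+T(n-b-1,t-1)$ by exactly $\binom{n-(t+1)b}{t-1}$; equivalently, the two prefix-branches cannot both be maximal at once. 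The heart of the argument is therefore an anti-correlation statement: whenever one branch attains its full size $D_{2,b}(n-b-1,t-1)$ (which forces $\boldsymbol{x}$ and $\boldsymbol{y}$ to share a long $b$-cyclic tail), the complementary branch must lose at least $\binom{n-(t+1)b}{t-1}$. I expect to establish this by strengthening the induction hypothesis so it controls not merely the maximal intersection but its value on the structured pairs that actually arise, together with a separate and delicate treatment of the case $x_1\neq y_1$, where the offset mismatch among the sampled positions $1,b+1,2b+1,\dots$ is precisely what pins the correction to the radius-$(t-2)$ term rather than radius-$(t-1)$. This extra correction, absent when $b=1$, is the exact reason the naive formula $D_{2,b}(n,t)-D_{2,b}(n-b,t)+D_{2,b}(n-2b,t-1)$ fails for $b\ge 2$.
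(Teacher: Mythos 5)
Your reduction of the second equality to Pascal's rule, your recursion $T(n,t)=T(n-1,t)+T(n-b-1,t-1)$, the base case via Lemma~\ref{lem:del_int}, and your choice of the extremal pair $\boldsymbol{x}=0^b\circ\boldsymbol{X}_{n-b,2,b}^{1}$, $\boldsymbol{y}=0^{b-1}1\circ\boldsymbol{X}_{n-b,2,b}^{1}$ for the lower bound all match the paper (the paper evaluates the intersection by a direct partition into the $b+1$ prefix classes $\mathcal{S}^{0^b}$ and $\mathcal{S}^{0^i1}$, $i\in[0,b-1]$, rather than your recursion on $k$, but this is the same computation). The gap is in the upper bound, which is the heart of the theorem: you correctly diagnose that the naive two-case split overshoots by $\binom{n-(t+1)b}{t-1}$ and that an ``anti-correlation'' between the two prefix branches is needed, but you then only announce a plan (``I expect to establish this by strengthening the induction hypothesis\dots'') without carrying it out. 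The paper does \emph{not} strengthen the induction hypothesis; it closes the induction with a four-way structural case split that your proposal does not identify. Concretely: (a) if $x_1=y_1$ and $x_{b+1}=x_1$ (or symmetrically $y_{b+1}=y_1$), then the first occurrence of $\overline{x_1}$ among the sampled positions $1,b+1,2b+1,\dots$ of $\boldsymbol{x}$ has index at least $2$, so by Lemma~\ref{lem:prefix} the branch $\mathcal{S}^{\overline{x_1}}$ lies in a ball of radius $t-2$ of size $D_{2,b}(n-2b-1,t-2)$ --- this is precisely the saving of $\binom{n-bt-1}{t-1}$ you need, obtained structurally rather than by a stronger hypothesis; (b) if $x_1=y_1\neq x_{b+1}=y_{b+1}$ and $\boldsymbol{x}_{[b+2,n]}\neq\boldsymbol{y}_{[b+2,n]}$, both branches are intersections of balls of distinct centers and the plain induction hypothesis gives exactly $T(n-1,t)+T(n-b-1,t-1)=T(n,t)$.

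The case you miss entirely is $x_1=y_1\neq x_{b+1}=y_{b+1}$ with $\boldsymbol{x}_{[b+2,n]}=\boldsymbol{y}_{[b+2,n]}$, i.e.\ $\boldsymbol{x}$ and $\boldsymbol{y}$ differ only within positions $[2,b]$. Here the $\overline{x_1}$-branch is the intersection of a ball with itself, so the induction hypothesis yields nothing for it, and no global ``strengthened hypothesis'' rescues you; the paper must redo the finer prefix decomposition down to the first differing index $j\in[2,b]$ (partitioning $\mathcal{S}$ into $\mathcal{S}^{\overline{x_1}},\mathcal{S}^{x_1\overline{x_2}},\dots,\mathcal{S}^{x_1\cdots x_{j-1}}$ and case-splitting on whether $x_j=x_{j+b}$), mirroring the lower-bound lemma. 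Finally, your emphasis is inverted on one point: the case $x_1\neq y_1$ does not require delicate treatment --- it is the easy case, since the crude bound $|\mathcal{S}|\leq 2D_{2,b}(n-b-1,t-1)=D_{2,b}(n,t)-\binom{n-bt-1}{t}$ is already at most $D_{2,b}(n,t)-\binom{n-bt-b+1}{t}$ precisely because $b\geq 2$.
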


Before proceeding with the proof, we present the following conclusions that will be used frequently.

\begin{lemma}\label{lem:del_inequality'}
  For any $n\geq bt+1$ with $t,b\geq 1$, we have $D_{q,b}(n,t) \leq D_{q,b}(n+1,t)$.
\end{lemma}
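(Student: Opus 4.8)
The plan is to prove the monotonicity $D_{q,b}(n,t) \leq D_{q,b}(n+1,t)$ by exhibiting, for the maximal-sized ball center, a center of length $n+1$ whose ball is at least as large. The cleanest route is to use the explicit closed form established in Theorem~\ref{thm:del_ball}, namely $D_{q,b}(n,t)= \sum_{i=0}^t \binom{n-bt}{i} d_{q-1,1}(t,t-i)$, and to compare this directly with $D_{q,b}(n+1,t)= \sum_{i=0}^t \binom{n+1-bt}{i} d_{q-1,1}(t,t-i)$. Since each coefficient $d_{q-1,1}(t,t-i)$ is a nonnegative integer, and since the binomial coefficient is monotone in its top argument, $\binom{n-bt}{i} \leq \binom{n+1-bt}{i}$ for every $i \in [0,t]$ (using the convention $\binom{m}{i}=0$ for $m<i$ recorded in the Notations section), the inequality follows term by term.

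First I would observe that the hypothesis $n \geq bt+1$ guarantees $n-bt \geq 1 \geq 0$, so all the binomial coefficients appearing are well-defined under the paper's conventions and the summands are genuinely nonnegative. Then I would invoke Theorem~\ref{thm:del_ball} to replace both $D_{q,b}(n,t)$ and $D_{q,b}(n+1,t)$ by their explicit sums, reducing the statement to the purely combinatorial inequality $\sum_{i=0}^t \binom{n-bt}{i} d_{q-1,1}(t,t-i) \leq \sum_{i=0}^t \binom{n+1-bt}{i} d_{q-1,1}(t,t-i)$. The key elementary fact is that $\binom{m}{i} \leq \binom{m+1}{i}$ holds for all integers $m \geq 0$ and $i \geq 0$; this is immediate from Pascal's rule $\binom{m+1}{i}=\binom{m}{i}+\binom{m}{i-1}$ together with nonnegativity of binomial coefficients. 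Applying this with $m = n-bt$ to each term, and using $d_{q-1,1}(t,t-i) \geq 0$, yields the desired inequality at once.

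Alternatively, and perhaps more in keeping with the ball-based arguments used elsewhere in the section, one could give a combinatorial injection: take a length-$(n+1)$ center $\boldsymbol{z}$ whose ball has size $D_{q,b}(n+1,t)$ and relate its deletion ball to that of a length-$n$ center via Claim~\ref{cla:del}. However, this approach is awkward precisely because $D_{q,b}(n,t)$ is attained only at special centers, so an arbitrary deletion of a symbol from the optimal length-$(n+1)$ center need not land on an optimal length-$n$ center. For this reason I expect the closed-form approach to be strictly cleaner, and the only mild subtlety is confirming that the convention $\binom{m}{i}=0$ for $m<i$ is consistent between the two sums so that no spurious negative or ill-defined term appears; this is handled by the range $n \geq bt+1$ noted above. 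The main obstacle, such as it is, is purely bookkeeping: ensuring the nonnegativity of the coefficients $d_{q-1,1}(t,t-i)$, which holds because they are cardinalities of deletion balls (and equal $1$ when $q-1=1$), and then applying the term-by-term binomial monotonicity.
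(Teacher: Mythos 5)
Your main argument is correct and not circular: Lemma~\ref{lem:del_inequality'} appears after Theorem~\ref{thm:del_ball} in the paper, and the proof of Theorem~\ref{thm:del_ball} does not rely on this lemma, so you are entitled to use the closed form $D_{q,b}(n,t)=\sum_{i=0}^{t}\binom{n-bt}{i}d_{q-1,1}(t,t-i)$ for both $n$ and $n+1$ (both satisfy the hypothesis $\geq bt+1$), and the term-by-term comparison via Pascal's rule together with the nonnegativity of the coefficients $d_{q-1,1}(t,t-i)$ finishes the job. This is, however, a genuinely different route from the paper's. The paper takes exactly the combinatorial path you dismiss as awkward, and the awkwardness you anticipate does not arise because Theorem~\ref{thm:del_size} identifies a whole family of optimal centers $\boldsymbol{Y}_{n,q,b}^{\sigma,j}$ that nest: the paper picks $\boldsymbol{Y}_{n+1,q,b}^{0,1}=0\circ\boldsymbol{Y}_{n,q,b}^{0,0}$, so the optimal length-$(n+1)$ center is literally a one-symbol extension of the optimal length-$n$ center, and the second statement of Claim~\ref{cla:del} gives $\mathcal{D}_{t,b}(\boldsymbol{Y}_{n+1,q,b}^{0,1})^{0}=0\circ\mathcal{D}_{t,b}(\boldsymbol{Y}_{n,q,b}^{0,0})$, whence the inequality by containment. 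The trade-off: your argument leans on the full strength of the explicit formula (and is a two-line computation once that formula is in hand), while the paper's needs only the lighter Theorem~\ref{thm:del_size} plus a ball decomposition and would survive even if the closed form were unavailable. Both are valid given the paper's ordering of results.
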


\begin{IEEEproof}
    Consider the sequences $\boldsymbol{Y}_{n,q,b}^{0,0}$ and $\boldsymbol{Y}_{n+1,q,b}^{0,1}$ (defined in Equation (\ref{eq:Y})), then by Theorem \ref{thm:del_size}, we obtain $\mathcal{D}_{t,b}(\boldsymbol{Y}_{n,q,b}^{0,0})= D_{q,b}(n,t)$ and $\mathcal{D}_{t,b}(\boldsymbol{Y}_{n+1,q,b}^{0,1})= D_{q,b}(n+1,t)$.
    Moreover, by the second statement of Claim \ref{cla:del}, we have $\mathcal{D}_{t,b}(\boldsymbol{Y}_{n+1,q,b}^{0,1})^0= 0\circ \mathcal{D}_{t,b}(\boldsymbol{Y}_{n,q,b}^{0,0})$, which implies that $D_{q,b}(n,t) \leq D_{q,b}(n+1,t)$.
\end{IEEEproof}

\begin{lemma}\label{lem:prefix}
  Assume $n\geq bt+1$ with $b,t\geq 1$. For any $\boldsymbol{x}\in \Sigma_q^n$ and $\boldsymbol{z}\in \Sigma_q^{j}$ with $j\in [1,n]$, if there exists some $f\geq 0$ such that $\boldsymbol{x}_{[1,fb+j]}$ is the shortest prefix of $\boldsymbol{x}$ that can yield $\boldsymbol{z}$ through $b$-burst-deletion operations, then $|\mathcal{D}_{t,b}(\boldsymbol{x})^{\boldsymbol{z}}|\leq D_{q,b}(n-fb-j,n-f)$.
  Moreover, if $f\geq s$ for some integer $s$, we have $|\mathcal{D}_{t,b}(\boldsymbol{x})^{\boldsymbol{z}}|\leq D_{q,b}(n-sb-j,n-s)$.
\end{lemma}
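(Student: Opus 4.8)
The plan is to prove the stronger structural identity
\[
  \mathcal{D}_{t,b}(\boldsymbol{x})^{\boldsymbol{z}} = \boldsymbol{z} \circ \mathcal{D}_{t-f,b}\big(\boldsymbol{x}_{[fb+j+1,n]}\big),
\]
from which the bound follows at once: since $\boldsymbol{x}_{[fb+j+1,n]}$ has length $n-fb-j$, Theorem \ref{thm:del_size} gives $\big|\mathcal{D}_{t-f,b}(\boldsymbol{x}_{[fb+j+1,n]})\big| \le D_{q,b}(n-fb-j,t-f)$, the second argument of $D_{q,b}$ being the residual radius $t-f$. I would establish this identity by induction on $j=|\boldsymbol{z}|$, peeling symbols off the front by repeatedly invoking the second statement of Claim \ref{cla:del}.

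For the base case $j=1$, write $\boldsymbol{z}=z_1$ and let $i\ge 0$ be the smallest index with $x_{ib+1}=z_1$. Claim \ref{cla:del} gives $\mathcal{D}_{t,b}(\boldsymbol{x})^{z_1}=z_1\circ \mathcal{D}_{t-i,b}(\boldsymbol{x}_{[ib+2,n]})$; since producing the single symbol $z_1$ forces deleting the $i$ leading bursts occupying $[1,ib]$, the shortest prefix of $\boldsymbol{x}$ yielding $z_1$ is $\boldsymbol{x}_{[1,ib+1]}$, whence $f=i$ and $\boldsymbol{x}_{[ib+2,n]}=\boldsymbol{x}_{[fb+j+1,n]}$. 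For the inductive step I split $\boldsymbol{z}=z_1\circ \boldsymbol{z}'$ with $\boldsymbol{z}'=\boldsymbol{z}_{[2,j]}$, take $i$ smallest with $x_{ib+1}=z_1$, and apply Claim \ref{cla:del} followed by the induction hypothesis to $\boldsymbol{x}':=\boldsymbol{x}_{[ib+2,n]}$, radius $t-i$, and prefix $\boldsymbol{z}'$. Translating indices back to $\boldsymbol{x}$, the identity closes provided the total front-burst count satisfies $f=i+f'$, where $f'$ counts the bursts in the shortest prefix of $\boldsymbol{x}'$ yielding $\boldsymbol{z}'$.

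The main obstacle is exactly this compositionality of the shortest prefix, i.e.\ showing that exposing $z_1$ at its earliest admissible position (the smallest $i$ with $x_{ib+1}=z_1$) is globally optimal, so that $f=i+f'$. I would argue by a monotonicity/exchange step: the first surviving symbol of any element of a $b$-burst-deletion ball sits at a position $\equiv 1 \pmod b$, so $z_1$ can only be exposed at some $kb+1$ with $x_{kb+1}=z_1$, consuming $k$ front-bursts and the prefix $\boldsymbol{x}_{[1,kb+1]}$, after which $\boldsymbol{z}'$ must be yielded from $\boldsymbol{x}_{[kb+2,n]}$. For admissible $k_1<k_2$, any prefix of $\boldsymbol{x}_{[k_2b+2,n]}$ yielding $\boldsymbol{z}'$ can be replayed from $\boldsymbol{x}_{[k_1b+2,n]}$ after deleting the intervening chunk $\boldsymbol{x}_{[k_1b+2,k_2b+1]}$ of length $(k_2-k_1)b$ as $k_2-k_1$ bursts; comparing the resulting absolute end-positions shows the smaller index $k_1$ never terminates later. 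Hence the smallest admissible $i$ minimizes the overall prefix length, giving $f=i+f'$ and closing the induction.

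Finally, for the ``moreover'' clause I would combine the first bound with Lemma \ref{lem:del_inequality} (using $d_{q,b}=D_{q,b}$ from Theorem \ref{thm:del_size}), which yields $D_{q,b}(m-b,T-1)\le D_{q,b}(m,T)$. Setting $M=n-sb-j$ and $T=t-s$, one has $n-fb-j=M-(f-s)b$ and $t-f=T-(f-s)$, so iterating this inequality $f-s\ge 0$ times gives $D_{q,b}(n-fb-j,t-f)\le D_{q,b}(n-sb-j,t-s)$, as required.
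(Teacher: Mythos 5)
Your proposal is correct and follows essentially the same route as the paper: both rest on the structural identity $\mathcal{D}_{t,b}(\boldsymbol{x})^{\boldsymbol{z}}=\boldsymbol{z}\circ\mathcal{D}_{t-f,b}(\boldsymbol{x}_{[fb+j+1,n]})$ and then conclude via Theorem \ref{thm:del_size} and the monotonicity of Lemma \ref{lem:del_inequality}. The only difference is that the paper merely asserts this identity as ``similar to the second statement of Claim \ref{cla:del},'' whereas you supply the induction on $|\boldsymbol{z}|$ and the exchange argument justifying the compositionality of the shortest prefix (and you also correctly read the second argument of $D_{q,b}$ as the residual radius $t-f$).
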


\begin{IEEEproof}
    Similar to the second statement of Claim \ref{cla:del}, we observe that $\mathcal{D}_{t,b}(\boldsymbol{x})^{\boldsymbol{z}}= \boldsymbol{z}\circ \mathcal{D}_{t-f,b}(\boldsymbol{x}_{[fb+j+1,n]})$. 
    Then we can compute 
    \begin{align*}
      |\mathcal{D}_{t,b}(\boldsymbol{x})^{\boldsymbol{z}}|
      &= |\mathcal{D}_{t-f,b}(\boldsymbol{x}_{[fb+j+1,n]})|\\
      &\stackrel{(\ast)}{\leq} D_{2,b}(n-fb-1,t-f)\\
      &\stackrel{(\star)}{\leq} D_{2,b}(n-sb-1,t-s),
    \end{align*}
    where $(\ast)$ follows by Theorem \ref{thm:del_size} and $(\star)$ follows by Lemma \ref{lem:del_inequality}.
\end{IEEEproof}

We first show that $N_{2,b}^-(n,t)\geq D_{2,b}(n,t) - D_{2,b}(n-b,t)+ D_{2,b}(n-3b,t-2)$ by finding two sequences such that the intersection size between their $b$-burst-deletion balls of radius $t$ equals $D_{2,b}(n,t) - D_{2,b}(n-b,t)+ D_{2,b}(n-3b,t-2)$.

\begin{lemma}\label{lem:del_lb}
    For any $q\geq 2$ and $n\geq (t+1)b-1$ with $b\geq 2$ and $t\geq 1$, let $\boldsymbol{x}=0^b\circ \boldsymbol{X}_{n-b,q,b}^{1}$ and
    and $\boldsymbol{y}= 0^{b-1}1 \circ \boldsymbol{X}_{n-b,q,b}^{1}$.
    In other words, the sequence $\boldsymbol{x}$ is a $b$-cyclic sequence that starts with $0^b$, and the sequence $\boldsymbol{y}$ is derived from $\boldsymbol{x}$ by flipping its $b$-th entry.
    Then we have 
    \begin{align*}
        |\mathcal{D}_{t,b}(\boldsymbol{x})\cap \mathcal{D}_{t,b}(\boldsymbol{y})|= D_{q,b}(n,t) - D_{q,b}(n-b,t)+ D_{q,b}(n-(q+1)b,t-q).
    \end{align*}
\end{lemma}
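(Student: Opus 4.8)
The plan is to interpolate between the given pair $(\boldsymbol{x},\boldsymbol{y})$ and a simpler ``base'' pair via a one-parameter family, set up a telescoping recursion using the first-symbol decomposition of Claim~\ref{cla:del}, and then resolve the base case separately. For $0\le k\le b-1$ define
\[
  \boldsymbol{x}^{(k)}:=0^{b-k}\circ \boldsymbol{X}_{n-b,q,b}^{1},
  \qquad
  \boldsymbol{y}^{(k)}:=0^{b-k-1}1\circ \boldsymbol{X}_{n-b,q,b}^{1},
\]
both of length $n-k$, so that $(\boldsymbol{x}^{(0)},\boldsymbol{y}^{(0)})=(\boldsymbol{x},\boldsymbol{y})$ and $\boldsymbol{x}^{(k)},\boldsymbol{y}^{(k)}$ agree in every position except position $b-k$. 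Write $g(k):=|\mathcal{D}_{t,b}(\boldsymbol{x}^{(k)})\cap\mathcal{D}_{t,b}(\boldsymbol{y}^{(k)})|$, so the target is $g(0)$. Since $\boldsymbol{x}^{(k)}=\boldsymbol{Y}_{n-k,q,b}^{0,b-k}$ (with $\boldsymbol{x}^{(0)}=\boldsymbol{X}_{n,q,b}^{0}$), Theorem~\ref{thm:del_size} gives $|\mathcal{D}_{t,b}(\boldsymbol{x}^{(k)})|=D_{q,b}(n-k,t)$.

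\emph{Step 1 (the recursion).} For $0\le k\le b-2$ the difference position $b-k$ lies in $[2,b]$, hence never coincides with a decomposition position $1,b+1,2b+1,\dots$. Consequently $\boldsymbol{x}^{(k)}$ and $\boldsymbol{y}^{(k)}$ carry the same symbol at every such position and their suffixes starting at $\alpha b+2$ coincide for every $\alpha\ge 1$. Intersecting the decomposition of the second statement of Claim~\ref{cla:del} branch by branch, the branches $\alpha\in[1,q-1]$ contribute identically from both centres, while only the $\alpha=0$ branch survives as a genuine intersection, namely $0\circ\big(\mathcal{D}_{t,b}(\boldsymbol{x}^{(k+1)})\cap\mathcal{D}_{t,b}(\boldsymbol{y}^{(k+1)})\big)$. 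Using $|\mathcal{D}_{t,b}(\boldsymbol{x}^{(k)})^{0}|=|\mathcal{D}_{t,b}(\boldsymbol{x}^{(k+1)})|=D_{q,b}(n-k-1,t)$ this yields
\[
  g(k)=D_{q,b}(n-k,t)-D_{q,b}(n-k-1,t)+g(k+1).
\]
Telescoping over $k=0,\dots,b-2$ collapses the intermediate terms to
\[
  g(0)=D_{q,b}(n,t)-D_{q,b}(n-b+1,t)+g(b-1).
\]

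\emph{Step 2 (the base case).} At $k=b-1$ we have $\boldsymbol{x}^{(b-1)}=0\circ\boldsymbol{X}_{n-b,q,b}^{1}$ and $\boldsymbol{y}^{(b-1)}=1\circ\boldsymbol{X}_{n-b,q,b}^{1}$, which differ only at position $1$. Decomposing by first symbol, each branch $\alpha\in[1,q-1]$ intersects to the full $\boldsymbol{x}^{(b-1)}$-branch (either because the two decomposition indices coincide, for $\alpha\ge 2$, or by the containment of the $\boldsymbol{x}^{(b-1)}$-branch in the $\boldsymbol{y}^{(b-1)}$-branch, for $\alpha=1$), contributing $\sum_{\alpha=1}^{q-1}D_{q,b}(n-(\alpha+1)b,t-\alpha)$, which by Lemma~\ref{lem:del_rec} applied to $D_{q,b}(n-b+1,t)$ equals $D_{q,b}(n-b+1,t)-D_{q,b}(n-b,t)$. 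The delicate branch is $\alpha=0$: in $\boldsymbol{y}^{(b-1)}=1^{b+1}2^{b}\cdots(q-1)^{b}0^{b}\cdots$ the first $0$ only appears at position $(q-1)b+2$, so the smallest decomposition index carrying $0$ is $q$, giving $\mathcal{D}_{t,b}(\boldsymbol{y}^{(b-1)})^{0}=0\circ\mathcal{D}_{t-q,b}\big(\boldsymbol{y}^{(b-1)}_{[qb+2,\,n-b+1]}\big)$. Since $\boldsymbol{y}^{(b-1)}_{[qb+2,\,n-b+1]}$ is obtained from $\boldsymbol{X}_{n-b,q,b}^{1}=\boldsymbol{y}^{(b-1)}_{[2,\,n-b+1]}$ by deleting $q$ bursts, its radius-$(t-q)$ ball is contained in $\mathcal{D}_{t,b}(\boldsymbol{X}_{n-b,q,b}^{1})$, which is exactly $\mathcal{D}_{t,b}(\boldsymbol{x}^{(b-1)})^{0}$ with the leading $0$ stripped; hence the $\alpha=0$ intersection is all of $0\circ\mathcal{D}_{t-q,b}\big(\boldsymbol{y}^{(b-1)}_{[qb+2,\,n-b+1]}\big)$. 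As this tail is the $b$-cyclic sequence $\boldsymbol{X}_{n-(q+1)b,q,b}^{1}$, the branch contributes $D_{q,b}(n-(q+1)b,t-q)$. Summing the branches gives $g(b-1)=D_{q,b}(n-b+1,t)-D_{q,b}(n-b,t)+D_{q,b}(n-(q+1)b,t-q)$, and substituting into the telescoped identity yields the claimed value.

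The main obstacle is precisely the $\alpha=0$ branch of the base case: one must observe that flipping $x_b$ pushes the first $0$ in $\boldsymbol{y}^{(b-1)}$ forward by a full period to position $(q-1)b+2$, forcing $q$ whole bursts to be spent before any $0$ can head the output, which is the source of both $t-q$ and $n-(q+1)b$; one then verifies the containment $\mathcal{D}_{t-q,b}\big(\boldsymbol{y}^{(b-1)}_{[qb+2,\,\cdot]}\big)\subseteq\mathcal{D}_{t,b}(\boldsymbol{X}_{n-b,q,b}^{1})$ so that this branch contributes its full size rather than a proper intersection. Care is also needed in the degenerate regime $t<q$ (for instance $q=2$, $t=1$), where no sequence reachable from $\boldsymbol{y}^{(b-1)}$ can start with $0$; there the $\alpha=0$ branch is empty and the term $D_{q,b}(n-(q+1)b,t-q)$ is read as $0$ under the stated conventions for $D_{q,b}$.
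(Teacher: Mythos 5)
Your argument is correct and arrives at the right value, but it is organized differently from the paper's proof. The paper partitions $\mathcal{S}=\mathcal{D}_{t,b}(\boldsymbol{x})\cap\mathcal{D}_{t,b}(\boldsymbol{y})$ in one shot into the $(q-1)b+1$ prefix classes $\mathcal{S}^{0^b}$ and $\mathcal{S}^{0^ij}$ for $i\in[0,b-1]$, $j\in[1,q-1]$, identifies each class with Lemma~\ref{lem:prefix} (locating the shortest prefix of each center that produces a given output prefix), and then sums everything using Lemma~\ref{lem:del_rec}. You instead run a telescoping recursion over the auxiliary pairs $(\boldsymbol{x}^{(k)},\boldsymbol{y}^{(k)})$ using only the first-symbol decomposition of Claim~\ref{cla:del}: the $(q-1)b$ ``easy'' classes are never computed individually, since they are absorbed into the differences $D_{q,b}(n-k,t)-D_{q,b}(n-k-1,t)$, and all of the content is concentrated in the base case, exactly where you locate it --- flipping $x_b$ pushes the first $0$ of $\boldsymbol{y}^{(b-1)}$ a full period forward, forcing $q$ bursts and yielding the term $D_{q,b}(n-(q+1)b,t-q)$. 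Unrolled, the two arguments visit the same decomposition, so this is the same combinatorial computation with lighter bookkeeping; what you gain is that Lemma~\ref{lem:prefix} and the explicit class-by-class sizes are not needed. One boundary caveat, which the paper's own write-up shares: when $n=(t+1)b-1$ the base-case sequences have length exactly $bt$, so their deletion balls consist of the empty sequence only, and both the first-symbol decomposition and your application of Lemma~\ref{lem:del_rec} to $D_{q,b}(n-b+1,t)=D_{q,b}(bt,t)$ degenerate (the lemma requires length at least $bt+1$, and indeed its recursion would give $0$ rather than the conventional value $1$). In that single edge case one should verify $g(b-1)=1$ directly, which does agree with your expression since $D_{q,b}(bt,t)=1$ while the other two terms vanish.
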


\begin{IEEEproof}
    Let $\mathcal{S}:= \mathcal{D}_{t,b}(\boldsymbol{x})\cap \mathcal{D}_{t,b}(\boldsymbol{y})$, we can partition $\mathcal{S}$ into $(q-1)b+1$ disjoint subsets:
    \begin{align*}
        \mathcal{S}= \mathcal{S}^{0^b} \sqcup \big(\sqcup_{i=0}^{b-1} \sqcup_{j=1}^{q-1} \mathcal{S}^{0^{i}j} \big).
    \end{align*}
    To determine the size of $\mathcal{S}$, it suffices to calculate the size of these $(q-1)b+1$ disjoint subsets separately.
    \begin{itemize}
        \item \textbf{The size of $\mathcal{S}^{0^b}$:} Observe that $0^{b-1}\circ 1 \circ 1^b \circ \cdots \circ (q-1)^b \circ 0^b$ represents the shortest prefix of $\boldsymbol{y}$ that can produce $0^b$ through $b$-burst-deletion operations.
            Then by Lemma \ref{lem:prefix}, we get 
        \begin{align*}
            \mathcal{D}_{t,b}(\boldsymbol{y})^{0^b}= 0^b \circ \mathcal{D}_{t-q,b}(\boldsymbol{X}_{n-(q+1)b,q,b}^{1}).
        \end{align*}
        Moreover, we have 
        \begin{align*}
            \mathcal{D}_{t,b}(\boldsymbol{x})^{0^b}= 0^b \circ \mathcal{D}_{t,b}(\boldsymbol{X}_{n-b,q,b}^{1}).
        \end{align*}
        Then we can compute
        \begin{align*}
        \mathcal{S}^{0^b}
        &= 0^b \circ \big( \mathcal{D}_{t,b}(\boldsymbol{X}_{n-b,q,b}^{1}) \cap \mathcal{D}_{t-q,b}(\boldsymbol{X}_{n-(q+1)b,q,b}^{1}) \big) \\
        &= 0^b \circ \mathcal{D}_{t-q,b}(\boldsymbol{X}_{n-(q+1)b,q,b}^{1}),
        \end{align*}
        where the last equality follows by the first statement of Claim \ref{cla:del} and the fact that $\boldsymbol{X}_{n-(q+1)b,q,b}^{1} \in \mathcal{D}_{q,b}(\boldsymbol{X}_{n-b,q,b}^{1})$.
        Finally, by Theorem \ref{thm:del_size}, we derive
        \begin{align*}
            |\mathcal{S}^{0^b}|
            &= |\mathcal{D}_{t-q,b}(\boldsymbol{X}_{n-(q+1)b,q,b}^{1})|\\
            &=D_{q,b}(n-(q+1)b,t-q).
        \end{align*}

        \item \textbf{The size of $\mathcal{S}^{j}$ for $j\in [1,q-1]$:} Observe that $0^b\circ 1^b \circ \cdots \circ (j-1)^b \circ j$ and $0^{b-1} \circ 1 \circ 1^b \circ \cdots \circ (j-1)^b \circ j$ are the shortest prefixes of $\boldsymbol{x}$ and $\boldsymbol{y}$, respectively, that can yield $j$ through $b$-burst-deletion operations.
            Then by Lemma \ref{lem:prefix}, we get
        \begin{align*}
        \mathcal{D}_{t,b}(\boldsymbol{x})^{j}
        =\mathcal{D}_{t,b}(\boldsymbol{y})^{j} = j \circ \mathcal{D}_{t-j,b}(\boldsymbol{Y}_{n-jb-1,q,b}^{j,b-1}),
        \end{align*}
        where $\boldsymbol{Y}_{n,q,b}^{j,b-1}$ is defined in Equation (\ref{eq:Y}).
        It follows by Theorem \ref{thm:del_size} that 
        \begin{align*}
            |\mathcal{S}^{j}|
            &= |\mathcal{D}_{t-j,b}(\boldsymbol{Y}_{n-jb,q,b}^{j,b-1})|\\
            &=D_{q,b}(n-jb-1,t-j).
        \end{align*}

        \item \textbf{The size of $\mathcal{S}^{0^i j}$ for $i\in [1,b-1]$ and $j\in [1,q-1]$:} Observe that $0^b \circ 1^b \circ \cdots \circ (j-1)^b \circ j^{i+1}$ represents the shortest prefixes of $\boldsymbol{x}$ that can yield $0^i j$ through $b$-burst-deletion operations.
            Then by Lemma \ref{lem:prefix}, we get
        \begin{align*}
        \mathcal{D}_{t,b}(\boldsymbol{x})^{0^i j}
        =0^i j \circ \mathcal{D}_{t-j,b}(\boldsymbol{Y}_{n-jb-i-1,q,b}^{j,b-i-1}),
        \end{align*}
        where $\boldsymbol{Y}_{n,q,b}^{j,b-i-1}$ is defined in Equation (\ref{eq:Y}).
        By a similar argument, but noting the subtle difference that $\boldsymbol{y}$ begins with $0^{b-1}1$, we obtain 
        \begin{equation*}
            \mathcal{D}_{t,b}(\boldsymbol{y})^{0^i j}= 
            \begin{cases}
                0^{b-1}1 \circ \mathcal{D}_{t,b}(\boldsymbol{X}_{n-b,q,b}^{0,0}), &\mbox{if } i=b-1, j=1\\
                0^i j \circ \mathcal{D}_{t-j,b}(\boldsymbol{Y}_{n-jb-i-1,q,b}^{j,b-i-1}), &\mbox{otherwise}.
            \end{cases}
        \end{equation*}
        Since by definition, we have $\boldsymbol{Y}_{n-2b,q,b}^{1,0}\in \mathcal{D}_{1,b}(\boldsymbol{Y}_{n-b,q,b}^{0,0})$.
        Then we derive
        \begin{align*}
            \mathcal{S}^{0^i j}
            &= 0^i j \circ \mathcal{D}_{t-j,b}(\boldsymbol{Y}_{n-jb-i-1,q,b}^{j,b-i-1}).
        \end{align*}
        It follows by Theorem \ref{thm:del_size} that
        \begin{align*}
            |\mathcal{S}^{0^i j}|
            &= |\mathcal{D}_{t-j,b}(\boldsymbol{Y}_{n-jb-i-1,q,b}^{1,b-i-1})|
            = D_{q,b}(n-i-jb-1,t-j).
        \end{align*}
    \end{itemize}
    As a result, we can compute
    \begin{equation}\label{eq:del_rec}
    \begin{aligned}
      |\mathcal{S}|
      &= |\mathcal{S}^{0^b}|+ \sum_{i=0}^{b-1}\sum_{j=1}^{q-1}|\mathcal{S}^{0^ij}|\\
      &= D_{q,b}(n-(q+1)b,t-q)+ \sum_{i=0}^{b-1} \sum_{j=1}^{q-1}D_{q,b}(n-i-jb-1,t-j) \\
      &\stackrel{(\ast)}{=} D_{q,b}(n-(q+1)b,t-q)+ \sum_{i=0}^{b-1}[D_{q,b}(n-i,t)-D_{q,b}(n-i-1,t)]\\
      &=D_{q,b}(n-(q+1)b,t-q)+ D_{q,b}(n,t) - D_{q,b}(n-b,t),
    \end{aligned}
    \end{equation}
    where $(\ast)$ follows by Lemma \ref{lem:del_rec}. 
    This completes the proof.
\end{IEEEproof}  

\smallskip

\begin{IEEEproof}[Proof of Theorem \ref{thm:del_int}]
    Based on Lemma \ref{lem:del_lb}, it remains to show that $N_{2,b}^-(n,t)\leq D_{2,b}(n,t) - D_{2,b}(n-b,t)+ D_{2,b}(n-3b,t-2)$. Now we prove it by induction on $n+t$.
    
    For the base case where $n+t=2b$, we have $n=2b-1$ and $t=1$.
    Observe that $n-3b<n-b<b$, we have $D_{2,b}(n-3b,t-2)= D_{2,b}(n-b,t)= 0$.
    By Lemma \ref{lem:del_ball_t=1}, we know that $D_{2,b}(n,t)=b$. Therefore, we can compute $D_{2,b}(n,t) - D_{2,b}(n-b,t)+ D_{2,b}(n-3b,t-2)=b$.
    Furthermore, by Lemma \ref{lem:del_int}, we find that $N_{2,b}^-(n,t)=b$, which confirms that the conclusion is valid for the base case.
    
    Now, assuming the conclusion is valid for $n+t<m$, we examine the scenario where $n+t= m \geq 2b+1$.
    For any two distinct sequences $\boldsymbol{x}, \boldsymbol{y}\in \Sigma_2^n$, let $\mathcal{S}:= \mathcal{D}_{t,b}(\boldsymbol{x})\cap \mathcal{D}_{t,b}(\boldsymbol{y})$, we distinguish between the following four cases.
    In the following discussion, for the sake of simplicity in notation, let $\overline{\alpha}:= 1-\alpha$ for $\alpha\in \Sigma_2$.
    \begin{itemize}
        \item \textbf{The case of $x_1\neq y_1$:} Let $f$ denote the smallest index such that $x_{fb+1}= y_1$ if such an index exists; otherwise, set $f=\frac{n}{b}$. 
            Clearly, we have $f \geq 1$. Then by Lemma \ref{lem:prefix}, we have
            \begin{equation*}
              |\mathcal{D}_{t,b}(\boldsymbol{x})^{y_1}|\leq D_{2,b}(n-b-1,t-1).
            \end{equation*}
            Similarly, we can obtain
            \begin{align*}
              |\mathcal{D}_{t,b}(\boldsymbol{y})^{x_1}|
              &\leq D_{2,b}(n-b-1,t-1).
            \end{align*}
        Observe that $\mathcal{S}^{x_1} \subseteq \mathcal{D}_{t,b}(\boldsymbol{y})^{x_1}$ and $\mathcal{S}^{y_1} \subseteq \mathcal{D}_{t,b}(\boldsymbol{x})^{y_1}$, we can compute
        \begin{equation}\label{eq:del_con}
        \begin{aligned}
            |\mathcal{S}|
            &= |\mathcal{S}^{x_1}|+ |\mathcal{S}^{y_1}| \\
            &\leq 2D_{2,b}(n-b-1,t-1) \\
            &\stackrel{(\ast)}{=} D_{2,b}(n,t) - D_{2,b}(n-1,t) + D_{2,b}(n-b-1,t-1) \\
            &\stackrel{(\star)}{\leq} D_{2,b}(n,t) - D_{2,b}(n-b,t) + D_{2,b}(n-b-1,t-1),
        \end{aligned}
        \end{equation}
        where $(\ast)$ follows by Lemma \ref{lem:del_rec} and $(\star)$ follows by Lemma \ref{lem:del_inequality'}.

        \item \textbf{The case of $x_1=y_1$ and we have either $x_{b+1}=x_1$ or $y_{b+1}=y_1$:} Due to symmetry, without loss of generality assume $x_{b+1}=x_1$. In this case, we have $\boldsymbol{x}_{[2,n]}\neq \boldsymbol{y}_{[2,n]}$ and $\mathcal{S}^{x_1}= x_1 \circ \big( \mathcal{D}_{t,b}(\boldsymbol{x}_{[2,n]}) \cap \mathcal{D}_{t,b}(\boldsymbol{y}_{[2,n]}) \big)$.
            By the hypothesis induction, we can obtain 
        \begin{align*}
          |\mathcal{S}^{x_1}|
          &= \big|\mathcal{D}_{t,b}(\boldsymbol{x}_{[2,n]}) \cap \mathcal{D}_{t,b}(\boldsymbol{y}_{[2,n]})\big| \\
           &\leq N_{2,b}^-(n-1,t) \\
           &\leq D_{2,b}(n-1,t) - \binom{n-(t+1)b}{t}.
        \end{align*}
        Moreover, let $f$ denote the smallest index such that $x_{fb+1}= \overline{x_1}$ if such an index exists; otherwise, set $f=\frac{n}{b}$, we have $f \geq 2$. 
        Then by Lemma \ref{lem:prefix}, we get
        \begin{align*}
          |\mathcal{D}_{t,b}(\boldsymbol{x})^{\overline{x_1}}|\leq D_{2,b}(n-2b-1,t-2), 
        \end{align*}
        implying that $|\mathcal{S}^{\overline{x_1}}|\leq D_{2,b}(n-2b-1,t-2)$.
        Then we may compute
        \begin{align*}
          |\mathcal{S}|
          &= |\mathcal{S}^{x_1}|+ |\mathcal{S}^{\overline{x_1}}|\\
          &\leq D_{2,b}(n-1,t) - \binom{n-(t+1)b}{t} + D_{2,b}(n-2b-1,t-2)\\
          &\stackrel{(\ast)}{=} D_{2,b}(n-1,t) - \binom{n-(t+1)b}{t} + \sum_{i=0}^{t-1}\binom{n-bt-1}{i} - \binom{n-bt-1}{t-1}\\
          &\stackrel{(\ast)}{=} D_{2,b}(n-1,t) - \binom{n-(t+1)b}{t} + D_{2,b}(n-b-1,t-1) - \binom{n-bt-1}{t-1}\\
          &\stackrel{(\star)}{\leq} D_{2,b}(n-1,t) - \binom{n-(t+1)b}{t} + D_{2,b}(n-b-1,t-1) - \binom{n-(t+1)b}{t-1}\\
          &\stackrel{\diamond}{=} D_{2,b}(n,t)-\binom{n-(t+1)b+1}{t},
        \end{align*}
        where $(\ast)$ follows by Theorem \ref{thm:del_ball}, $(\star)$ follows by the fact that the function $\binom{n}{t}$ increases as $n$ increases, and $(\diamond)$ follows by Lemma \ref{lem:del_rec}.
        
        
        \item \textbf{The case of $x_1= y_1 \neq x_{b+1}=y_{b+1}$ and $\boldsymbol{x}_{[b+2,n]}\neq \boldsymbol{y}_{[b+2,n]}$:} In this case, by the second statement of Claim \ref{cla:del}, we get
            \begin{align*}
                \mathcal{S}^{x_1}= x_1 \circ \big( \mathcal{D}_{t,b}(\boldsymbol{x}_{[2,n]}) \cap \mathcal{D}_{t,b}(\boldsymbol{y}_{[2,n]}) \big)
            \end{align*}
            and 
            \begin{align*}
                \mathcal{S}^{\overline{x_1}}= \overline{x_1} \circ \big( \mathcal{D}_{t-1,b}(\boldsymbol{x}_{[b+2,n]}) \cap \mathcal{D}_{t-1,b}(\boldsymbol{y}_{[b+2,n]}) \big).
            \end{align*}
            Then by the induction hypothesis, we can compute
            \begin{equation}\label{eq:del_int_rec}
            \begin{aligned}
                |\mathcal{S}|
                &= |\mathcal{S}^{x_1}|+ |\mathcal{S}^{\overline{x_1}}|\\
                &\leq N_{2,b}^-(n-1,t) + N_{2,b}^-(n-b-1,t-1)\\
                &\stackrel{(\ast)}{=} \sum_{i=0}^t \binom{n-bt-1}{i} - \binom{n-(t+1)b}{t} + \sum_{i=0}^{t-1} \binom{n-bt-1}{i} - \binom{n-(t+1)b}{t-1}\\
                &= \binom{n-bt-1}{0}+ \sum_{i=1}^{t} \left[\binom{n-bt-1}{i} + \binom{n-bt-1}{i-1}\right] - \left[\binom{n-(t+1)b}{t} + \binom{n-(t+1)b}{t-1}\right]\\
                &\stackrel{(\star)}{=} \sum_{i=0}^{t} \binom{n-bt}{i} - \binom{n-(t+1)b+1}{t},
            \end{aligned}
            \end{equation}
            where $(\ast)$ follows by Theorem \ref{thm:del_ball} and $(\star)$ holds since $\binom{n-bt}{0}= \binom{n-bt-1}{0}$ and $\binom{n-bt}{i}= \binom{n-bt-1}{i}+\binom{n-bt-1}{i-1}$ for $i\in [1,t]$.
        
         \item \textbf{The case of $x_1= y_1 \neq x_{b+1}=y_{b+1}$ and $\boldsymbol{x}_{[b+2,n]}= \boldsymbol{y}_{[b+2,n]}$:} 
             In this case, the discussion is similar to that of Lemma \ref{lem:del_lb}.
             Let $j$ be the smallest index on which $\boldsymbol{x}$ and $\boldsymbol{y}$ differ, then $j\in [2,b]$.
             We partition $\mathcal{S}$ into the following $j$ disjoint subsets: 
             \begin{align*}
                \mathcal{S}^{\overline{x_1}}, \mathcal{S}^{x_1\overline{x_2}}, \ldots, \mathcal{S}^{x_1\cdots x_{j-2}\overline{x_{j-1}}}, \mathcal{S}^{x_1\cdots x_{j-1}}.
             \end{align*}
             For $i \in [0,j-2]$, let $f_i$ denote the smallest index such that $x_{f_ib+i+1}= \overline{x_{i+1}}$ if such an index exists; otherwise, set $f_i=\frac{n}{b}$.
              Clearly, we have $f_i \geq 1$. 
              Observe that $\boldsymbol{x}_{[1,f_ib+i+1]}$ represents the shortest prefix of $\boldsymbol{x}$ that can produce $x_1\cdots x_{i}\overline{x_{i+1}}$ through $b$-burst-deletion operations.
              Then by Lemma \ref{lem:prefix}, we can compute
             \begin{align*}
               |\mathcal{S}^{x_1\cdots x_{i}\overline{x_{i+1}}}|
               &\leq \mathcal{D}_{t,b}(\boldsymbol{x})^{x_1\cdots x_{i}\overline{x_{i+1}}}\\
               &\leq D_{2,b}(n-b-i-1,t-1).
             \end{align*}
             Now we consider the set $\mathcal{S}^{x_1\cdots x_{j-1}}$ and observe the following:
             \begin{align*}
               \mathcal{S}^{x_1\cdots x_{j-1}}
               &= \mathcal{S}^{x_1\cdots x_{j}} \cup \mathcal{S}^{x_1\cdots x_{j-1} \overline{x_j}}\\
               &\subseteq \mathcal{D}_{t,b}(\boldsymbol{y})^{x_1\cdots x_{j}} \cup \mathcal{D}_{t,b}(\boldsymbol{x})^{x_1\cdots x_{j-1} \overline{x_j}}.
             \end{align*}
             This implies that 
             \begin{align*}
               |\mathcal{S}^{x_1\cdots x_{j-1}}|
                \leq |\mathcal{D}_{t,b}(\boldsymbol{y})^{x_1\cdots x_{j}}|+ |\mathcal{D}_{t,b}(\boldsymbol{x})^{x_1\cdots x_{j-1} \overline{x_j}}|.
             \end{align*}
             Let $f$ denote the smallest index such that $x_{fb+j}= \overline{x_j}$ if such an index exists; otherwise, set $f=\frac{n}{b}$.
             Moreover, let $g$ denote the smallest index such that $y_{gb+j}= x_j$ if such an index exists; otherwise, set $g=\frac{n}{b}$.
             Observe that $x_1\cdots x_{fb+j}$ represents the shortest prefixes of $\boldsymbol{x}$ that can yield $x_1\cdots x_{j-1} \overline{x_j}$ through $b$-burst-deletion operations and $y_1\cdots y_{gb+j}$ represents the shortest prefixes of $\boldsymbol{y}$ that can yield $x_1\cdots x_{j}$ through $b$-burst-deletion operations.
             We distinguish between the following two cases based on whether $x_j=x_{j+b}$.
             \begin{itemize}
               \item If $x_j\neq x_{j+b} = y_{j+b}$, we have $f=1$ and $g\geq 2$.
               Then by Lemma \ref{lem:prefix}, we can compute
                \begin{gather*}
                  |\mathcal{D}_{t,b}(\boldsymbol{y})^{x_1\cdots x_{j}}|\leq D_{2,b}(n-2b-j,t-2),\\
                  |\mathcal{D}_{t,b}(\boldsymbol{x})^{x_1\cdots x_{j-1} \overline{x_j}}|
                  \leq D_{2,b}(n-b-j,t-1).
                \end{gather*}
               \item If $x_j= x_{j+b}=y_{j+b}$, we have $f\geq 2$ and $g=1$.
               Then by Lemma \ref{lem:prefix}, we can compute
               \begin{gather*}
                  |\mathcal{D}_{t,b}(\boldsymbol{y})^{x_1\cdots x_{j}}|
                  \leq  D_{2,b}(n-b-j,t-1),\\
                  |\mathcal{D}_{t,b}(\boldsymbol{x})^{x_1\cdots x_{j-1} \overline{x_j}}|
                  \leq D_{2,b}(n-2b-j,t-2).
                \end{gather*}
             \end{itemize}
             In both cases, we can derive
             \begin{align*}
               |\mathcal{S}^{x_1\cdots x_{j-1}}|
               &\leq |\mathcal{D}_{t,b}(\boldsymbol{y})^{x_1\cdots x_{j}}|+ |\mathcal{D}_{t,b}(\boldsymbol{x})^{x_1\cdots x_{j-1} \overline{x_j}}| \\
               &\leq D_{2,b}(n-b-j,t-1)+ D_{2,b}(n-2b-j,t-2).
             \end{align*}
             Then we can compute
             \begin{align*}
               |\mathcal{S}|
               &= \sum_{i=0}^{j-2} |\mathcal{S}^{x_1\cdots x_{i}\overline{x_{i+1}}}|+ |\mathcal{S}^{x_1\cdots x_{j-1}}| \\
               &\leq \sum_{i=0}^{j-1}D_{2,b} (n-b-i-1,t-1)+ D_{2,b}(n-2b-j,t-2)\\
               &\stackrel{(\ast)}{=} \sum_{i=0}^{j-1} [D_{2,b}(n-i,b)- D_{2,b}(n-i-1,b)]+ D_{2,b}(n-2b-j,t-2)\\
               &= D_{2,b}(n,t)- D_{2,b}(n-j,t)+ D_{2,b}(n-2b-j,t-2)\\
               &\stackrel{(\star)}{\leq} D_{2,b}(n,t)- D_{2,b}(n-b,t)+ D_{2,b}(n-3b,t-2),
             \end{align*}
             where $(\star)$ follows by Lemma \ref{lem:del_rec} and $(\star)$ holds since $j\in [2,b]$ and
             \begin{align*}
               D_{2,b}(n-j,t)- D_{2,b}(n-2b-j,t-2)
               &= \sum_{i=0}^{t}\binom{n-bt-j}{i}- \sum_{i=0}^{t-2}\binom{n-bt-j}{i}\\
               &= \sum_{i=t-1}^{t}\binom{n-bt-j}{i}
             \end{align*}
             is a monotonic non-increasing function with respect to $j$.
    \end{itemize}
    Consequently, the conclusion is also valid for $n+t=m$, thereby completing the proof.
\end{IEEEproof}

By Equation (\ref{eq:del_con}), we have the following conclusion.

\begin{corollary}\label{cor:del_con}
  For any $n\geq bt+1$ with $t\geq 1$ and $b\geq 2$, we have $N_{2,b}^-(n,t)\geq 2D_{2,b}(n-b-1,t-1)$.
\end{corollary}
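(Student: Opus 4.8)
The plan is to prove this lower bound constructively, by exhibiting a pair of sequences whose radius-$t$ $b$-burst-deletion balls intersect in exactly $2D_{2,b}(n-b-1,t-1)$ sequences. This is natural because the first line of Equation~(\ref{eq:del_con}) shows that for \emph{any} pair $\boldsymbol{x},\boldsymbol{y}$ with $x_1\neq y_1$ one has $|\mathcal{S}|=|\mathcal{S}^{x_1}|+|\mathcal{S}^{y_1}|\leq 2D_{2,b}(n-b-1,t-1)$; the corollary amounts to asserting that this cap is attained. The candidates I would use are the two binary $b$-cyclic sequences $\boldsymbol{x}=\boldsymbol{X}_{n,2,b}^{0}$ and $\boldsymbol{y}=\boldsymbol{X}_{n,2,b}^{1}$, for which $x_1=0\neq 1=y_1$. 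Setting $\mathcal{S}:=\mathcal{D}_{t,b}(\boldsymbol{x})\cap\mathcal{D}_{t,b}(\boldsymbol{y})$ and using the second statement of Claim~\ref{cla:del} to split on the leading symbol, I would write $\mathcal{S}=\mathcal{S}^{0}\sqcup\mathcal{S}^{1}$ and prove each part has size $D_{2,b}(n-b-1,t-1)$.

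For $\mathcal{S}^{0}$, Claim~\ref{cla:del} gives $\mathcal{D}_{t,b}(\boldsymbol{x})^{0}=0\circ\mathcal{D}_{t,b}(\boldsymbol{x}_{[2,n]})$ (as $x_1=0$) and $\mathcal{D}_{t,b}(\boldsymbol{y})^{0}=0\circ\mathcal{D}_{t-1,b}(\boldsymbol{y}_{[b+2,n]})$ (as the first $0$ among the indices $1,b+1,\dots$ of $\boldsymbol{y}$ occurs at position $b+1$). The crux is the observation that $\boldsymbol{x}_{[2,n]}$ and $\boldsymbol{y}_{[b+2,n]}$ are both the cyclic word $0^{b-1}1^{b}0^{b}\cdots$, of lengths $n-1$ and $n-b-1$ respectively; hence $\boldsymbol{y}_{[b+2,n]}$ is precisely $\boldsymbol{x}_{[2,n]}$ with its final length-$b$ block removed, i.e. $\boldsymbol{y}_{[b+2,n]}\in\mathcal{D}_{1,b}(\boldsymbol{x}_{[2,n]})$. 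The first statement of Claim~\ref{cla:del} then yields $\mathcal{D}_{t-1,b}(\boldsymbol{y}_{[b+2,n]})\subseteq\mathcal{D}_{t,b}(\boldsymbol{x}_{[2,n]})$, so that $\mathcal{S}^{0}=0\circ\mathcal{D}_{t-1,b}(\boldsymbol{y}_{[b+2,n]})$. Finally, $\boldsymbol{y}_{[b+2,n]}=\boldsymbol{Y}_{n-b-1,2,b}^{0,b-1}$ is one of the maximizers supplied by Theorem~\ref{thm:del_size}, whence $|\mathcal{S}^{0}|=D_{2,b}(n-b-1,t-1)$.

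The size of $\mathcal{S}^{1}$ I would get for free by symmetry: the bit-flip $\phi:\alpha\mapsto 1-\alpha$ is an involution on $\Sigma_2^{\ast}$ that commutes with $b$-burst-deletions and satisfies $\phi(\boldsymbol{x})=\boldsymbol{y}$, so $\phi(\mathcal{S})=\mathcal{S}$ while $\phi$ interchanges sequences beginning with $0$ and with $1$; therefore $|\mathcal{S}^{1}|=|\mathcal{S}^{0}|=D_{2,b}(n-b-1,t-1)$. Summing the two parts gives $|\mathcal{S}|=2D_{2,b}(n-b-1,t-1)$, and since $N_{2,b}^{-}(n,t)\geq|\mathcal{S}|$ the claim follows. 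I expect the only delicate point to be the containment step, namely correctly recognizing $\boldsymbol{y}_{[b+2,n]}$ as the prefix of $\boldsymbol{x}_{[2,n]}$ obtained by a single terminal burst-deletion; the remaining steps are direct invocations of Claim~\ref{cla:del} and Theorem~\ref{thm:del_size}. I would also note that this construction is valid for all $n\geq bt+1$, which is strictly wider than the range $n\geq (t+1)b-1$ of Theorem~\ref{thm:del_int}, so the constructive route—rather than an algebraic manipulation of the closed form for $N_{2,b}^{-}(n,t)$—is exactly what is required.
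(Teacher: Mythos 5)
Your proof is correct, but it follows a genuinely different route from the paper's. The paper proves this corollary in one line by pointing at Equation~(\ref{eq:del_con}): the recursion of Lemma~\ref{lem:del_rec} gives $2D_{2,b}(n-b-1,t-1)=D_{2,b}(n,t)-D_{2,b}(n-1,t)+D_{2,b}(n-b-1,t-1)$, and combining this with the closed form $N_{2,b}^-(n,t)=D_{2,b}(n,t)-\binom{n-(t+1)b+1}{t}$ from Theorem~\ref{thm:del_int} one checks $N_{2,b}^-(n,t)-2D_{2,b}(n-b-1,t-1)=\binom{n-bt-1}{t}-\binom{n-bt-b+1}{t}\geq 0$ since $b\geq 2$; no extremal pair is exhibited. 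You instead argue constructively with the pair $\boldsymbol{X}_{n,2,b}^{0},\boldsymbol{X}_{n,2,b}^{1}$, and every step checks out: the split $\mathcal{S}=\mathcal{S}^0\sqcup\mathcal{S}^1$ via the second statement of Claim~\ref{cla:del}, the observation that $\boldsymbol{y}_{[b+2,n]}=\boldsymbol{x}_{[2,n-b]}\in\mathcal{D}_{1,b}(\boldsymbol{x}_{[2,n]})$ so that the first statement of Claim~\ref{cla:del} collapses the intersection to $\mathcal{D}_{t-1,b}(\boldsymbol{y}_{[b+2,n]})$, the identification $\boldsymbol{y}_{[b+2,n]}=\boldsymbol{Y}_{n-b-1,2,b}^{0,b-1}$ feeding into Theorem~\ref{thm:del_size}, and the complementation symmetry giving $|\mathcal{S}^1|=|\mathcal{S}^0|$. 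Your approach buys three things: it is self-contained (it does not rely on the hard upper-bound direction of Theorem~\ref{thm:del_int}); it shows that the cap $2D_{2,b}(n-b-1,t-1)$ appearing in the $x_1\neq y_1$ case of Equation~(\ref{eq:del_con}) is actually attained; and it covers the full stated range $n\geq bt+1$, whereas the paper's derivation through Theorem~\ref{thm:del_int} literally only covers $n\geq (t+1)b-1$ (the two ranges coincide for $b=2$, but yours is strictly wider for $b\geq 3$). The only point worth flagging is the boundary case $n=bt+1$, where $\mathcal{D}_{t-1,b}(\boldsymbol{y}_{[b+2,n]})$ is centered at a word of length exactly $b(t-1)$ and Theorem~\ref{thm:del_size} does not formally apply; there one falls back on the convention $D_{2,b}(b(t-1),t-1)=1$, and the required bound $N_{2,b}^-(n,t)\geq 2$ is immediate in any case.
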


Moreover, by Equation (\ref{eq:del_int_rec}), the following holds.
\begin{corollary}\label{cor:del_int_rec}
    For any $n\geq bt+1$ with $t\geq 1$ and $b\geq 2$, we have $N_{2,b}^-(n,t) = N_{2,b}^-(n-1,t) + N_{2,b}^-(n-b-1,t-1)$.
\end{corollary}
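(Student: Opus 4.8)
The plan is to obtain this recursion as an immediate bookkeeping consequence of the closed form for $N_{2,b}^-$ already established in Theorem \ref{thm:del_int}. Combining Theorems \ref{thm:del_int} and \ref{thm:del_ball}, for every admissible pair $(n,t)$ we have $N_{2,b}^-(n,t)= \sum_{i=0}^{t}\binom{n-bt}{i}-\binom{n-(t+1)b+1}{t}$. The first step is to instantiate this identity at the two shifted arguments appearing on the right-hand side of the claimed recursion, namely $N_{2,b}^-(n-1,t)= \sum_{i=0}^{t}\binom{n-bt-1}{i}-\binom{n-(t+1)b}{t}$ and $N_{2,b}^-(n-b-1,t-1)= \sum_{i=0}^{t-1}\binom{n-bt-1}{i}-\binom{n-(t+1)b}{t-1}$, where the shifts in the arguments collapse so that both sums run over the same binomials $\binom{n-bt-1}{i}$.

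The second step is to add these two expressions and regroup, which is precisely the computation already carried out inside Equation (\ref{eq:del_int_rec}). Concretely, I would isolate the $i=0$ term via $\binom{n-bt-1}{0}=\binom{n-bt}{0}$, pair $\binom{n-bt-1}{i}$ with $\binom{n-bt-1}{i-1}$ for $i\in[1,t]$ and apply Pascal's rule $\binom{n-bt}{i}=\binom{n-bt-1}{i}+\binom{n-bt-1}{i-1}$ to merge the two sums into $\sum_{i=0}^{t}\binom{n-bt}{i}$, and similarly apply Pascal's rule to the two subtracted terms to obtain $\binom{n-(t+1)b}{t}+\binom{n-(t+1)b}{t-1}=\binom{n-(t+1)b+1}{t}$. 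This yields exactly $\sum_{i=0}^{t}\binom{n-bt}{i}-\binom{n-(t+1)b+1}{t}$, which is once more the closed form for $N_{2,b}^-(n,t)$, establishing the recursion.

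Because all three terms are specializations of a single closed form and the regrouping is purely Pascal's rule—indeed identical to the manipulation displayed in Equation (\ref{eq:del_int_rec})—there is no genuine difficulty here; the statement is a routine corollary. The only point meriting a moment of care is the validity range: one should verify that the shifted arguments $(n-1,t)$ and $(n-b-1,t-1)$ still satisfy the hypotheses under which the closed form was derived, and at the lower boundary $n = bt+1$ of the stated range invoke the conventions $\binom{m}{i}=0$ for $m<i$ together with the boundary values of $D_{2,b}$ so that both sides of the identity agree.
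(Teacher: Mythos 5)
Your proposal is correct and matches the paper's own justification: the paper derives this corollary directly from Equation (\ref{eq:del_int_rec}), whose content is exactly your computation --- substituting the closed forms of $N_{2,b}^-(n-1,t)$ and $N_{2,b}^-(n-b-1,t-1)$ from Theorems \ref{thm:del_int} and \ref{thm:del_ball} and recombining via Pascal's rule to recover the closed form of $N_{2,b}^-(n,t)$. Your remark about checking the validity range of the shifted arguments is a reasonable point of care that the paper itself glosses over, but it does not change the substance of the argument.
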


\subsection{The Reconstruction Algorithm for Burst-Deletion Errors} \label{subsec:del_alg}

We will now design a straightforward algorithm, presented in Algorithm \ref{alg:del}, to recover an arbitrary $\boldsymbol{x} \in \Sigma_2^n$, where $n \geq (t+1)b-1$, given its length $n$ and $N \geq N_{2,b}^-(n, t) + 1$ distinct sequences $\boldsymbol{y}_1, \boldsymbol{y}_2, \ldots, \boldsymbol{y}_N$ in $\mathcal{D}_{t,b}(\boldsymbol{x})$.
Unlike the reconstruction algorithm for bursts of insertions, here we first employ the threshold reconstruction algorithm to recover $ \boldsymbol{x} $, with the exception of at most $ t(b-1) $ positions where the symbols remain undetermined. As a result, we derive a candidate set containing at most $ 2^{t(b-1)} $ sequences, which includes the original sequence $ \boldsymbol{x} $. Since the original sequence is the only one whose error ball encompasses these received sequences, we can correctly recover $ \boldsymbol{x} $ by verifying this property for each candidate in the set. 

The algorithm is operated as follows:
\begin{itemize}
  \item Firstly, we define the set $\mathcal{U} = \{ \boldsymbol{y}_1, \boldsymbol{y}_2, \ldots, \boldsymbol{y}_N \}$, which can be considered as a matrix of size $(n- bt) \times N$, with the vector in the $i$-th column set as $\boldsymbol{y}_i$ for $i \in [1, N]$. 
  \item We then use Lemma \ref{lem:del_con} to recover $x_1$ from the first row of this matrix and identify a submatrix $\mathcal{U}'$ of the matrix $\mathcal{U}$, which has size $(n-bt-1) \times N'$ and consists of distinct columns $\boldsymbol{y}' \in \mathcal{D}_{t',b}(\boldsymbol{x}')$, where $\boldsymbol{x}' = x_{jb+2} \cdots x_n$, $n' = n -bj- 1$, $t' = t - j$, $j\in \{0,1\}$, and $N' \geq N_{2,b}^-(n', t') + 1$.
      It is worth noting that when $j=1$, we can obtain $x_{b+1}=\overline{x_1}$ and the entries $x_2, \ldots, x_b$ are undetermined, which we denote as $?$. 
  \item If $t' = t - j = 0$, then $N_{q,b}^-(n', t')=0$, implying that the matrix $\mathcal{U}'$ consists of only the column $\boldsymbol{x}'$, and recovering $\boldsymbol{x} = x_1\cdots x_{jb+1} \circ \boldsymbol{x}'$ is completed, except for several undetermined entries.
  \item In the case where $t' \geq 1$, recovering $\boldsymbol{x} \in \Sigma_q^n$ reduces to recovering $\boldsymbol{x}' \in \Sigma_q^{n'}$, except for several undetermined entries, with the help of $N' \geq N_{q,b}^-(n', t') + 1$ distinct sequences of length $n'-bt'$ in $\mathcal{D}_{t',b}(\boldsymbol{x}')$. 
  \item Since $n'< n$ and $t' \leq t$, we will eventually reach a stage where $n = 1$ or $t = 0$. At this stage, there are at most $t(b-1)$ undetermined entries.
       Then we can derive a candidate set of size at most $2^{t(b-1)}$ containing $\boldsymbol{x}$ by assigning $0$ and $1$ to those undetermined entries.
  \item Finally, we check all possible sequences in this candidate set and output the one whose $b$-burst-deletion ball of radius $t$ can contain the set $\{\boldsymbol{y}_1,\boldsymbol{y}_2,\ldots,\boldsymbol{y}_N\}$.
\end{itemize}

\begin{algorithm}
    \caption{Sequence Reconstruction Algorithm under Burst-Deletion Channel}\label{alg:del}
    \KwIn{$\mathcal{U}= \{\boldsymbol{y}_1,\boldsymbol{y}_2,\ldots,\boldsymbol{y}_N\} \subseteq \mathcal{D}_{t,b}(\boldsymbol{x})$ for some $\boldsymbol{x}\in \Sigma_2^n$ with $|\mathcal{U}|\geq N_{2,b}^-(n,t)+1$}
    \KwOut{$\boldsymbol{x}= x_1\cdots x_n$}

    \textbf{Initialization:} Set $i=0$ and $\mathcal{T}=\emptyset$\\
    \While{$i<n$ or $t>0$}
        {$i \leftarrow i+1$\\
         Let $\beta \in \Sigma_2$ be such that $|\mathcal{U}^{\beta}|> |\mathcal{U}^{\overline{\beta}}|$\\
         $x_i \leftarrow \beta$\\
         \If{$|\mathcal{U}^{x_i}|> N_{2,b}^-(n-1,t)$}
            {$\mathcal{U}\leftarrow $ the set of sequences that can be obtained from $\mathcal{U}^{x_i}$ by deleting the first entry}
         \If{$|\mathcal{U}^{x_i}|\leq N_{2,b}^-(n-1,t)$}
            {$\mathcal{T}\leftarrow \mathcal{T}\cup [i+1,i+b-1]$\\
             $x_{i+j}\leftarrow ?$ for $j\in [1,b-1]$\\
             $x_{i+b}\leftarrow \overline{x_i}$\\
             $i\leftarrow i+b$\\
             $t\leftarrow t-1$\\
             $\mathcal{U}\leftarrow$ the set of sequences that can be obtained from $\mathcal{U}^{\overline{x_i}}$ by deleting the first entry (by Corollary \ref{cor:del_int_rec}, we have $|\mathcal{U}^{\overline{x_i}}|\geq N_{2,b}^-(n-b-1,t-1)+1$)\\
             \If{$t=0$}
                {$\{x_{i+1} \cdots x_n\}\leftarrow \mathcal{U}$}
             }
        }
    \For{$\boldsymbol{v}\in \Sigma_2^n$ be such that $v_j=x_j$ for $j\notin \mathcal{T}$}
        {\If{$\{\boldsymbol{y}_1,\boldsymbol{y}_2,\ldots,\boldsymbol{y}_N\} \subseteq \mathcal{D}_{t,b}(\boldsymbol{v})$}
            {$\boldsymbol{x}\leftarrow \boldsymbol{v}$}
        }
\end{algorithm}

\begin{lemma}\label{lem:del_con}
    Assume $n\geq b(t+1)-1$ with $t\geq 1$ and $b\geq 2$, for any $\boldsymbol{x}\in \Sigma_2^n$ and $\mathcal{U}\subseteq \mathcal{D}_{t,b}(\boldsymbol{x})$ with $|\mathcal{U}|\geq N_{2,b}^-(n,t) + 1$, we have
    \begin{equation*}
        |\mathcal{U}^{x_1}| > |\mathcal{U}^{\overline{x_1}}|.
    \end{equation*}
    Moreover, if $|\mathcal{U}^{x_1}|\leq N_{2,b}^-(n-1,t)$, then $x_{b+1}= \overline{x_1}$.
\end{lemma}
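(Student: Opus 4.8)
The plan is to exploit the binary partition $\mathcal{U} = \mathcal{U}^{x_1} \sqcup \mathcal{U}^{\overline{x_1}}$ and to bound the ``minority'' block $|\mathcal{U}^{\overline{x_1}}|$ from above; everything else is then a counting argument against $|\mathcal{U}| \geq N_{2,b}^-(n,t)+1$. By the second statement of Claim~\ref{cla:del} we have $\mathcal{U}^{\overline{x_1}} \subseteq \mathcal{D}_{t,b}(\boldsymbol{x})^{\overline{x_1}}$. Let $f$ be the smallest index with $x_{fb+1} = \overline{x_1}$; since $x_1 \neq \overline{x_1}$ we have $f \geq 1$, and $\boldsymbol{x}_{[1,fb+1]}$ is the shortest prefix of $\boldsymbol{x}$ that produces $\overline{x_1}$. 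Hence Lemma~\ref{lem:prefix} (with $s=1$) gives $|\mathcal{U}^{\overline{x_1}}| \leq D_{2,b}(n-b-1,t-1)$.

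For the first inequality I would combine this with Corollary~\ref{cor:del_con}, which asserts $N_{2,b}^-(n,t) \geq 2 D_{2,b}(n-b-1,t-1)$, to obtain
\begin{equation*}
  2|\mathcal{U}^{\overline{x_1}}| \leq 2 D_{2,b}(n-b-1,t-1) \leq N_{2,b}^-(n,t) < |\mathcal{U}| = |\mathcal{U}^{x_1}| + |\mathcal{U}^{\overline{x_1}}|,
\end{equation*}
from which $|\mathcal{U}^{\overline{x_1}}| < |\mathcal{U}^{x_1}|$ follows at once.

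For the second claim I would argue by contraposition: assume $x_{b+1} = x_1$ (i.e.\ $x_{b+1} \neq \overline{x_1}$) and show $|\mathcal{U}^{x_1}| > N_{2,b}^-(n-1,t)$. Now the first index $f$ with $x_{fb+1} = \overline{x_1}$ satisfies $f \geq 2$, so Lemma~\ref{lem:prefix} (with $s=2$) sharpens the bound to $|\mathcal{U}^{\overline{x_1}}| \leq D_{2,b}(n-2b-1,t-2)$. Writing $|\mathcal{U}^{x_1}| = |\mathcal{U}| - |\mathcal{U}^{\overline{x_1}}| \geq N_{2,b}^-(n,t)+1 - D_{2,b}(n-2b-1,t-2)$ and invoking the recursion of Corollary~\ref{cor:del_int_rec}, namely $N_{2,b}^-(n,t) - N_{2,b}^-(n-1,t) = N_{2,b}^-(n-b-1,t-1)$, it then suffices to establish
\begin{equation*}
  N_{2,b}^-(n-b-1,t-1) \geq D_{2,b}(n-2b-1,t-2).
\end{equation*}
I would prove this by substituting the closed forms from Theorems~\ref{thm:del_int} and~\ref{thm:del_ball}: both sides expand into sums of binomial coefficients, and after cancellation the inequality collapses to $\binom{n-tb-1}{t-1} \geq \binom{n-(t+1)b}{t-1}$, which holds since $b \geq 1$ and $\binom{\cdot}{t-1}$ is nondecreasing in its upper argument. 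Granting it, $|\mathcal{U}^{x_1}| \geq N_{2,b}^-(n-1,t)+1 > N_{2,b}^-(n-1,t)$, contradicting the hypothesis $|\mathcal{U}^{x_1}| \leq N_{2,b}^-(n-1,t)$.

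The partition and the two prefix bounds are routine, being immediate from Claim~\ref{cla:del} and Lemma~\ref{lem:prefix}. The main obstacle is the comparison $N_{2,b}^-(n-b-1,t-1) \geq D_{2,b}(n-2b-1,t-2)$: it does not follow from the monotonicity/doubling results alone (applying Corollary~\ref{cor:del_con} to the smaller parameters would only yield $2D_{2,b}(n-2b-2,t-2) \leq N_{2,b}^-(n-b-1,t-1)$, still leaving a binomial comparison to settle), so I expect to rely on the explicit expression for $N_{2,b}^-$ from Theorem~\ref{thm:del_int}. I would also check the degenerate ranges, in particular $t=1$, where $t-2<0$ forces $\mathcal{D}_{t-f,b}(\cdot)=\emptyset$ and hence $|\mathcal{U}^{\overline{x_1}}|=0$, verifying that the conventions $D_{q,b}(\cdot,s)=0$ for $s<0$ keep each displayed inequality valid.
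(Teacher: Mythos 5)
Your proposal is correct and follows essentially the same route as the paper: the same partition $\mathcal{U}=\mathcal{U}^{x_1}\sqcup\mathcal{U}^{\overline{x_1}}$, the same two applications of Lemma~\ref{lem:prefix} (with $f\geq 1$ and $f\geq 2$), Corollary~\ref{cor:del_con} for the majority claim, and Corollary~\ref{cor:del_int_rec} to reduce the second claim to $N_{2,b}^-(n-b-1,t-1)\geq D_{2,b}(n-2b-1,t-2)$. The only divergence is how that last inequality is settled: you expand the closed forms of Theorems~\ref{thm:del_int} and~\ref{thm:del_ball} into binomial sums and compare $\binom{n-tb-1}{t-1}$ with $\binom{n-(t+1)b}{t-1}$, whereas the paper chains Corollary~\ref{cor:del_con} at the shifted parameters with the recursion of Lemma~\ref{lem:del_rec} and the monotonicity of Lemma~\ref{lem:del_inequality} to obtain $N_{2,b}^-(n-b-1,t-1)\geq 2D_{2,b}(n-2b-2,t-2)\geq D_{2,b}(n-2b-1,t-2)$; both verifications are valid.
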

\begin{IEEEproof}
    Let $f$ denote the smallest index such that $x_{fb+1}= \overline{x_1}$ if such an index exists; otherwise, set $f=\frac{n}{b}$.
    Clearly, we have $f\geq 1$.
    Then by Lemma \ref{lem:prefix}, we can compute
    \begin{align*}
        |\mathcal{U}^{\overline{x_1}}|
        &\leq |\mathcal{D}_{t,b}(\boldsymbol{x})^{\overline{x_1}}|\leq D_{2,b}(n-b-1,t-1).
    \end{align*}
    Since $\mathcal{U}= \mathcal{U}^{x_1} \sqcup \mathcal{U}^{\overline{x_1}}$, by Corollary \ref{cor:del_con}, we have
    \begin{align*}
        |\mathcal{U}^{x_1}|= |\mathcal{U}|- |\mathcal{U}^{\overline{x_1}}|> D_{2,b}(n-b-1,t-1)\geq |\mathcal{U}^{\overline{x_1}}|.
    \end{align*}
    
    If $x_{b+1}=x_1$, we have $f\geq 2$.
    Again by Lemma \ref{lem:prefix}, we can compute
    \begin{align*}
      |\mathcal{U}^{\overline{x_1}}|
        &\leq |\mathcal{D}_{t,b}(\boldsymbol{x})^{\overline{x_1}}|\leq D_{2,b}(n-2b-1,t-2).
    \end{align*}
    Then we have
    \begin{align*}
      N_{2,b}^-(n-b-1,t-1)
      &\stackrel{(\ast)}{\geq} 2D_{2,b}(n-2b-2,t-2)\\
      &\stackrel{(\star)}{\geq} D_{2,b}(n-2b-1,t-2)- D_{2,b}(n-3b-2,t-3)+ D_{2,b}(n-2b-2,t-2)\\
      &\stackrel{(\diamond)}{\geq} D_{2,b}(n-2b-1,t-2)\\
      &\geq |\mathcal{U}^{\overline{x_1}}|,
    \end{align*}
    where $(\ast)$ follows by Corollary \ref{cor:del_con}, $(\star)$ follows by Lemma \ref{lem:del_rec}, and $(\diamond)$ follows by Lemma \ref{lem:del_inequality}.
    As a result, when $|\mathcal{U}^{x_1}|\leq N_{2,b}^-(n-1,t)$, by Corollary \ref{cor:del_int_rec}, we have $|\mathcal{U}^{\overline{x_1}}| > N_{2,b}^-(n-b-1,t-1)$, implying that $x_{b+1}= \overline{x_1}$.
\end{IEEEproof}

\begin{theorem}
  Let $n\geq b(t+1)-1$ with $t\geq 1$ and $b\geq 2$. Given $\boldsymbol{x}\in \Sigma_2^n$ and $\mathcal{U}=\{\boldsymbol{y}_1,\boldsymbol{y}_2,\ldots,\boldsymbol{y}_N\}\subseteq \mathcal{D}_{t,b}(\boldsymbol{x})$ with $|\mathcal{U}|=N\geq N_{2,b}^-(n,t)+1$, Algorithm \ref{alg:del} outputs the correct $\boldsymbol{x}$ and runs in $O_{t,b}(nN)$ time.
\end{theorem}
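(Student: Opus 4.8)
The plan is to split the analysis of Algorithm \ref{alg:del} into its two phases: the main loop, which is a threshold reconstruction that pins down every entry of $\boldsymbol{x}$ except those collected in $\mathcal{T}$, and the final loop, which resolves the undetermined entries by brute force. The backbone of the correctness argument is a loop invariant: at the start of each iteration the working set $\mathcal{U}$ is a subset of $\mathcal{D}_{t,b}(\boldsymbol{x}')$ for the current suffix $\boldsymbol{x}'$ of $\boldsymbol{x}$, of current length $n'$, and $|\mathcal{U}|\geq N_{2,b}^-(n',t)+1$ for the current radius $t$. This holds initially since $\mathcal{U}=\{\boldsymbol{y}_1,\ldots,\boldsymbol{y}_N\}\subseteq \mathcal{D}_{t,b}(\boldsymbol{x})$ with $N\geq N_{2,b}^-(n,t)+1$. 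Lemma \ref{lem:del_con} then guarantees that the majority leading symbol $\beta$ equals $x'_1$, so the assignment $x_i\leftarrow\beta$ is always correct.

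The two branches correspond to the decomposition $\mathcal{D}_{t,b}(\boldsymbol{x}')^{x'_1}=x'_1\circ \mathcal{D}_{t,b}(\boldsymbol{x}'_{[2,n']})$ furnished by the second statement of Claim \ref{cla:del}. If $|\mathcal{U}^{x'_1}|>N_{2,b}^-(n'-1,t)$, deleting the leading symbol produces a subset of $\mathcal{D}_{t,b}(\boldsymbol{x}'_{[2,n']})$ of size at least $N_{2,b}^-(n'-1,t)+1$, so the invariant passes to $(n'-1,t)$. Otherwise Lemma \ref{lem:del_con} forces $x'_{b+1}=\overline{x'_1}$, i.e.\ a burst was deleted at the front, so recording $x_{i+b}=\overline{x_i}$ and marking the $b-1$ interior positions as undetermined is correct; here $\mathcal{D}_{t,b}(\boldsymbol{x}')^{\overline{x'_1}}=\overline{x'_1}\circ \mathcal{D}_{t-1,b}(\boldsymbol{x}'_{[b+2,n']})$, and Corollary \ref{cor:del_int_rec} gives $|\mathcal{U}^{\overline{x'_1}}|=|\mathcal{U}|-|\mathcal{U}^{x'_1}|\geq N_{2,b}^-(n'-b-1,t-1)+1$, so the invariant passes to $(n'-b-1,t-1)$. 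Since $t$ drops only in the burst branch and $t\geq1$ implies $n'\geq bt+1>0$, the loop keeps consuming the suffix until $t$ reaches $0$, at which point $\mathcal{U}$ is the single remaining sequence and supplies the tail $x_{i+1}\cdots x_n$. The burst branch fires at most $t$ times, so $|\mathcal{T}|\leq t(b-1)$ and every other entry is determined.

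For the final loop I would argue that the true $\boldsymbol{x}$ agrees with the recovered partial word off $\mathcal{T}$, hence is among the at most $2^{t(b-1)}$ candidates $\boldsymbol{v}$ enumerated. Uniqueness is immediate from the definition of $N_{2,b}^-$: if some $\boldsymbol{v}\neq\boldsymbol{x}$ also satisfied $\{\boldsymbol{y}_1,\ldots,\boldsymbol{y}_N\}\subseteq \mathcal{D}_{t,b}(\boldsymbol{v})$, then $\{\boldsymbol{y}_1,\ldots,\boldsymbol{y}_N\}\subseteq \mathcal{D}_{t,b}(\boldsymbol{x})\cap \mathcal{D}_{t,b}(\boldsymbol{v})$, forcing $N\leq N_{2,b}^-(n,t)$ and contradicting $N\geq N_{2,b}^-(n,t)+1$. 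Thus $\boldsymbol{x}$ is the unique candidate passing the containment test, and the algorithm returns it.

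For the running time, each of the at most $n$ loop iterations only inspects the leading entries of the (at most $N$) retained sequences and advances a column pointer, costing $O(N)$ per iteration and $O(nN)$ in total for the threshold phase, with the thresholds $N_{2,b}^-(\cdot,\cdot)$ precomputed. The final loop enumerates a number $2^{t(b-1)}$ of candidates that is constant in $(t,b)$ and, for each, tests membership $\boldsymbol{y}_i\in \mathcal{D}_{t,b}(\boldsymbol{v})$ over all $i$ in $O_{t,b}(n)$ time per sequence, giving $O_{t,b}(nN)$ overall. The main obstacle is not any single computation but the disciplined bookkeeping of the loop invariant across the two branches: one must verify that the burst branch is entered \emph{exactly} when a front burst has occurred (the content of Lemma \ref{lem:del_con}) and that the sample-count bound survives the split (Corollary \ref{cor:del_int_rec}). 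Once these two points are secured, both the correctness and the $O_{t,b}(nN)$ bound follow from the routine accounting above.
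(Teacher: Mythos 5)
Your proposal is correct and follows essentially the same route as the paper: threshold recovery of all entries outside $\mathcal{T}$ justified by Lemma \ref{lem:del_con} and Corollary \ref{cor:del_int_rec}, a candidate set of size at most $2^{t(b-1)}$, uniqueness of the surviving candidate from the definition of $N_{2,b}^-(n,t)$ (the paper cites Theorem \ref{thm:del_int} for the same purpose), and the same $O_{t,b}(nN)$ accounting. The one step you assert without argument is that membership $\boldsymbol{y}_i\in\mathcal{D}_{t,b}(\boldsymbol{v})$ can be tested in $O_{t,b}(n)$ time; the paper supplies the greedy reduction (match the longest common prefix, locate where a burst must have been deleted, recurse at most $t$ times), which is routine but should be recorded to make the complexity claim complete.
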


\begin{IEEEproof}
    Clearly, the correctness of the loop from Step 2 to Step 20 is supported by Lemma \ref{lem:del_con}. After this loop, we can determine $x_i$ for $i\notin \mathcal{T}$. 
    We then define the candidate set of sequences $\mathcal{V}= \{\boldsymbol{v}\in \Sigma_2^n: v_{i}= x_i \text{ for } i\notin \mathcal{T}\}$, which contains $\boldsymbol{x}$.
    According to Theorem \ref{thm:del_int}, there is exactly one sequence whose $b$-burst-deletion ball of radius $t$ can contain the set $\{\boldsymbol{y}_1,\boldsymbol{y}_2,\ldots,\boldsymbol{y}_N\}$.
    As a result, the algorithm outputs the correct $\boldsymbol{x}$.
    
    We now analyze the complexity of this algorithm.
    Firstly, in each iteration of the loop from Step 2 to Step 20, specifically in Step 4, 
    we need to calculate $|\mathcal{U}^{0}|$ and $|\mathcal{U}^{1}|$ from a row vector of length at most $N$. This process runs in $O_{q,t,b}(N)$ time. 
    Since the loop iterates at most $n$ times, the total complexity of this loop is $O_{q,t,b}(nN)$.
    Observe that $|\mathcal{T}|\leq t(b-1)$, which implies that $|\mathcal{V}|\leq 2^{t(b-1)}$.
    To complete the proof, it suffices to show that checking whether $\boldsymbol{y}_i$, for $i\in [1,N]$, belongs to $\mathcal{D}_{t,b}(\boldsymbol{v})$, for $\boldsymbol{v}\in \mathcal{V}$, has a complexity of $O_{t,b}(n)$.
    
    We compare $\boldsymbol{y}_i= y_{i,1} y_{i,2} \cdots y_{i,n-bt}$ and $\boldsymbol{v}$ symbol by symbol.
    Let $j$ be the largest index such that $\boldsymbol{y}_i$ and $\boldsymbol{v}$ share a common prefix of length $j$.
    Thus, we have $y_{i,1}\cdots y_{i,j}= v_{1}\cdots v_j$ and $y_{i,j+1}\neq v_{j+1}$.
    Let $f$ be the smallest index such that $y_{j+fb+1}= v_{j+1}$ if it exists, otherwise we set $f=\frac{n}{b}$.
    Observe that $f\geq 1$ and $\boldsymbol{y}_{i}\in \mathcal{D}_{t,b}(\boldsymbol{v})$ if and only if $\boldsymbol{y}_{i,[j+fb+2,n-bt]}\in \mathcal{D}_{t-f,b}(\boldsymbol{v}_{[j+2,n]})$.
    We then consider $\boldsymbol{y}_{i,[j+fb+2,n-bt]}$ and $\boldsymbol{v}_{[j+2,n]}$.
    After at most $t$ such operations, we can determine whether $\boldsymbol{y}_{i}\in \mathcal{D}_{t,b}(\boldsymbol{v})$.
    Clearly, this process runs in $O_{t,b}(n)$ times, thereby completing the proof.
\end{IEEEproof}

\section{Conclusion}
In this paper, we study the sequence reconstruction problem under channels that experience multiple bursts of insertions and multiple bursts of deletions, respectively.
We provide a complete solution for the insertion case and a partial solution for the deletion case. More specifically, we completely address the deletion case for the binary alphabet while discussing several aspects for the non-binary alphabet.
The following questions are intriguing and will be explored in future research:
\begin{itemize}
  \item Determining the exact value of $N_{q,b}^-(n,t)$ for $q> 2$, $t\geq 1$, and $b\geq 2$. We conjecture that the lower bound derived in Lemma \ref{lem:del_lb} is tight. In other words, we hypothesize that
      \begin{align*}
        N_{q,b}^-(n,t)= D_{q,b}(n,t) - D_{q,b}(n-b,t)+ D_{q,b}(n-(q+1)b,t-q).
      \end{align*}
  \item Investigating the model where each sequence is transmitted from some codebook $\mathcal{C}\subsetneq \Sigma_q^n$ instead of the full space $\Sigma_q^n$.
\end{itemize}

\end{document}